%2multibyte Version: 5.50.0.2960 CodePage: 65001
\documentclass[11pt]{article}%
\usepackage{amssymb}
\usepackage{amsfonts}
\usepackage{sw20elba}
\usepackage{hyperref}
\usepackage{xcolor}
\usepackage{float}
\usepackage[longnamesfirst]{natbib}
\usepackage{amsmath}
\usepackage{comment}
\usepackage{graphicx}%
\setcounter{MaxMatrixCols}{30}
%TCIDATA{OutputFilter=latex2.dll}
%TCIDATA{Version=5.50.0.2960}
%TCIDATA{Codepage=65001}
%TCIDATA{CSTFile=LaTeX article (bright).cst}
%TCIDATA{Created=Tuesday, May 30, 2023 22:41:30}
%TCIDATA{LastRevised=Monday, May 13, 2024 13:50:20}
%TCIDATA{<META NAME="GraphicsSave" CONTENT="32">}
%TCIDATA{<META NAME="SaveForMode" CONTENT="1">}
%TCIDATA{BibliographyScheme=Manual}
%TCIDATA{<META NAME="DocumentShell" CONTENT="Articles\SW\Elbert Walker's Article">}
%TCIDATA{Language=American English}
%BeginMSIPreambleData
\providecommand{\U}[1]{\protect\rule{.1in}{.1in}}
%EndMSIPreambleData
\newtheorem{theorem}{Theorem}

\newtheorem{proposition}[theorem]{Proposition}
\newtheorem{remark}{Remark}

\newenvironment{proof}[1][Proof]{\noindent \textbf{#1.} }{\  \rule{0.5em}{0.5em}}
\addtolength{\textwidth}{2.0cm}
\addtolength{\oddsidemargin}{-1cm}
\addtolength{\textheight}{1.0cm}
\addtolength{\topmargin}{-0.2cm}

\begin{document}
	
	\title{Random Utility Models with Skewed Random Components: \\the Smallest versus Largest Extreme Value Distribution}
	\author{Richard T. Carson\thanks{Department of Economics, UC\ San\ Diego. Email:
			rcarson@ucsd.edu}, Derrick H. Sun\thanks{Department of Mathematics and College
			of Computing, Data Science, and Society, UC Berkeley. Email:
			dhsun@berkeley.edu}, and Yixiao Sun\thanks{Department of Economics,
			UC\ San\ Diego. Email: yisun@ucsd.edu}}
	\date{\today}
	\maketitle
	
	\begin{abstract}
		At the core of most random utility models (RUMs) involving multinomial choice
		behavior is an individual agent with a random utility component following a
		\emph{largest} extreme value Type I (LEVI) distribution. What if, instead, the
		random component follows its mirror image --- the\emph{ smallest} extreme
		value Type I (SEVI) distribution? Differences between these specifications,
		closely tied to the random component's skewness, can be quite profound. For
		the same preference parameters, the two RUMs, equivalent with only two choice
		alternatives, diverge progressively as the number of alternatives increases,
		resulting in substantially different estimates and predictions for key
		measures, such as elasticities and market shares.
		
		The LEVI model imposes the well-known independence-of-irrelevant-alternatives
		property, while SEVI does not. Instead, the SEVI choice probability for a
		particular option involves enumerating all subsets that contain this option.
		The SEVI model, though more complex to estimate, is shown to have
		computationally tractable closed-form choice probabilities. Much of the paper
		delves into explicating the properties of the SEVI model and exploring
		implications of the random component's skewness, including offering new
		insights into multinomial probit models. SEVI-based counterparts exist for
		most LEVI-based generalizations of the conditional logit model such as mixed
		logit, and the LEVI versus SEVI issue is largely orthogonal to the issues
		these models are intended to address.
		
		Conceptually, the difference between the LEVI and SEVI models centers on
		whether information, known only to the agent, is more likely to increase or
		decrease the systematic utility parameterized using observed attributes. LEVI
		does the former; SEVI the latter. An immediate implication is that if choice
		is characterized by SEVI random components, then the observed choice is more
		likely to correspond to the systematic-utility-maximizing choice than if
		characterized by LEVI. Examining standard empirical examples from different
		applied areas, we find that the SEVI model outperforms the LEVI model,
		suggesting the relevance of its inclusion in applied researchers' toolkits.
		
		\bigskip
		
		\noindent\textbf{Keywords}:\textit{ }Conditional Logit, Gumbel Distribution,
		Independence of Irrelevant Alternatives, Multinomial Logit, Reverse Gumbel
		Distribution, Largest Extreme Value Type I Distribution, Smallest Extreme
		Value Type I Distribution.

	\bigskip
	
	\noindent\textbf{JEL codes}: C10, C18, C25
	\end{abstract}
		
	\thispagestyle{empty}\setcounter{page}{0}
	
	\bigskip\bigskip\bigskip\bigskip\bigskip
	
	\pagebreak

\section{Introduction}

Random utility models (RUM) are a cornerstone of applied microeconomics and
other fields such as marketing, health policy, and transportation research.
Since the seminal work of
%TCIMACRO{\TeXButton{\citet{McFadden1973}}{\citet{McFadden1973}}}%
%BeginExpansion
\citet{McFadden1973}%
%EndExpansion
, the vast majority of RUMs have been formulated as conditional or multinomial
logit models, or their generalizations (%
%TCIMACRO{\TeXButton{Hensher_Rose_Greene_2015}{\citet{Hensher_Rose_Greene_2015}%
%} }%
%BeginExpansion
\citet{Hensher_Rose_Greene_2015}
%EndExpansion
and
%TCIMACRO{\TeXButton{book_train_2009}{\citet{book_train_2009}}}%
%BeginExpansion
\citet{book_train_2009}%
%EndExpansion
).\footnote{Conditional logit typically refers to choices between alternatives
with different observable attributes. In contrast, multinomial logit focuses
on choices where the decision-makers themselves differ in characteristics.
These two models can be combined by interacting alternative attributes with
agent characteristics. Here, we primarily use the conditional logit
representation for simplicity.} These models posit that the total utility of
an alternative comprises two components: a systematic component parameterized
by observable attributes, and an idiosyncratic component known only to the
agent (%
%TCIMACRO{\TeXButton{\citet{Manski1977}}{\citet{Manski1977}}}%
%BeginExpansion
\citet{Manski1977}%
%EndExpansion
). The latter is often referred to as the random component, since it is
unknown to the econometrician, who generally assumes it follows a specific
distribution. The agent chooses the alternative with the highest total
utility, making choice behavior appear random from the econometrician's
perspective, even though it may be regarded as deterministic from the agent's viewpoint.

The conditional logit model and its standard generalizations assume that the
random component follows the Largest Extreme Value Type I (LEVI) distribution,
also known as the Gumbel or double exponential distribution (%
%TCIMACRO{\TeXButton{\citet{Gumbel1958}}{\citet{Gumbel1958}}}%
%BeginExpansion
\citet{Gumbel1958}%
%EndExpansion
). This distribution is continuous, has a single mode, is asymmetric and
skewed with a long \emph{right} tail. Mirroring the LEVI distribution is the
Smallest Extreme Value Type I (SEVI) distribution, also known as the Reverse
Gumbel. It is also asymmetric but with a long \emph{left} tail. This paper
presents a systematic study of RUMs based on the SEVI distribution,
highlighting their differences from conventional logit models and practical implications.

The natural question at this point for a choice modeler is: does the choice
between LEVI and SEVI random components matter empirically? Two important bits
of information suggest it should not. First, the difference between two
(standard) LEVI random variables or two (standard) SEVI random variables is
the same (standard) logistic distribution, and thus, RUMs based on LEVI and
SEVI random components are equivalent when there are only two alternatives.
Second, there is a long standing belief (%
%TCIMACRO{\TeXButton{\citet{Hausman_Wise1977}}{\citet{Hausman_Wise1977}}}%
%BeginExpansion
\citet{Hausman_Wise1977}%
%EndExpansion
;
%TCIMACRO{\TeXButton{\citet{Horowitz1982}}{\citet{Horowitz1982}}}%
%BeginExpansion
\citet{Horowitz1982}%
%EndExpansion
) that\ a multinomial probit (MNP) model produces essentially the same results
as the conditional/multinomial logit model.\footnote{As succinctly summarized
in the popular graduate econometric text by
%TCIMACRO{\TeXButton{\citet{Greene2018}}{\citet{Greene2018}}}%
%BeginExpansion
\citet{Greene2018}%
%EndExpansion
: \textquotedblleft An MNP model that replaces a normal distribution with
$\Sigma=I$ [zero covariance between the random components] will yield
virtually identical results (probabilities and elasticities) compared to the
multinomial logit model.\textquotedblright}\ This belief has been taken to
suggest that it is the iid assumption imposed on the random components that
drives the well-known independence of irrelevant alternatives (IIA)
substitution pattern. However, insights from the case with only two
alternatives do not extend to the case with more than two alternatives. In
this paper, we show that as the number of alternatives increases, the choice
probabilities from a LEVI-based conditional logit and a SEVI-based conditional
logit model increasingly diverge from each other. We also show that the IIA
property does not follow from the iid assumption. Even though the random
components are still iid, a SEVI-based conditional logit model with more than
two alternatives exhibits a distinctly non-IIA substitution pattern.

Let's consider a simple stylized example. An agent (individual $i$) enters a
store with the intention of purchasing a pair of running shoes, and there are
five different alternatives $(j\in\{1,2,3,4,5\}$ and $J=5)$ to choose from.
Before trying them on, the individual rates each of the five pairs based on
observable features like brand and price. These ratings capture the systematic
components, $\left\{  V_{j}\right\}  $, of a random utility model. For
illustrative purposes, assume that the $V_{j}$'s are known to be
$\{0.25,0.50,0.75,1.50,$ and $2.00\},$ so there is no need to specify the shoe
attributes and the associated preference parameters.\footnote{The vector of
systematic utilities can vary among different individuals. Here, for
simplicity, we assume that $V_{j}$'s are the same across all individuals.}
Next, the individual tries on each of the five pairs of shoes, making an
idiosyncratic judgement on comfort and fit. This provides the random
component, $\varepsilon_{ij}$, to add to the systematic component, $V_{j}$, to
get the total utility level, $U_{ij}=V_{j}+\varepsilon_{ij},$ for each pair of
shoes. The chosen pair for purchase is the one with the highest $U_{ij}$.

Now let the store's customers comprise of a large number of individuals
($i=1,\ldots,n$), with identical and known systematic components $\left\{
V_{j}\right\}  _{j=1}^{5}$, but different random components $\varepsilon
_{ij}.$ They try on the shoes, therefore observing their individual random
components, then make their decision on which of the five pairs of shoes to
purchase. The question we posed can now be framed solely in terms of: what are
the choice probabilities cast in percentage terms (or equivalently, market
shares) for each of the five pairs of shoes, if the $\varepsilon_{ij}$ are
drawn from the following distributions: (a) the standard LEVI distribution,
(b) the standard SEVI distribution, (c) N$(0,\pi^{2}/6)$ (hereafter referred
to as NORM, a normal distribution normalized to have the same variance as the
standard LEVI and SEVI distributions)?

The answer to this question is displayed in Figure
\ref{Figure: Example_with_5_options}. The results are quite striking,
particularly for the alternatives with the smallest and largest systematic
utilities. With a LEVI error component, the alternative with the smallest
systematic utility has a market share 137\% higher than its SEVI counterpart
and 25\% higher than its independent normal counterpart. This pattern
continues for the alternatives with the second smallest and third smallest
systematic utilities, with LEVI giving rise to market shares 73\% and 37\%
higher than SEVI, respectively. However, this pattern of a higher LEVI market
share reverses around a 20\% market share (i.e., $1/J$ or equal share for all
alternatives), where the SEVI distribution leads to a higher market share than
the LEVI distribution. For the alternative with the second-largest systematic
utility, the SEVI market share is 11\% higher than its LEVI counterpart and
5\% higher than that under the NORM distribution. The pattern of a higher
market share under SEVI intensifies for the alternative with the largest
systematic utility. Here, SEVI leads to a market share 21\% higher than LEVI
and 16\% higher than NORM.%

%TCIMACRO{\FRAME{fhFU}{3.7758in}{3.0692in}{0pt}{\Qcb{Market shares for all five
%alternatives in an RUM with identical systematic utilities but different
%distributions of the random utility components}}%
%{\Qlb{Figure: Example_with_5_options}}{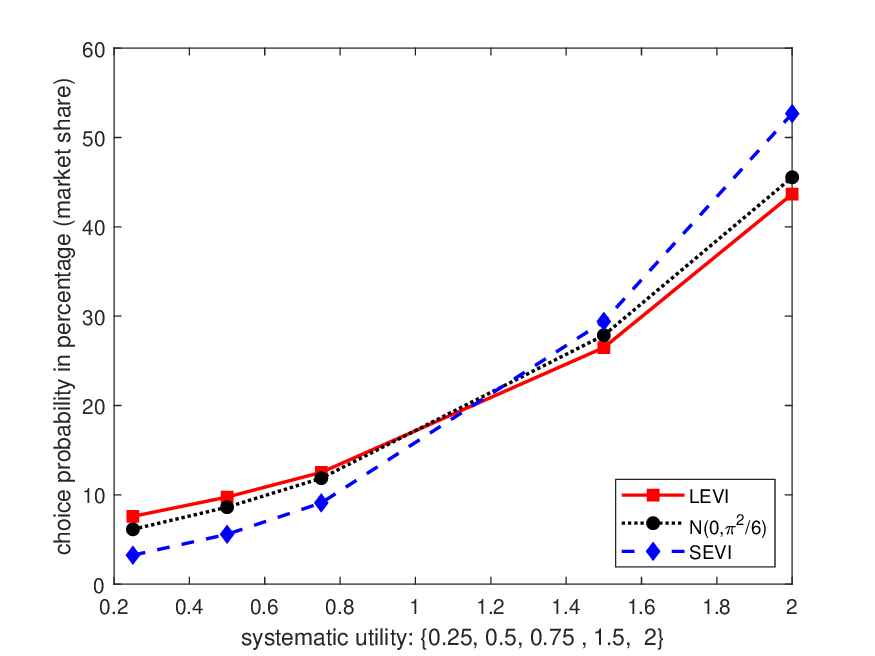}%
%{\special{ language "Scientific Word";  type "GRAPHIC";
%maintain-aspect-ratio TRUE;  display "USEDEF";  valid_file "F";
%width 3.7758in;  height 3.0692in;  depth 0pt;  original-width 4.9882in;
%original-height 4.0499in;  cropleft "0";  croptop "1";  cropright "1";
%cropbottom "0";  filename 'figure2003.eps';file-properties "XNPEU";}}}%
%BeginExpansion
\begin{figure}[h]%
\centering
\includegraphics[
height=3.0692in,
width=3.7758in
]%
{}%
\caption{Market shares for all five alternatives in an RUM with identical
systematic utilities but different distributions of the random utility
components}%
\label{Figure: Example_with_5_options}%
\end{figure}
%EndExpansion

The magnitude of all of these differences is likely to be important in
empirical applications. For instance, if Alternative 1 were a new product, use
of a LEVI-based prediction when the true error component was SEVI would have
led to an overestimation of demand by more than double (7.6\% vs. 3.2\% market
shares). The potential for the random component to be SEVI rather than LEVI
provides one possible explanation for why most new products perform much worse
than predicted by marketing researchers. On the other hand, a LEVI-based
prediction for the market leader, when the true error component is SEVI, could
substantially underestimate its market share (43.7\% vs. 52.7\%). It is easy
to see how either of these faulty predictions could be disastrous from both
production and profit perspectives. Antitrust implications are also obvious.

Why do these differences in choice probabilities arise, despite these three
simple models sharing the same (known) systematic components and random
components from iid draws from simple, single-peaked distributions with the
same finite variance? The answer lies in the skewness of the random
components. From the econometrician's viewpoint, the fundamental question in
our stylized example is: for a randomly chosen agent, is the overall utility
from a randomly chosen pair of shoes (a) more likely to move up, (b) stay the
same, or (c) move down? Surprisingly, this fundamental property of the random
component of a RUM seems to have never been raised in the voluminous
literature on multinomial choice models and serves as the driving force behind
the present paper.

If the total utility $U_{ij}$ of an arbitrarily chosen agent $i$ for an
arbitrarily chosen alternative $j$ is more likely to move upward relative to
the systematic utility $V_{ij}$ after receiving the private signal
$\varepsilon_{ij}$ representing the random component, then the random
component $\varepsilon_{ij}$ is right-skewed. The classic exemplar in this
scenario is the LEVI distribution, and the canonical model is the conditional
logit. If the total utility for this agent is just as likely to move up as
move down, then the random component is symmetric. The normal distribution is
the exemplar, and the multinomial probit is the canonical model. If this
agent's total utility is more likely to move down, then the random component
is left-skewed. There is not a similar exemplar distribution or canonical
model for this left-skewed case. We propose employing the SEVI distribution
and its corresponding model for the left-skewed scenario, serving as the
natural counterpart to the standard LEVI formulation.

Once the potential importance of the skewness of the random component is
acknowledged, an entirely new line of inquiry is opened up. Are there
conditions (e.g., in-store vs. online purchases) or deep personality traits
(e.g., risk averse vs. risk loving) that influence the random component's
skewness? We don't provide answers to this question but hope that we provide
tools that can help researchers explore these issues.

Fitting an RUM based on LEVI random components to data generated with SEVI
random components results in substantially biased estimates of the preference
parameters and vice versa. For the same preference parameters in the RUM's
systematic components, SEVI-based random components result in substantively
different (compared to LEVI-based random components) estimates for quantities
such as elasticities and average partial derivatives routinely relied upon in
decision making.\footnote{We identify one specific scenario: a ratio
estimator, such as those used in many willingness-to-pay calculations, where
the LEVI-based estimator remains consistent even if the true error component
is SEVI, provided that an important set of restrictions on the attribute
matrix is met.} Standard generalizations of the multinomial/conditional logit
do not address issues that arise when a LEVI random component is assumed,
incorrectly, when SEVI is appropriate. Instead, SEVI-based analogues of those
models exist to deal with the specific deviations from the standard
multinomial/conditional logit that those models are intended to address.

While we make no claim that the SEVI assumption is always or even generally
preferable to that of LEVI, the empirical relevance of the SEVI model is
easily demonstrated. When examining a set of six datasets taken from standard
econometric textbooks and well-known papers, we find that the SEVI
distributional assumption is consistently preferred over LEVI and is favored
over the normal distribution in five out of the six datasets. For any
particular application, the skewness of the random component is obviously an
empirical question, and a priori, nothing rules out mixtures of random
component skewness types, an issue we examine near the end of this paper (see
Section \ref{Subsection: mixed LEVI-SEVI}).

The bulk of the paper is devoted to studying the theoretical properties of a
SEVI model\textbf{ }and delving into aspects of the estimation, inference, and
prediction within the SEVI framework. Here, we offer a preview of some
intriguing properties of the SEVI model.

First, both the choice probabilities and the surplus function in a SEVI model
have intuitive and closed-form representations, allowing us to explore new
avenues of discrete choice analysis. Using the strict convexity of the social
surplus function, we show that a SEVI model is identified under the same
conditions that ensure the identification of a LEVI model.

Second, the structure of the choice probability for any given alternative in a
SEVI model is considerably richer than in the LEVI model, where its dependence
on other alternatives is solely through the sum of the exponentiated
systematic components for each alternative. In the SEVI model, the choice
probability for a given alternative can be seen as an aggregation involving
all choice subsets in which the alternative occurs and the overall choice set.
This dependence of the choice probability on all subsets of alternatives is a
distinctive feature of the SEVI model that differentiates it from the LEVI
model. As a consequence of such dependence, IIA does not hold in the SEVI model.

Third, on its surface, the all-subsets representation of SEVI might seem
computationally challenging when the number of alternatives $J$ is of any
appreciable size. We overcome this issue by employing Gosper's hack
(\citet{KnuthDonald2011TAoC}) from the computer science literature to
efficiently enumerate all possible subsets containing a target alternative.
For a moderately large $J$ $\left(  J\leq15\right)  $ with typical sample
sizes and attribute numbers in economic analysis, the computation time for a
SEVI model is somewhat higher than its standard logit LEVI counterpart but
much lower than that of the MNP with iid normal errors. In the case of $J=12,$
the computation time for a SEVI model can be less than 2\% of that of the
corresponding independent MNP model.\footnote{It is an open issue how SEVI's
enumeration of subsets of alternatives interacts with the computational
intensity inherent in the generalizations of the conditional logit model,
where long run times are often the norm.}

Fourth, we explore the regions where the SEVI model is likely to produce
results similar to the LEVI model and where the results are most likely to
diverge. We develop a Vuong-type test, showing its effectiveness in
distinguishing between the SEVI and LEVI models (%
%TCIMACRO{\TeXButton{\citet{Vuong1989}}{\citet{Vuong1989}}}%
%BeginExpansion
\citet{Vuong1989}%
%EndExpansion
). We also show that AIC/BIC can be employed to help guide appropriate model choice.

Fifth, we explore the implications of the skewness of the random component as
the number of alternatives increases, going beyond the case with only two
alternatives. One immediate implication is that in an RUM with SEVI random
components, the observed choice and the choice that maximizes the systematic
utility are more likely to correspond than in the case of LEVI. This is
because SEVI error components have less of an impact on determining the
selected alternative compared to LEVI ones. In contrast, SEVI random
components play a larger role than their LEVI counterparts in determining what
is not chosen.\footnote{One way to conceptualize the role of the random
components, consistent with this observation, is that an agent with SEVI
random components seeks idiosyncratic reasons to reject alternatives, whereas
under LEVI, the agent aims to find a high-quality idiosyncratic match
component to accept an alternative.}

The rest of the paper is organized as follows. Section
\ref{Sec: Literature Review} provides a literature overview, largely from a
historical perspective, that hints at why the importance of the skewness of
the random component was overlooked. Section \ref{section choice prob} studies
the structure of the choice probabilities in a SEVI model, contrasting its
properties with those of the conventional LEVI model. This section shows that
while Luce's choice axiom holds for the SEVI model when $J=2$, this
restriction does not hold when $J>2$. This section also provides a closed-form
expression for the surplus function and shows how it can be used to compute
the compensating variation and establish the identification of the SEVI model.
Section \ref{Section: MLE_QMLE} considers estimating the SEVI model by MLE and
QMLE, studies the consistency of the QMLE up to a scale normalization, and
describes the use of AIC/BIC and the Vuong test for selecting between the SEVI
and LEVI models. Section \ref{Sec: simulation} presents a variety of
simulation results, exploring different aspects of the SEVI model. Section
\ref{Sec: extension} introduces a mixed model that accommodates individuals of
either LEVI or SEVI type in a population. It also contains a straightforward
extension to discrete choice problems where agents aim to minimize, rather
than maximize, an objective function, such as cost and regret. Section
\ref{Sec:Emp} contains a set of empirical applications drawn from textbooks
and well-known papers, allowing us to compare the goodness of fit using LEVI,
SEVI, and independent MNP with the same error variance. The final section
offers concluding remarks and outlines some future research directions based
on generalizing the SEVI model. Proofs are provided in the appendix, and an
online supplementary appendix contains additional materials and figures.

\section{Literature Review: a Historical
Perspective\label{Sec: Literature Review}}

Early empirical research on individual-level choices was hindered by the lack
of data and appropriate statistical models. Operationally, the study of choice
requires a conceptualization involving a systematic component and a random
component. The psychologist
%TCIMACRO{\TeXButton{\citet{Thurstone1927}}{\citet{Thurstone1927}} }%
%BeginExpansion
\citet{Thurstone1927}
%EndExpansion
is generally credited with pioneering this concept and proposing a variant of
the now-familiar binary probit model. The binary logit model soon emerged from
biology. Statistical advancements related to the random component, whether
logistic or normal, largely took place in the realm of biometrics (%
%TCIMACRO{\TeXButton{\citet{Cox1969}}{\citet{Cox1969}}}%
%BeginExpansion
\citet{Cox1969}%
%EndExpansion
).

The transition from the binary choice case to the multinomial choice proved
not to be easy. While significant advancements occurred within biometrics
(e.g.,
%TCIMACRO{\TeXButton{\citet{Cox1969}}{\citet{Cox1969}}}%
%BeginExpansion
\citet{Cox1969}%
%EndExpansion
), the pendulum swung back to psychology and, in particular, to the seminal
work of
%TCIMACRO{\TeXButton{luce1959individual}{\citet{luce1959individual}}}%
%BeginExpansion
\citet{luce1959individual}%
%EndExpansion
. This effort quickly spilled over into economics (e.g.,
%TCIMACRO{\TeXButton{Block_Marschak1959}{\citet{BlockMarschak1959}}}%
%BeginExpansion
\citet{BlockMarschak1959}%
%EndExpansion
). Formal requirements for choice models to meet the rationality requirements
of utility maximization were explored, and patterns of choice behavior, such
as the IIA property in the multinomial logit model, were examined. The
multinomial logit model and its statistically equivalent flipped version,
known as the conditional logit model, came together in its current form with
the tour-de-force synthesis of
%TCIMACRO{\TeXButton{\citet{McFadden1973}}{\citet{McFadden1973}}}%
%BeginExpansion
\citet{McFadden1973}%
%EndExpansion
. The work of McFadden and those following in his footsteps spawned a truly
massive empirical enterprise in economics and other social sciences (see
%TCIMACRO{\TeXButton{McFadden2001}{\citet{McFadden2001}}}%
%BeginExpansion
\citet{McFadden2001}%
%EndExpansion
). This involves the construction and application of various random utility
models, almost all based on the LEVI kernel of the conditional logit.

There are many reasons why the LEVI distribution was originally chosen for
RUMs and has become popular in applied research.

First, utility is inherently unobservable and hence is typically represented
as a latent variable. The assumption of continuity makes economic theory and
applied work much more tractable. This ruled out discrete distributions.

Second, in the case of a binary response, there had been a long-running debate
over tacking on an error term, first from a normal and later from a logistic
distribution, in bioassay experiments. In these experiments, the error
component was perceived as coming from an unobserved tolerance distribution.
When response data were plotted, bell-shaped tolerance distributions seemed
more appropriate than either a uniform or bimodal distribution. Eventually,
the binary logit model prevailed over its probit counterpart because of its
computational advantages and the early literature's observation of largely
indistinguishable empirical results across the two models with sample sizes
available at the time (%
%TCIMACRO{\TeXButton{\citet{Chambers_Cox1967}}{\citet{Chambers_Cox1967}}}%
%BeginExpansion
\citet{Chambers_Cox1967}%
%EndExpansion
). This outcome also influenced the adoption of the binary logit model in the
social sciences; see Chapter 9 of
%TCIMACRO{\TeXButton{\citet{Cramer2003}}{\citet{Cramer2003}} }%
%BeginExpansion
\citet{Cramer2003}
%EndExpansion
for a historical account.

Third, there was a long-standing belief that the multinomial logit and the
multinomial probit with iid normal unobserved utilities were very close
approximations to each other.\footnote{\label{footnote_LEVI_MNP copy(1)}%
\citet{Hausman_Wise1977} state: \textquotedblleft We might expect the
independent probit and the logit models to have similar properties [including
inducing the IIA red bus/blue bus property via the iid random component
assumption] and, in fact, they lead to almost identical empirical
results.\textquotedblright\ \citet{Horowitz1982} proposes a Lagrange
multiplier test of the unrestricted MNP model, assuming that the MNL estimates
adequately approximate the restricted MNP results.} The belief may have led to
the misconception that there is no need to go beyond the multinomial logit if
the random utility components are iid. However, this belief turns out not to
be generally true. The misconception likely stems from limitations in
computational techniques and resources, restricting early literature to
examine only cases with a small number of alternatives. When the number of
alternatives, $J,$ is small, particularly when $J=2$ and $J=3$, the estimates
from an independent multinomial probit with a common variance normalization
tend to be very close to those from a LEVI multinomial logit. Nevertheless, as
shown in Figure \ref{Figure: Example_with_5_options}, the multinomial logit
and probit models can exhibit significant differences for even a moderate $J,$
such as $J=5.$

These, however, are not reasons for not exploring the implications of other
distributions, as
%TCIMACRO{\TeXButton{\citet{Manski1977}}{\citet{Manski1977}} }%
%BeginExpansion
\citet{Manski1977}
%EndExpansion
had urged. This is particularly relevant in light of the increased
computational power and insights from behavioral economics, which suggest
frequent violations of the IIA property.

The \textquotedblleft non-IIA\textquotedblright\ property of a SEVI model
bears some resemblance to that of the mother/universal logit (%
%TCIMACRO{\TeXButton{\citet{McFaddenTrainTye1978}}{\citet{McFaddenTrainTye1978}%
%}}%
%BeginExpansion
\citet{McFaddenTrainTye1978}%
%EndExpansion
) and the random regret minimization model (e.g.,
%TCIMACRO{\TeXButton{MAI2017}{\citet{MAI2017}}}%
%BeginExpansion
\citet{MAI2017}%
%EndExpansion
). However, the mechanisms of removing the IIA property are fundamentally
different. By definition, a SEVI model considered here is a utility
maximization model, and hence the resulting choice probability function can be
rationalized by a probability distribution over linear preference orderings (%
%TCIMACRO{\TeXButton{\citet{BlockMarschak1959}}{\citet{BlockMarschak1959}}}%
%BeginExpansion
\citet{BlockMarschak1959}%
%EndExpansion
). The absence of the IIA property is due solely to the distribution of the
error component. In contrast, most implementations of the mother/universal
logit and the random regret minimization models are inconsistent with
random\ utility maximization, and the lack of the IIA property in these models
is a result of the specification that the difference in the systematic (not
random) component of utility between any two alternatives depends on the
presence or absence of another alternative.

A significant portion of the history of this field revolves around building
models utilizing the LEVI kernel to relax the IIA constraint. This is achieved
by permitting agents to have different preferences or scale parameters or by
allowing subsets of alternatives to be correlated in some fashion, while still
maintaining computational feasibility (%
%TCIMACRO{\TeXButton{Hensher_Rose_Greene_2015}{\citet{Hensher_Rose_Greene_2015}%
%}}%
%BeginExpansion
\citet{Hensher_Rose_Greene_2015}%
%EndExpansion
;
%TCIMACRO{\TeXButton{book_train_2009}{\citet{book_train_2009}}}%
%BeginExpansion
\citet{book_train_2009}%
%EndExpansion
;
%TCIMACRO{\TeXButton{\citet{Greene2018}}{\citet{Greene2018}}}%
%BeginExpansion
\citet{Greene2018}%
%EndExpansion
). Among these models, random-parameter (mixed) logit, latent-class
multinomial logit, scale-heterogeneity logit, generalized multinomial logit,
and nested logit stand out as the most prominent examples. All these more
general specifications can be constructed using a SEVI rather than a LEVI
random component. Conceptually, specifying the random component's skewness
should precede relaxing restrictive assumptions of the conditional logit
model, such as the independence of random components and the homogeneity of
preferences across individuals.

Our discovery of the LEVI vs. SEVI distinction came about through an
inadvertent sign flip while generating draws from a standard LEVI
distribution. In doing so, a SEVI random component was generated instead. This
would likely have gone unnoticed if our objective had not been to study
whether the behavior of agents facing choice sets of varying sizes differed.
In a baseline simulation, we observed sizeable differences in the conditional
logit model estimates of preference parameters and other statistics, as the
number of alternatives increased. These unexpected results led us to realize
that we had fit a LEVI-based logit model to data generated by a SEVI logit. A
search of general and specialized econometrics texts revealed no discussion of
LEVI vs. SEVI or the broader issue of the role of the skewness of the random
component in choice behavior.

We eventually found one short and largely ignored paper in the literature (%
%TCIMACRO{\TeXButton{\citet{Hu2005}}{\citet{Hu2005}}}%
%BeginExpansion
\citet{Hu2005}%
%EndExpansion
) which points out that the likelihood functions of SEVI and LEVI-based choice
models are different. However, that paper does not delve into any of the other
results presented here. It asserts that the SEVI model does not appear to have
a closed-form solution and hence has to be estimated using a simulation
approach. In Theorem \ref{Theorem: choice_prob_under_SEV}, we show this not to
be the case. SEVI has a mathematically elegant and intuitive closed-form
solution that facilitates direct comparisons between the properties of LEVI
vs. SEVI choice models. In an empirical marketing application with three
alternatives involving types of bread,
%TCIMACRO{\TeXButton{\citet{Hu2005}}{\citet{Hu2005}} }%
%BeginExpansion
\citet{Hu2005}
%EndExpansion
shows that the SEVI model fits marginally better than the LEVI model, but the
difference is smaller than the variation due to different smoothing parameter
values used in the simulation approach. To our knowledge, the SEVI model has
not been used in subsequent empirical applications.

\section{The Model: Choice Probability and Surplus Function
\label{section choice prob}}

\subsection{The SEVI-based RUM}

We consider a probabilistic choice model based on utility maximization. We
assume that the utility of individual $i$ from alternative $j$ is given by
\begin{equation}
U_{ij}\left(  \beta_{0}\right)  =V_{ij}\left(  \beta_{0}\right)
+\varepsilon_{ij},\text{ }j=1,2,\ldots,J, \label{RUM}%
\end{equation}
where $J$ is the number of alternatives, $V_{ij}\left(  \beta_{0}\right)  $ is
the observable utility that individual $i$ obtains from choosing alternative
$j$, $\varepsilon_{ij}$ captures the utility component unobservable to the
econometrician, and $\{\varepsilon_{ij}\}$ is independent of $\left\{
V_{ij}\left(  \beta_{0}\right)  \right\}  $. As is customary in the
literature, we refer to $V_{ij}\left(  \beta_{0}\right)  $ as the\emph{
systematic} utility and $\varepsilon_{ij}$ as the \emph{random/stochastic}
utility. A common parametrization of the systematic utility is $V_{ij}\left(
\beta_{0}\right)  =Z_{i}\beta_{0,z,j}+X_{ij}\beta_{0,x}$ where $X_{ij}$ is a
vector of attributes of alternative $j$ as perceived by individual $i$ and
$Z_{i}$ is a vector of characteristics of individual $i$.

Individuals choose the alternative that maximizes their total utility. Let
$Y_{i}$ denote the choice of individual $i;$ then%
\begin{equation}
Y_{i}=\arg\max_{j}\left\{  U_{i1}\left(  \beta_{0}\right)  ,\ldots
,U_{iJ}\left(  \beta_{0}\right)  \right\}  .
\end{equation}
One of our goals is to estimate the model parameter $\beta_{0}$ based on
individual choice data.

For the stochastic utility, we assume that $\varepsilon_{ij}$ is iid over
$j=1,2,\ldots,J$ and follows the extreme value type I distribution based on
the \emph{smallest} extreme value. The CDF of the\ (standard) smallest extreme
value type I (SEVI) distribution is
\begin{equation}
F(a)=1-\exp\left(  -\exp(a)\right)  ,
\end{equation}
and the corresponding pdf is
\begin{equation}
f(a)=F^{\prime}(a)=\exp\left(  a-\exp(a)\right)  . \label{SEV_pdf}%
\end{equation}
Our distributional assumption is in contrast with that in standard logit
models, where the latent utility is given by $\tilde{U}_{ij}\left(  \beta
_{0}\right)  =V_{ij}\left(  \beta_{0}\right)  +\tilde{\varepsilon}_{ij}$, and
the stochastic utility $\tilde{\varepsilon}_{ij}$ follows the (standard)
extreme value Type I distribution based on the \emph{largest} extreme value
(i.e., the \emph{largest} extreme value type I (LEVI) distribution). These two
forms of extreme value distributions are related in that $\left(
\varepsilon_{i1},\ldots,\varepsilon_{iJ}\right)  ^{\prime}$ and $-\left(
\tilde{\varepsilon}_{i1},\ldots,\tilde{\varepsilon}_{iJ}\right)  ^{\prime}$
have the same distribution, so they can be regarded as mirror images of each
other. For easy reference, we call a stochastic utility following the SEVI
distribution a SEVI error and the resulting RUM as the SEVI model. The same
nomenclature applies to the LEVI case.\footnote{Throughout the paper, the SEVI
is the standard SEVI with pdf given in (\ref{SEV_pdf}) and variance $\pi
^{2}/6.$ A similar comment applies to the LEVI case.}%

%TCIMACRO{\FRAME{ftbpFU}{3.9038in}{2.5244in}{0pt}{\Qcb{Probability density
%functions of the LEVI, NORM, and SEVI distributions}}%
%{\Qlb{Figure: LEV and SEV}}{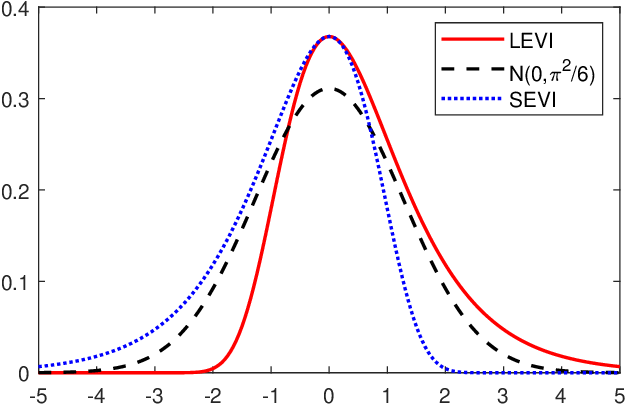}%
%{\special{ language "Scientific Word";  type "GRAPHIC";
%maintain-aspect-ratio TRUE;  display "USEDEF";  valid_file "F";
%width 3.9038in;  height 2.5244in;  depth 0pt;  original-width 4.1658in;
%original-height 2.6844in;  cropleft "0";  croptop "1";  cropright "1";
%cropbottom "0";  filename 'sevi_levi_density.eps';file-properties "XNPEU";}}}%
%BeginExpansion
\begin{figure}[ptb]%
\centering
\includegraphics[
height=2.5244in,
width=3.9038in
]%
{}%
\caption{Probability density functions of the LEVI, NORM, and SEVI
distributions}%
\label{Figure: LEV and SEV}%
\end{figure}
%EndExpansion

Figure \ref{Figure: LEV and SEV} illustrates the pdfs of the LEVI and SEVI
distributions. The LEVI distribution exhibits a strong right-skewness with a
heavier upper tail in contrast to its lower tail, while SEVI exhibits the
opposite (mirror image) behavior. As a symmetric benchmark, the normal pdf
with the same variance is also plotted. The SEVI distribution is appropriate
when there is a prior belief that the random utility component is more likely
to have a negative rather than a positive or neutral effect on the total
utility. Such a belief is plausible, and we have been unable to find a strong
\emph{a priori} argument against using a SEVI random component or a mixture of
SEVI and LEVI components, as we discuss in Section
\ref{Subsection: mixed LEVI-SEVI}.

This notion of random component skewness has to be seen as conditional on what
alternative attributes and agent characteristics are observed by the
econometrician. To see this, note that in the case of a linear latent utility
model, $\varepsilon_{ij}$ consists of $(\tilde{X}_{ij}-E\tilde{X}_{ij}%
)\tilde{\beta}_{0,x}$ for a scalar attribute $\tilde{X}_{ij}$ known to
individual $i$ but not observable to the econometrician. If $\tilde{X}_{ij}$
has a negative skewness and $\tilde{\beta}_{0,x}>0$, then $(\tilde{X}%
_{ij}-E\tilde{X}_{ij})\tilde{\beta}_{0,x}$ will also have a negative skewness,
implying that $\varepsilon_{ij}$ may exhibit left-skewness and follow the SEVI
distribution. Thus, the inclusion or exclusion of a particular covariate may
change the skewness of the stochastic utility component. More generally, the
system utility may be misspecified, with the resulting specification error
effectively becoming part of the stochastic utility. Given that specification
errors can be manifest as left-skewed, symmetric, or right-skewed, the
stochastic utility can display any of these forms of skewness as a
consequence. This argument lends another support to the notion that the SEVI
distribution of the stochastic utility is at least as plausible as the LEVI distribution.

A logical question at this point is whether the SEVI distribution belongs to
the class of Generalized Extreme Value (GEV) distributions with the CDF given
by $F_{\mathrm{GEV}}(a;\xi)=\exp(-\left(  1+\xi a\right)  ^{-1/\xi})$ for some
parameter $\xi.$\ The SEVI distribution, however, does not align with a GEV
distribution.\footnote{When $\xi\neq0,$ the GEV distribution is not supported
on the whole real line. When $\xi=0,$ the GEV distribution is right-skewed.}
Nor does it correspond to the marginal distribution of a multivariate extreme
value (MEV) distribution with CDF:
\begin{equation}
F_{\mathrm{MEV}}(a_{1},\ldots,a_{J})=\exp(-\mathbb{G}\left(  \exp\left(
-a_{1}\right)  ,\ldots,\exp\left(  -a_{J}\right)  \right)  , \label{MEV}%
\end{equation}
for some function $\mathbb{G}$. In both cases, the marginal distribution is
right-skewed instead of being left-skewed like a SEVI distribution. See, for
example, Chapter 4 of
%TCIMACRO{\TeXButton{book_train_2009}{\citet{book_train_2009}}}%
%BeginExpansion
\citet{book_train_2009}%
%EndExpansion
. In principle, exploring an extension of the GEV or MEV family by using the
SEVI as the marginal distribution, potentially encompassing the SEVI as a
special case within this broader framework, could be considered for future research.

\subsection{Choice Probabilities in the SEVI-based RUM}

In a SEVI-based RUM, similar to a LEVI-based RUM, the choice probabilities
also have closed-form expressions, albeit a bit more complex. To simplify the
notation, we suppress the dependence of $V_{ij}\left(  \beta_{0}\right)  $ and
$V_{i}\left(  \beta_{0}\right)  :=(V_{i1}\left(  \beta_{0}\right)
,\ldots,V_{iJ}\left(  \beta_{0}\right)  )^{\prime}$ on $\beta_{0}$, and we
write $U_{ij}:=U_{ij}\left(  \beta_{0}\right)  ,$ $V_{ij}:=V_{ij}\left(
\beta_{0}\right)  ,$ and $V_{i}:=V_{i}\left(  \beta_{0}\right)  $. We adopt
the same convention for the realized values $v_{ij}\left(  \beta_{0}\right)  $
and $v_{i}\left(  \beta_{0}\right)  $ of $V_{ij}\left(  \beta_{0}\right)  $
and $V_{i}\left(  \beta_{0}\right)  $ in the theorem below.

\begin{theorem}
\label{Theorem: choice_prob_under_SEV}Assume that $U_{ij}=V_{ij}%
+\varepsilon_{ij}$, $j=1,2,\ldots,J$ where $\varepsilon_{i}:=\left(
\varepsilon_{i1},\ldots,\varepsilon_{i,J}\right)  ^{\prime}$ is independent of
$V_{i}$ and the elements of $\varepsilon_{i}$ follow independent and identical
SEVI distributions. Then, under utility maximization, the probability of
individual $i$ choosing alternative $j$ is
\begin{align}
P^{\mathrm{SEVI}}\left(  Y_{i}=j|v_{i}\right)   &  :=P^{\mathrm{SEVI}}\left(
Y_{i}=j|V_{i}=v_{i}\right) \nonumber\\
&  =1+\sum_{\ell=1}^{J-1}\left(  -1\right)  ^{\ell}\sum_{k_{1}<\ldots<k_{\ell
}:\left\{  k_{1},\ldots,k_{\ell}\right\}  \subseteq\mathcal{J}_{-j}}\frac
{\exp\left(  -v_{ij}\right)  }{\exp\left(  -v_{ij}\right)  +\sum_{k\in\left\{
k_{1},\ldots,k_{\ell}\right\}  }\exp\left(  -v_{ik}\right)  }%
\label{SEVI_prob_formula}\\
&  =1+\sum_{\ell=1}^{J-1}\left(  -1\right)  ^{\ell}\sum_{k_{1}<\ldots<k_{\ell
}:\left\{  k_{1},\ldots,k_{\ell}\right\}  \subseteq\mathcal{J}_{-j}}\frac
{1}{1+\sum_{k\in\left\{  k_{1},\ldots,k_{\ell}\right\}  }\exp\left(
-v_{ik}+v_{ij}\right)  },\nonumber
\end{align}
where, for $\mathcal{J}=\left\{  1,\ldots,J\right\}  ,$ $\mathcal{J}%
_{-j}=\mathcal{J}\backslash\left\{  j\right\}  $ is the subset of alternatives
leaving the $j$-th alternative out.
\end{theorem}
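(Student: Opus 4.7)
The plan is to compute $P^{\mathrm{SEVI}}(Y_i=j\mid v_i)$ directly by conditioning on $\varepsilon_{ij}$ and integrating out. Since $\varepsilon_i$ is independent of $V_i$, conditional on $V_i=v_i$ and on $\varepsilon_{ij}=a$, the event $\{Y_i=j\}$ coincides with $\{\varepsilon_{ik}\le a+v_{ij}-v_{ik}\text{ for all }k\ne j\}$. Using independence of the $\varepsilon_{ik}$ and the SEVI CDF $F(t)=1-\exp(-\exp(t))$, this gives
\begin{equation*}
P^{\mathrm{SEVI}}(Y_i=j\mid v_i)=\int_{-\infty}^{\infty}f(a)\prod_{k\in\mathcal{J}_{-j}}\bigl[1-\exp(-e^{a}e^{v_{ij}-v_{ik}})\bigr]\,da,
\end{equation*}
with $f(a)=\exp(a-e^{a})$. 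This is the only analytic step I would carry out before turning to combinatorics.

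Next, I would expand the product over $\mathcal{J}_{-j}$ by the usual identity $\prod_k(1-x_k)=\sum_{S}(-1)^{|S|}\prod_{k\in S}x_k$, so that the integrand becomes a sum indexed by subsets $S\subseteq\mathcal{J}_{-j}$, with the $S$-term equal to $(-1)^{|S|}\exp(a-e^{a})\exp\bigl(-e^{a}\sum_{k\in S}e^{v_{ij}-v_{ik}}\bigr)$. Interchanging sum and integral (finite sum, so this is immediate), each resulting integral becomes
\begin{equation*}
\int_{-\infty}^{\infty}\exp(a)\exp\!\Bigl(-e^{a}\bigl(1+\sum_{k\in S}e^{v_{ij}-v_{ik}}\bigr)\Bigr)\,da.
\end{equation*}
The substitution $u=e^{a}$, $du=e^{a}\,da$ converts this into $\int_0^{\infty}\exp(-u\,c_S)\,du=1/c_S$, where $c_S=1+\sum_{k\in S}e^{v_{ij}-v_{ik}}$. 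This is the key simplification: the SEVI density is precisely the one that turns the integrand, after the inclusion--exclusion expansion, into an elementary exponential integral.

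Collecting terms, I obtain
\begin{equation*}
P^{\mathrm{SEVI}}(Y_i=j\mid v_i)=\sum_{S\subseteq\mathcal{J}_{-j}}\frac{(-1)^{|S|}}{1+\sum_{k\in S}\exp(v_{ij}-v_{ik})}.
\end{equation*}
The empty-subset term contributes $1$, and grouping the remaining subsets by cardinality $\ell=|S|\in\{1,\dots,J-1\}$ yields the second displayed form in the theorem. Finally, multiplying numerator and denominator of each summand by $\exp(-v_{ij})$ produces the first displayed form.

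The only nonroutine step is recognizing that the product-of-CDFs structure, which looks forbidding, collapses cleanly under inclusion--exclusion precisely because $F(a+c)=1-\exp(-e^{c}e^{a})$ has the exponential-of-exponential form that matches $f(a)=e^{a-e^{a}}$; without this algebraic match the integrals would not reduce to $1/c_S$. A small bookkeeping point to be careful about is ensuring that the inclusion--exclusion expansion and the $u=e^{a}$ substitution are valid term-by-term (positivity of integrands and absolute integrability of the expanded sum), but both hold since the sum is finite and each term is dominated on $(-\infty,\infty)$.
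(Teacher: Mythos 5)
Your proposal is correct, and it is essentially the same argument as the paper's own alternative proof in the supplementary appendix (Section \ref{alternative_proof}): condition on the error of alternative $j$, write the choice probability as $\int f(a)\prod_{k\neq j}F(a+v_{ij}-v_{ik})\,da$, expand the product by inclusion--exclusion over subsets of $\mathcal{J}_{-j}$, and reduce each term to $1/c_S$ via the substitution $u=e^{a}$ (the paper performs the mirror-image computation with the LEVI variables $\tilde\varepsilon=-\varepsilon$ and the substitution $\lambda=e^{-z}$, which is the same calculation). The paper's main appendix proof instead applies the probabilistic inclusion--exclusion principle to the events $\{U_{ij}\le U_{ik}\}$ and identifies each intersection probability as a LEVI logit probability over a subset with flipped utilities, but your direct integral route is equally valid and is exactly what the authors present as their algebraic derivation.
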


\begin{remark}
In a standard LEVI conditional logit model, the choice probability is
\begin{equation}
P^{\mathrm{LEVI}}\left(  Y_{i}=j|v_{i}\right)  =\frac{\exp(v_{ij})}{\sum
_{k=1}^{J}\exp\left(  v_{ik}\right)  }. \label{LEVI_prob_formula}%
\end{equation}
We can compare the choice probabilities across the SEVI and LEVI models using
a few examples. When there are only two alternatives so that $J=2,$ we have,
using Theorem \ref{Theorem: choice_prob_under_SEV},
\begin{align*}
P^{\mathrm{SEVI}}\left(  Y_{i}=1|v_{i}\right)   &  =1-\frac{\exp\left(
-v_{i1}\right)  }{\exp\left(  -v_{i1}\right)  +\exp\left(  -v_{i2}\right)  }\\
&  =\frac{\exp(-v_{i2})}{\exp(-v_{i1})+\exp(-v_{i2})}=\frac{\exp(v_{i1})}%
{\exp(v_{i1})+\exp(v_{i2})}.
\end{align*}
The choice probability is exactly the same as what we obtain in a standard
LEVI model. Therefore, in the presence of only two alternatives, it does not
matter whether the stochastic utility follows the LEVI or SEVI distribution.
This is because the difference between two independent LEVI or SEVI draws
follows the same logistic distribution.\footnote{By Lemma 5 of
%TCIMACRO{\TeXButton{\citet{YELLOTT1977}}{\citet{YELLOTT1977}}}%
%BeginExpansion
\citet{YELLOTT1977}%
%EndExpansion
, the condition that the difference of two iid random components is logistic
is necessary for the IIA to hold. Our results presented later can be seen as
showing the existence of a simple distribution, SEVI, which meets this
necessary condition, yet IIA does not hold.}

When there are more than two alternatives so that $J>2,$ the choice
probabilities in a SEVI model are different from those in a LEVI model.
Consider the case with $J=3.$ It follows from Theorem
\ref{Theorem: choice_prob_under_SEV} that%
\begin{align}
P^{\mathrm{SEVI}}\left(  Y_{i}=1|v_{i}\right)   &  =1-\frac{\exp\left(
-v_{i1}\right)  }{\exp\left(  -v_{i1}\right)  +\exp\left(  -v_{i2}\right)
}-\frac{\exp\left(  -v_{i1}\right)  }{\exp\left(  -v_{i1}\right)  +\exp\left(
-v_{i3}\right)  }\label{Prob_intuition_J3}\\
&  +\frac{\exp\left(  -v_{i1}\right)  }{\exp\left(  -v_{i1}\right)
+\exp\left(  -v_{i2}\right)  +\exp\left(  -v_{i3}\right)  }.\nonumber
\end{align}
This is clearly different from the LEVI choice probability, which is equal to
\[
P^{\mathrm{LEVI}}\left(  Y_{i}=1|v_{i}\right)  :=\frac{\exp\left(
v_{i1}\right)  }{\exp\left(  v_{i1}\right)  +\exp\left(  v_{i2}\right)
+\exp\left(  v_{i3}\right)  }.
\]
To understand the formula for $P^{\mathrm{SEVI}}\left(  Y_{i}=1|v_{i}\right)
,$ we can examine $1-P^{\mathrm{SEVI}}\left(  Y_{i}=1|v_{i}\right)  ,$ which
represents the probability that alternative 1 is \emph{not} chosen. The latter
event can be expressed as a union of two events $\left\{  U_{i1}\leq
U_{i2}\right\}  \cup\left\{  U_{i1}\leq U_{i3}\right\}  $, indicating that
alternative 1 is dominated by either alternative 2 or alternative 3. But the
probability of this union is
\begin{align*}
&  \Pr\left[  \left\{  U_{i1}\leq U_{i2}\right\}  \cup\left\{  U_{i1}\leq
U_{i3}\right\}  \right] \\
&  =\Pr\left[  U_{i1}\leq U_{i2}\right]  +\Pr\left[  U_{i1}\leq U_{i3}\right]
-\Pr\left[  \left\{  U_{i1}\leq U_{i2}\right\}  \cap\left\{  U_{i1}\leq
U_{i3}\right\}  \right] \\
&  =\Pr\left[  -U_{i1}\geq-U_{i2}\right]  +\Pr\left[  -U_{i1}\geq
-U_{i3}\right]  -\Pr\left[  \left\{  -U_{i1}\geq-U_{i2}\right\}  \cap\left\{
-U_{i1}\geq-U_{i3}\right\}  \right]  .
\end{align*}
Each of the three probabilities above corresponds to a choice probability in a
conventional LEVI model. Substituting the familiar LEVI formulas yields
Equation (\ref{Prob_intuition_J3}), and\ Theorem
\ref{Theorem: choice_prob_under_SEV} is a generalization of these arguments.
\end{remark}

\begin{remark}
When $J=4,$ Theorem \ref{Theorem: choice_prob_under_SEV} reveals that%
\begin{align*}
&  P^{\mathrm{SEVI}}\left(  Y_{i}=1|v_{i}\right) \\
&  =1-\frac{\exp\left(  -v_{i1}\right)  }{\exp\left(  -v_{i1}\right)
+\exp\left(  -v_{i2}\right)  }-\frac{\exp\left(  -v_{i1}\right)  }{\exp\left(
-v_{i1}\right)  +\exp\left(  -v_{i3}\right)  }-\frac{\exp\left(
-v_{i1}\right)  }{\exp\left(  -v_{i1}\right)  +\exp\left(  -v_{i4}\right)  }\\
&  +\frac{\exp\left(  -v_{i1}\right)  }{\exp\left(  -v_{i1}\right)
+\exp\left(  -v_{i2}\right)  +\exp\left(  -v_{i3}\right)  }+\frac{\exp\left(
-v_{i1}\right)  }{\exp\left(  -v_{i1}\right)  +\exp\left(  -v_{i2}\right)
+\exp\left(  -v_{i4}\right)  }\\
&  +\frac{\exp\left(  -v_{i1}\right)  }{\exp\left(  -v_{i1}\right)
+\exp\left(  -v_{i3}\right)  +\exp\left(  -v_{i4}\right)  }\\
&  -\frac{\exp\left(  -v_{i1}\right)  }{\exp\left(  -v_{i1}\right)
+\exp\left(  -v_{i2}\right)  +\exp\left(  -v_{i3}\right)  +\exp\left(
-v_{i4}\right)  },
\end{align*}
which involves enumerating all choice subsets that contain the first
alternative. In contrast, the corresponding choice probability in a LEVI model
is
\[
P^{\mathrm{LEVI}}\left(  Y_{i}=1|v_{i}\right)  =\frac{\exp\left(
v_{i1}\right)  }{\left[  \exp\left(  v_{i1}\right)  +\exp\left(
v_{i2}\right)  +\exp\left(  v_{i3}\right)  +\exp\left(  v_{i4}\right)
\right]  },
\]
which involves only the largest choice set. As shown in Figure
\ref{Figure: Example_with_5_options}, and later, when $J$ becomes larger than
2, the choice probabilities in a SEVI model can be different from those in a
LEVI model in an economically meaningful way. This has practical implications,
underscoring the need for careful consideration when formulating distribution
assumptions in discrete choice modeling.
\end{remark}

\begin{remark}
\label{Remark: LEVI_SEVI_equal}When $v_{ij}=v_{ik}$ for all $j$ and $k,$ we
have%
\begin{align*}
P^{\mathrm{SEVI}}\left(  Y_{i}=j|v_{i}\right)   &  =1+\sum_{\ell=1}%
^{J-1}\left(  -1\right)  ^{\ell}\sum_{k_{1}<\ldots<k_{\ell}:\left\{
k_{1},\ldots,k_{\ell}\right\}  \subseteq\mathcal{J}_{-j}}\frac{1}{\ell+1}\\
&  =\sum_{\ell=0}^{J-1}\left(  -1\right)  ^{\ell}\binom{J-1}{\ell}\frac
{1}{\ell+1}=\int_{0}^{1}\sum_{\ell=0}^{J-1}\left(  -1\right)  ^{\ell}%
\binom{J-1}{\ell}s^{\ell}ds\\
&  =\int_{0}^{1}\left(  1-s\right)  ^{J-1}ds=\frac{1}{J}.
\end{align*}
This result can also be obtained by using a symmetric argument. Since all
alternatives have the same systematic utility, they are indistinguishable and
hence have an equal chance of being chosen. The result is the same as that
obtained under the LEVI assumption. On the other hand, when $v_{ij}$ increases
and dominates all other $v_{ik}$ for $k\neq j,$ $P^{\mathrm{SEVI}}\left(
Y_{i}=j|v_{i}\right)  $ approaches 1. The choice probability $P^{\mathrm{SEVI}%
}\left(  Y_{i}=j|v_{i}\right)  $ can, therefore, be regarded as a type of
softmax function that maps $\mathbb{R}^{J}$ into $\left(  0,1\right)  ^{J}.$
For example, it maps $\left(  1,2,8\right)  $ into $(4.24\times10^{-4},$
$2.29\times10^{-3},0.997),$ which amounts to assigning almost all probability
mass to the element with maximal systematic utility. This result is
qualitatively similar to that obtained under the LEVI assumption.
\end{remark}

\begin{remark}
The probability of choosing alternative $j$ in a SEVI model involves a
summation over all subsets of $\mathcal{J}_{-j},$ resulting in an iteration
over a total of $2^{J-1}$ subsets of alternatives. The choice probability in a
LEVI model takes a simpler, more convenient form. While the choice probability
in a SEVI model is considerably richer than that in a LEVI model, the SEVI
model has a computational disadvantage when $J$ is large. To mitigate this, we
employ Gosper's hack, a technique rooted in bitwise operations, to streamline
the iteration process efficiently. Additionally, by leveraging the arguments
following Proposition 4.4 in
%TCIMACRO{\TeXButton{ross2013first}{\citet{ross2013first}}}%
%BeginExpansion
\citet{ross2013first}%
%EndExpansion
, it can be shown that for any $M<J$:
\begin{equation}
\left\vert P^{\mathrm{SEVI}}\left(  Y_{i}=j|v_{i}\right)  -P_{M}%
^{\mathrm{SEVI}}\left(  Y_{i}=j|v_{i}\right)  \right\vert \leq\sum
_{k_{1}<\ldots<k_{M}:\left\{  k_{1},\ldots,k_{M}\right\}  \subseteq
\mathcal{J}_{-j}}\frac{1}{1+\sum_{k\in\left\{  k_{1},\ldots,k_{\ell}\right\}
}\exp\left(  -v_{ik}+v_{ij}\right)  } \label{equ: upper bound}%
\end{equation}
where
\[
P_{M}^{\mathrm{SEVI}}\left(  Y_{i}=j|v_{i}\right)  =1+\sum_{\ell=1}%
^{M-1}\left(  -1\right)  ^{\ell}\sum_{k_{1}<\ldots<k_{\ell}:\left\{
k_{1},\ldots,k_{\ell}\right\}  \subseteq\mathcal{J}_{-j}}\frac{1}{1+\sum
_{k\in\left\{  k_{1},\ldots,k_{\ell}\right\}  }\exp\left(  -v_{ik}%
+v_{ij}\right)  }.
\]
This allows us to approximate $P^{\mathrm{SEVI}}\left(  Y_{i}=j|v_{i}\right)
$ by $P_{M}^{\mathrm{SEVI}}\left(  Y_{i}=j|v_{i}\right)  $ with a
well-controlled approximation error. Note that $P_{M}^{\mathrm{SEVI}}\left(
Y_{i}=j|v_{i}\right)  $ takes the same form as $P^{\mathrm{SEVI}}\left(
Y_{i}=j|v_{i}\right)  $ but involves subsets of at most $M-1$ alternatives,
and so it is computationally less intensive when $M$ is smaller than $J.$ To
achieve a desired level of precision, we can choose $M$ large enough and stop
iterating over subsets of more than $M-1$ alternatives. Combining Gosper's
hack with an early stopping makes it computationally easy to estimate a SEVI
model for $J$ as large as 14. Section \ref{Sec: time} reports and compares the
computation times for various models using simulated data with typical sample
size and attribute numbers. Section \ref{Sec:Emp} contains a comparison of
computation times for real data applications.\footnote{By leveraging multiple
processors running in parallel to enumerate choice subsets, the feasible size
of $J$ can be significantly increased. Unlike the sequential approach required
for integration in the multinomial probit model, this can be accomplished
simultaneously.}
\end{remark}

\begin{remark}
\label{various_SEVI_models}For the SEVI model with a linear parametrization of
$V_{ij}\left(  \beta_{0}\right)  ,$ if $Z_{i}$ is not present so that
$V_{ij}\left(  \beta_{0}\right)  =X_{ij}\beta_{0,x},$ then the model is called
a conditional SEVI (logit) model. As an extension of the basic model, we can
allow the coefficient on $X_{ij}$ to be alternative specific and so we may
have $V_{ij}\left(  \beta_{0}\right)  =X_{ij}\beta_{0,x,j}.$ On the other
hand, if $X_{ij}$ is not present so that $V_{ij}\left(  \beta_{0}\right)
=Z_{i}\beta_{0,z,j},$ then the model is called a multinomial SEVI (logit)
model. These forms can be combined so that $V_{ij}\left(  \beta_{0}\right)
=\alpha_{j}+Z_{i}\beta_{0,z,j}+X_{ij}\beta_{0,x}+W_{ij}\beta_{0,w,j},$ which
provides the same flexibility as in a standard LEVI (logit) model. As in a
standard multinomial LEVI (logit) model, not all of $\beta_{0,z,j}$'s are
identified, and we need to normalize one of them, say $\beta_{0,z,J},$ to be zero.
\end{remark}

To illustrate the difference between the SEVI and LEVI models, we simulate the
choice probabilities when $V_{ij}\left(  \beta_{0}\right)  =X_{ij}\beta_{0,x}$
(there is no individual-specific characteristic or alternative-specific
constant), and the stochastic utility is iid SEVI or LEVI. We set the number
of covariates $L$ in $X_{ij}$ to be $3$ (there are three attributes for each
alternative) and let $\beta_{0}=\beta_{0,x}=(1,2,1)^{\prime}.$ We consider two
data generating processes (DGP) for the attributes $\left\{  X_{ij}\right\}
$. In the first DGP, for each $i$ and $j,$ we let $X_{ij,\ell}$ be iid
$N(0,\pi^{2}\omega_{j}^{2}/36)$ over $\ell=1,2,3$ where
\begin{equation}
\omega_{j}=\frac{j-\left(  J+1\right)  /2}{\sqrt{\left(  J^{2}-1\right)  /12}%
},\text{ }j=1,\ldots,J.\text{ } \label{omega1}%
\end{equation}
Here, $\left(  \omega_{1},\ldots,\omega_{J}\right)  $ is a just standardized
version of the sequence ($1,2,\ldots,J$). In this case, each attribute
$X_{ij,\ell}$ becomes more dispersed when the alternative label $j$ is closer
to $1$ or $J$. As a result, the mean choice probabilities (i.e., $E\Pr\left(
Y_{i}=j|V_{i}\right)  $) are not all the same across the alternatives.\ In the
second DGP, for each $i$ and $j,$ we let $X_{ij,\ell}$ be iid $N(0,\pi
^{2}/36)$ over $\ell=1,2,3.$ In this case, the attributes have the same
distribution across the alternatives, and the mean choice probabilities are
the same across all alternatives. Under these two DGPs for $\left\{
X_{ij}\right\}  ,$ the variances of the systematic utility $V_{ij}\left(
\beta_{0}\right)  =X_{ij}\beta_{0}$ are $\pi^{2}\omega_{j}^{2}/6$ and $\pi
^{2}/6,$ respectively. These variances are comparable to $\pi^{2}/6,$ the
variance of the stochastic utility.%

%TCIMACRO{\FRAME{fhFU}{4.7859in}{3.9167in}{0pt}{\Qcb{Simulated choice
%probabilities and the average of theoretical choice probabilities against
%alternative labels $(j=1,\ldots,J)$ for a sample of $n=10,000$ individuals
%when $X_{ij,\ell}$ is iid $N(0,\pi^{2}\omega_{j}^{2}/36)$}}%
%{\Qlb{Figure: Simu_theoretical_pro}}{figure_notequal_prob.eps}%
%{\special{ language "Scientific Word";  type "GRAPHIC";
%maintain-aspect-ratio TRUE;  display "USEDEF";  valid_file "F";
%width 4.7859in;  height 3.9167in;  depth 0pt;  original-width 6.8044in;
%original-height 5.5642in;  cropleft "0";  croptop "1";  cropright "1";
%cropbottom "0";  filename '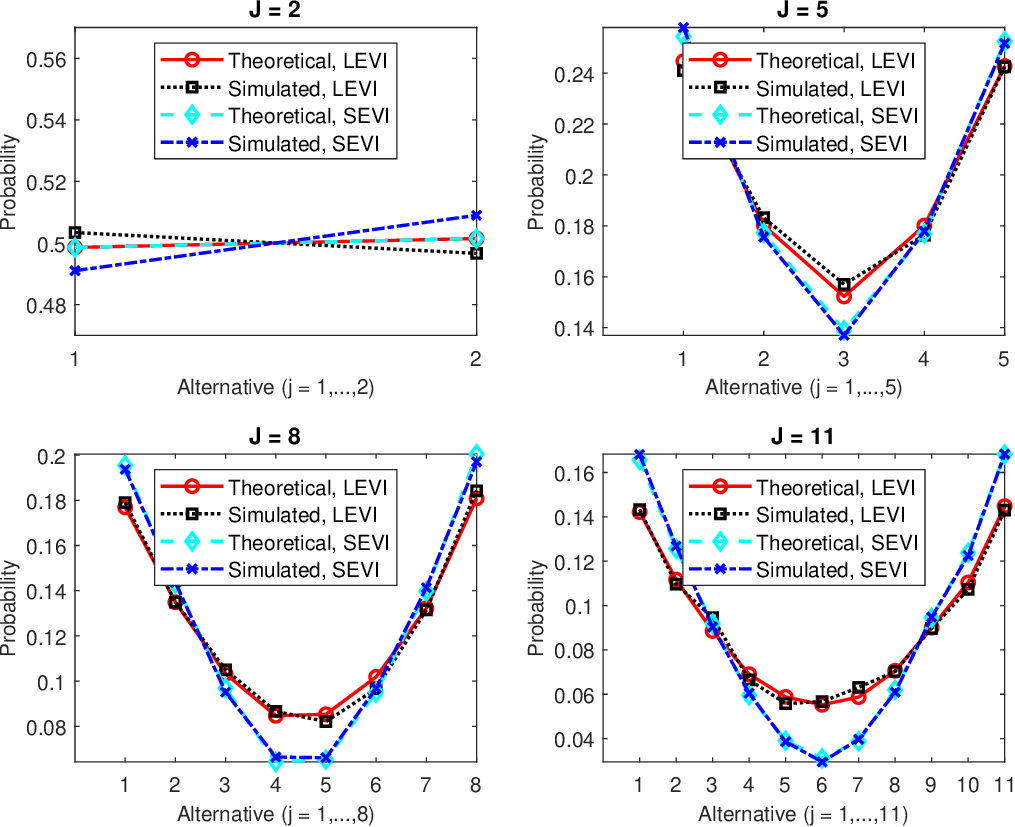';file-properties "XNPEU";}%
%}}%
%BeginExpansion
\begin{figure}[h]%
\centering
\includegraphics[
height=3.9167in,
width=4.7859in
]%
{}%
\caption{Simulated choice probabilities and the average of theoretical choice
probabilities against alternative labels $(j=1,\ldots,J)$ for a sample of
$n=10,000$ individuals when $X_{ij,\ell}$ is iid $N(0,\pi^{2}\omega_{j}%
^{2}/36)$}%
\label{Figure: Simu_theoretical_pro}%
\end{figure}
%EndExpansion

For each model, we compute the choice probabilities according to the
theoretical formulae (i.e., (\ref{SEVI_prob_formula}) under the SEVI and
(\ref{LEVI_prob_formula}) under the LEVI), and take the average of these
choice probabilities over 10,000 individuals. We also simulate the decisions
of 10,000 independent individuals and compute the proportion of individuals
choosing each alternative. Figure \ref{Figure: Simu_theoretical_pro} reports
results for various $J$ values when $X_{ij,\ell}$ $\thicksim$ iid $N(0,\pi
^{2}\omega_{j}^{2}/36).$ Not surprisingly, since the sample size is large, the
average theoretical choice probabilities match the simulated choice
proportions well for both the SEVI and LEVI models. When $J=2,$ the
theoretical choice probabilities are the same across the two models. With only
two alternatives, the simulated choice shares are very close across the two
models, but not identical due to simulation errors. When $J>2,$ for
alternatives with relatively higher choice probabilities under the LEVI, the
SEVI model assigns a higher choice probability than the LEVI model. On the
other hand, when $J>2,$ for alternatives with a relatively lower choice
probability under the LEVI, the SEVI model assigns a lower choice probability
than the LEVI model. The difference between the choice probabilities across
the two models grows with $J.$ The choice probability in a SEVI model is
closer to the \textquotedblleft arg max\textquotedblright\ function than that
in a LEVI model. In this context, a SEVI model leans more towards embodying
the concept of \textquotedblleft the winner takes it all\textquotedblright%
\ than a LEVI model. More specifically, given the same set of systematic
utilities, the probability of choosing the alternative with the maximum
systematic utility in a SEVI model tends to be larger than that in a LEVI model.

Next, we report the results when $X_{ij,\ell}$ is iid $N(0,\pi^{2}/36)$, but
we use a different type of plot. Figure
\ref{Figure: Simu_theoretical_pro_DIFF} provides a scatter plot of the ratio
of the choice probabilities\ in the SEVI and LEVI models (i.e.,
$P^{\mathrm{SEVI}}/P^{\mathrm{LEVI}})$ against $P^{\mathrm{LEVI}}$. Instead of
examining the average choice probabilities as done in Figure
\ref{Figure: Simu_theoretical_pro}, we consider the choice probability of
choosing a specific alternative (here, we use alternative 1 without loss of
generality) for each realization of the systematic utilities. Figure
\ref{Figure: Simu_theoretical_pro_DIFF2} provides a supplementary
illustration, offering a closer examination of certain details.%

%TCIMACRO{\FRAME{fhFU}{4.8905in}{4.0041in}{0pt}{\Qcb{Scatter plot of the ratio
%of choice probabilities in the SEVI and LEVI models
%($P^{\QTR{rm}{SEVI}}/P^{\QTR{rm}{LEVI}})$ against the choice probability in
%the LEVI model ($P^{\QTR{rm}{LEVI}})$ with identical systematic utilities when
%$X_{ij,\ell}$ is iid $N(0,\pi^{2}/36)$}}%
%{\Qlb{Figure: Simu_theoretical_pro_DIFF}}{figure_equal_prob_normal_diff.eps}%
%{\special{ language "Scientific Word";  type "GRAPHIC";
%maintain-aspect-ratio TRUE;  display "USEDEF";  valid_file "F";
%width 4.8905in;  height 4.0041in;  depth 0pt;  original-width 6.7343in;
%original-height 5.5063in;  cropleft "0";  croptop "1";  cropright "1";
%cropbottom "0";
%filename '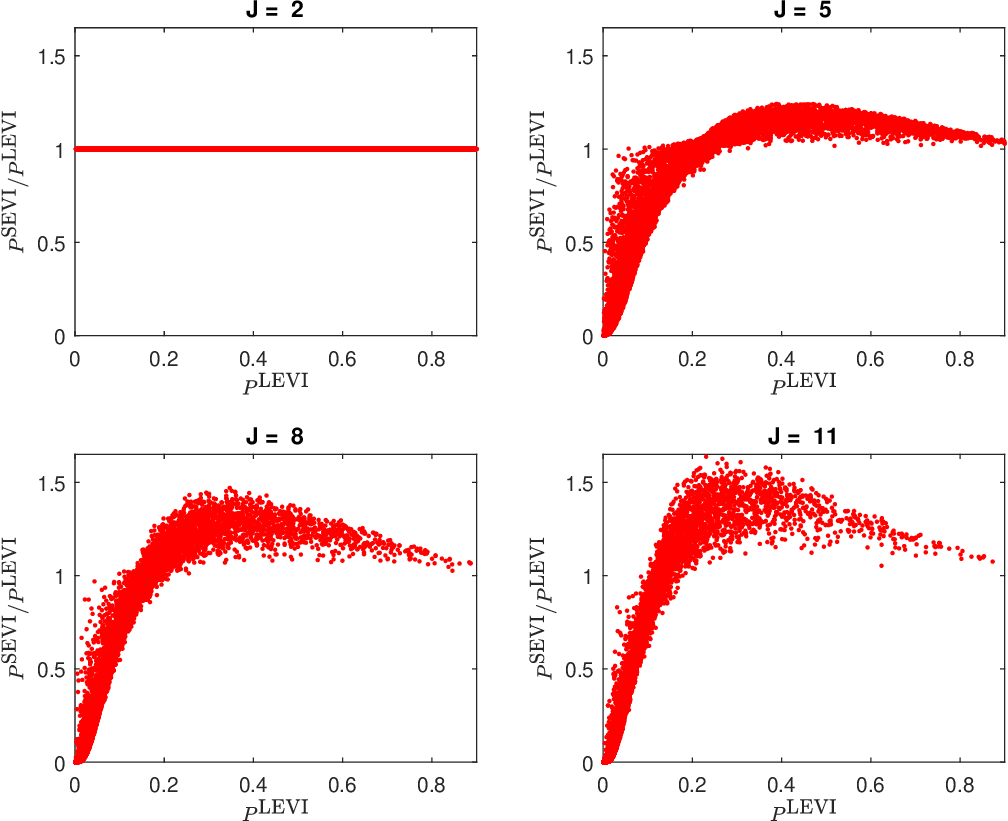';file-properties "XNPEU";}}}%
%BeginExpansion
\begin{figure}[h]%
\centering
\includegraphics[
height=4.0041in,
width=4.8905in
]%
{}%
\caption{Scatter plot of the ratio of choice probabilities in the SEVI and
LEVI models ($P^{\mathrm{SEVI}}/P^{\mathrm{LEVI}})$ against the choice
probability in the LEVI model ($P^{\mathrm{LEVI}})$ with identical systematic
utilities when $X_{ij,\ell}$ is iid $N(0,\pi^{2}/36)$}%
\label{Figure: Simu_theoretical_pro_DIFF}%
\end{figure}
%EndExpansion
In both Figure \ref{Figure: Simu_theoretical_pro_DIFF} and Figure
\ref{Figure: Simu_theoretical_pro_DIFF2}, the systematic utilities are held
constant for each data point with either LEVI errors or SEVI errors. We note
from these two figures that when $J>2$ and $P^{\mathrm{LEVI}}$ is low,
$P^{\mathrm{SEVI}}$ can be more than 50 percent lower. For example, when there
are five alternatives ($J=5$) and $P^{\mathrm{LEVI}}=0.1,$ $P^{\mathrm{SEVI}}$
can be as low as $0.05$. The LEVI probability can be roughly twice as large as
the SEVI probability. Thus, from a practical point of view, a choice between
the two models can have an enormous impact on predicting a lift for a minor
brand with a market share between 5\% and 10\%. On the other hand, when $J>2$
and $P^{\mathrm{LEVI}}$ is high, $P^{\mathrm{SEVI}}$ tends to be higher. For
example, when $J=5$ and $P^{\mathrm{LEVI}}=0.4,$ $P^{\mathrm{SEVI}}$ can be 20
percent higher; when $J=11$ and $P^{\mathrm{LEVI}}=0.3,$ $P^{\mathrm{SEVI}}$
can be 50 percent higher.\ The patterns we observe here are similar to those
in Figure \ref{Figure: Example_with_5_options}. The difference in the choice
probabilities across the two models can be important in various contexts,
ranging from antitrust actions to production planning.
%TCIMACRO{\FRAME{fhFU}{4.9312in}{3.9799in}{0pt}{\Qcb{Scatter plot of the choice
%probability in the SEVI model versus that in the LEVI model with left and
%right panels having different ranges for the horizontal axis ($X_{ij,\ell}$ is
%iid $N(0,\pi^{2}/36))$}}{\Qlb{Figure: Simu_theoretical_pro_DIFF2}%
%}{figure_equal_prob_normal_diff2.eps}{\special{ language "Scientific Word";
%type "GRAPHIC";  maintain-aspect-ratio TRUE;  display "USEDEF";
%valid_file "F";  width 4.9312in;  height 3.9799in;  depth 0pt;
%original-width 6.7793in;  original-height 5.4665in;  cropleft "0";
%croptop "1";  cropright "1";  cropbottom "0";
%filename '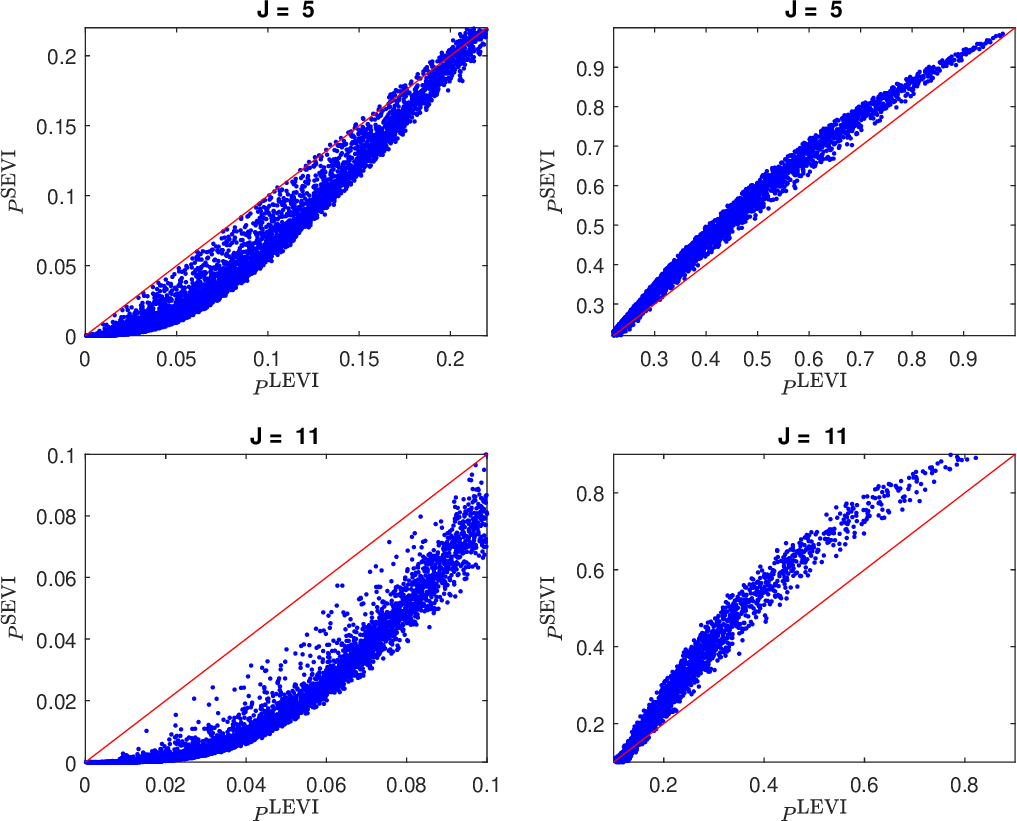';file-properties "XNPEU";}}}%
%BeginExpansion
\begin{figure}[h]%
\centering
\includegraphics[
height=3.9799in,
width=4.9312in
]%
{}%
\caption{Scatter plot of the choice probability in the SEVI model versus that
in the LEVI model with left and right panels having different ranges for the
horizontal axis ($X_{ij,\ell}$ is iid $N(0,\pi^{2}/36))$}%
\label{Figure: Simu_theoretical_pro_DIFF2}%
\end{figure}
%EndExpansion

Figure \ref{Figure: Simu_theoretical_pro_DIFF2} shows that when $J=5$ and the
LEVI probability is below 0.20, the corresponding SEVI probability is lower.
When the LEVI probability is above 0.20, the corresponding SEVI probability is
higher. Similar patterns are found for other values of $J.$ These qualitative
patterns are consistent with what we find in Figure
\ref{Figure: Simu_theoretical_pro}. The vector of choice probabilities in the
SEVI model (i.e., $[P^{\mathrm{SEVI}}(Y=1),\ldots,P^{\mathrm{SEVI}}\left(
Y=J\right)  ]$) is closer to the vertices of the probability simplex than that
in the LEVI model. In other words, $[P^{\mathrm{SEVI}}(Y=1),\ldots
,P^{\mathrm{SEVI}}\left(  Y=J\right)  ]$ is closer to $\arg\max\left(
v_{1},\ldots,v_{J}\right)  :=(0,\ldots,1,\ldots,0),$ a $0$-$1$ vector with the
$j$-th element equal to 1 if and only if $v_{j}$ is the maximum value of
$\left(  v_{1},\ldots,v_{J}\right)  .$\footnote{For ease of discussion, we
assume that there is a unique maximum among $v_{1},\ldots,v_{J}.$ The 0-1
vector representation of $\arg\max$ is the so-called one-hot encoding of
$\arg\max.$}

We now offer insights into why the probability vector $[P^{\mathrm{SEVI}%
}(Y=1),\ldots,P^{\mathrm{SEVI}}\left(  Y=J\right)  ]$ is closer to the one-hot
encoding of $\arg\max\left(  v_{1},\ldots,v_{J}\right)  $ than
$[P^{\mathrm{LEVI}}(Y=1),\ldots,P^{\mathrm{LEVI}}\left(  Y=J\right)  ].$\ When
$J\geq3,$ the top two values among $J$ independent draws from the stochastic
utility distribution are more likely to be above or equal to the median of
this distribution rather than below it. For a draw from the LEVI distribution,
conditional on being above the median, the mean is 1.545 and the standard
deviation is 1.079. In contrast, for the SEVI case, these values are 0.391 and
0.500, respectively. Hence, the absolute difference between any two
above-the-median draws from the LEVI distribution is expected to be larger
than that for the SEVI distribution. In fact, Monte Carlo simulations show
that the former first-order stochastically dominates the latter (see Figure
\ref{Figure: cdf_diff_two_above_median_draws}). This implies that two
above-the-median draws from the SEVI distribution are more likely to be closer
to each other than those from the LEVI distribution. Therefore, with more than
two alternatives, it is less probable for the stochastic utility components
under the SEVI to make a difference in determining which alternative has the
highest total utility. As a result, in the presence of more than two
alternatives, the choice probabilities under the SEVI are better aligned with
the systematic utilities than those under the LEVI. In other words, in a SEVI
model with $J\geq3$, the observed choice is more likely to correspond to the
choice that maximizes the systematic utility than in a LEVI model. See Figure
\ref{figure; mean_differences_levi_norm_sev} for an additional illustration.

\subsection{Algebraic Relation between SEVI and LEVI Choice
Probabilities\label{Subsec: relation}}

To relate the choice probabilities under SEVI and LEVI assumptions, we
consider a general latent utility model $U_{ij}=V_{ij}+\varepsilon_{ij},$
$j=1,\ldots,J$ where $\varepsilon_{ij}$ is iid across $j=1,\ldots,J$ and may
follow either SEVI or LEVI, $\varepsilon_{i}=\left(  \varepsilon_{i1}%
,\ldots,\varepsilon_{iJ}\right)  $ is independent of $V_{i}=(V_{i1}%
,\ldots,V_{iJ})$. For any alternative $j$ and any subset of alternatives
$\mathcal{S}_{\ell}$ that does not include alternative $j,$ we let
\begin{align*}
P^{\mathrm{SEVI}}(Y_{i}  &  =j|\mathcal{S}_{\ell}\cup j,V_{i}=v_{i})=\Pr
(Y_{i}=j|\mathcal{S}_{\ell}\cup j,V_{i}=v_{i},\varepsilon_{im}\thicksim
iid\text{ }\mathrm{SEVI}\text{ over }m\in\mathcal{J}),\\
P^{\mathrm{LEVI}}(Y_{i}  &  =j|\mathcal{S}_{\ell}\cup j,V_{i}=v_{i})=\Pr
(Y_{i}=j|\mathcal{S}_{\ell}\cup j,V_{i}=v_{i},\varepsilon_{im}\thicksim
iid\text{ }\mathrm{LEVI}\text{ over }m\in\mathcal{J}),
\end{align*}
which are the probabilities of individual $i$ choosing alternative $j$ from
the subset $\mathcal{S}_{\ell}\cup j:=$ $\mathcal{S}_{\ell}\cup\left\{
j\right\}  $ under SEVI and LEVI errors, respectively. It is important to
point out that in the above definition, the probabilities depend on the choice
subset under consideration.

\begin{proposition}
\label{prop: mirror image}The choice probabilities under the SEVI and LEVI
satisfy the following relations:
\begin{align*}
P^{\mathrm{SEVI}}(Y_{i}  &  =j|\mathcal{J},V_{i}=v_{i})\\
&  =1+\sum_{\ell=1}^{J-1}\left(  -1\right)  ^{\ell}\sum_{k_{1}<\ldots<k_{\ell
}:\text{ }\mathcal{S}_{\ell}=\left\{  k_{1},\ldots,k_{\ell}\right\}
\subseteq\mathcal{J}_{-j}}P^{\mathrm{LEVI}}\left(  Y_{i}=j|\mathcal{S}_{\ell
}\cup j,V_{i}=-v_{i}\right)  ;
\end{align*}%
\begin{align*}
P^{\mathrm{LEVI}}(Y_{i}  &  =j|\mathcal{J},V_{i}=v_{i})\\
&  =1+\sum_{\ell=1}^{J-1}\left(  -1\right)  ^{\ell}\sum_{k_{1}<\ldots<k_{\ell
}:\text{ }\mathcal{S}_{\ell}=\left\{  k_{1},\ldots,k_{\ell}\right\}
\subseteq\mathcal{J}_{-j}}P^{\mathrm{SEVI}}\left(  Y_{i}=j|\mathcal{S}_{\ell
}\cup j,V_{i}=-v_{i}\right)  .
\end{align*}

\end{proposition}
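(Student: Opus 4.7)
The plan is to derive both identities from a single combinatorial template --- Poincar\'e's inclusion-exclusion --- together with the mirror-image distributional identity $-\mathrm{LEVI}\stackrel{d}{=}\mathrm{SEVI}$.

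For the first identity, the work is essentially already done in Theorem~\ref{Theorem: choice_prob_under_SEV}. I would observe that each summand
\[
\frac{\exp(-v_{ij})}{\exp(-v_{ij})+\sum_{k\in \mathcal{S}_{\ell}}\exp(-v_{ik})}
\]
appearing on the right-hand side of Theorem~\ref{Theorem: choice_prob_under_SEV} is, by inspection, exactly the closed-form LEVI choice probability $P^{\mathrm{LEVI}}(Y_i=j\mid \mathcal{S}_\ell\cup j,\,V_i=-v_i)$ on the reduced menu $\mathcal{S}_\ell\cup\{j\}$ evaluated at the sign-flipped systematic utilities $-v_{im}$. Substituting this recognition term by term into the formula of Theorem~\ref{Theorem: choice_prob_under_SEV} yields the first identity with no further computation.

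For the second identity, Theorem~\ref{Theorem: choice_prob_under_SEV} is not directly applicable, so I would rerun its argument with LEVI errors in place of SEVI errors. Starting from $P^{\mathrm{LEVI}}(Y_i=j\mid \mathcal{J},v_i)=1-\Pr\bigl(\bigcup_{k\neq j}\{\tilde U_{ij}<\tilde U_{ik}\}\bigr)$ and applying Poincar\'e's formula yields
\[
P^{\mathrm{LEVI}}(Y_i=j\mid \mathcal{J},v_i)=1+\sum_{\ell=1}^{J-1}(-1)^{\ell}\sum_{\mathcal{S}_\ell\subseteq \mathcal{J}_{-j}}\Pr\bigl(\tilde U_{ij}<\tilde U_{ik}\ \forall k\in \mathcal{S}_\ell\bigr).
\]
The summand is the probability that $j$ attains the minimum total utility on $\{j\}\cup\mathcal{S}_\ell$ under iid LEVI errors. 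I would then invoke the coordinate-wise mirror identity: if $\tilde\varepsilon_{im}$ are iid LEVI then $\varepsilon_{im}:=-\tilde\varepsilon_{im}$ are iid SEVI, and the event $\{\tilde U_{ij}<\tilde U_{ik}\}$ becomes $\{-v_{ij}+\varepsilon_{ij}>-v_{ik}+\varepsilon_{ik}\}$. Hence the summand equals $P^{\mathrm{SEVI}}(Y_i=j\mid \mathcal{S}_\ell\cup j,\,V_i=-v_i)$, and the second identity follows.

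The only subtle point --- not really an obstacle --- is verifying that restricting attention to the submenu $\mathcal{S}_\ell\cup\{j\}$ is compatible with the iid assumption on the errors. Because the errors are independent across alternatives, the joint law of $\{U_{im}\}_{m\in \mathcal{S}_\ell\cup j}$ genuinely coincides with the utility vector of a size-$(\ell+1)$ choice problem with systematic utilities $(v_{im})_{m\in \mathcal{S}_\ell\cup j}$, so the probabilities on the right-hand side are bona fide submenu choice probabilities. The same Poincar\'e-plus-mirror recipe, run with the roles of LEVI and SEVI exchanged, furnishes an alternative derivation of the first identity that does not rely on Theorem~\ref{Theorem: choice_prob_under_SEV}, giving a fully symmetric treatment of the proposition.
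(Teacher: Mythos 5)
Your proposal is correct and follows essentially the same route as the paper: the first identity is read off from Theorem \ref{Theorem: choice_prob_under_SEV} by recognizing each summand as a submenu LEVI probability at $-v_i$, and the second is obtained by combining the mirror identity $-\mathrm{LEVI}\overset{d}{=}\mathrm{SEVI}$ with inclusion--exclusion (the paper merely applies the mirror step before, rather than after, Poincar\'e's formula, which is immaterial).
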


The proposition shows an intrinsic connection between the choice probabilities
under the two assumptions of the stochastic utility. Note that the right-hand
side of each relation involves the probabilities of choosing alternative $j$
over all subsets of alternatives that contain alternative $j$ but under a
\textquotedblleft reflected\textquotedblright\ model for $V_{i}$ ($V_{i}%
=v_{i}$ on the left-hand side becomes $V_{i}=-v_{i}$ on the right-hand side).

By taking the difference between the two equations in Proposition
\ref{prop: mirror image}, we have%
\begin{align*}
P^{\mathrm{SEVI}}(Y_{i}  &  =j|\mathcal{J},V_{i}=v_{i})-P^{\mathrm{LEVI}%
}(Y_{i}=j|\mathcal{J},V_{i}=v_{i})\\
&  =\sum_{\ell=2}^{J-1}\left(  -1\right)  ^{\ell-1}\sum_{k_{1}<\ldots<k_{\ell
}:\text{ }\mathcal{S}_{\ell}=\left\{  k_{1},\ldots,k_{\ell}\right\}
\subseteq\mathcal{J}_{-j}}\left[  P^{\Delta}\left(  Y_{i}=j|\mathcal{S}_{\ell
}\cup j,V_{i}=-v_{i}\right)  \right]  ,
\end{align*}
where
\[
P^{\Delta}\left(  Y_{i}=j|\mathcal{S}_{\ell}\cup j,V_{i}=-v_{i}\right)
=P^{\mathrm{SEVI}}\left(  Y_{i}=j|\mathcal{S}_{\ell}\cup j,V_{i}%
=-v_{i}\right)  -P^{\mathrm{LEVI}}\left(  Y_{i}=j|\mathcal{S}_{\ell}\cup
j,V_{i}=-v_{i}\right)  .
\]
So, the difference in the SEVI and LEVI probabilities of choosing alternative
$j$ from the entire choice set $\mathcal{J}$ is an aggregation of the
probability differences from choosing alternative $j$ from all subsets of
alternatives that contain $j$ but with flipped systematic utilities. This is a
recursive relationship, as each probability difference on the right-hand side
can be represented in the same fashion as the left-hand side. An intriguing
question emerges: can the difference in choice probabilities be fully captured
by a simple index that measures the concentration of the systematic utilities?
Given the intricate recursive relationship described above and the highly
nonlinear dependence of the probability difference on systematic utilities, as
shown in Figure \ref{Figure: contourDIFF}, it appears unlikely that commonly
used measures like the Gini coefficient or the Herfindahl-Hirschman index
would be sufficient for this task. However, in empirical practice,
particularly for moderately sized $J$, these measures are likely to prove
useful in predicting differences in choice probabilities.

The second relation in Proposition \ref{prop: mirror image} can be presented
equivalently as follows:
\begin{align}
P^{\mathrm{LEVI}}(Y_{i}  &  =j|\mathcal{J},V_{i}=-v_{i})\nonumber\\
&  =1+\sum_{\ell=1}^{J-1}\left(  -1\right)  ^{\ell}\sum_{k_{1}<\ldots<k_{\ell
}:\mathcal{S}_{\ell}:=\left\{  k_{1},\ldots,k_{\ell}\right\}  \subseteq
\mathcal{J}_{-j}}P^{\mathrm{SEVI}}\left(  Y_{i}=j|\mathcal{S}_{\ell}\cup
j,V_{i}=v_{i}\right)  . \label{Block_Marschak1}%
\end{align}
Since $P^{\mathrm{LEVI}}(Y_{i}=j|\mathcal{J},V_{i}=-v_{i})$ is nonnegative,
the sum in (\ref{Block_Marschak1}) is also nonnegative. This sum is a special
Block-Marschak polynomial (%
%TCIMACRO{\TeXButton{\citet{BlockMarschak1959}}{\citet{BlockMarschak1959}}}%
%BeginExpansion
\citet{BlockMarschak1959}%
%EndExpansion
). In the presence of a finite number of alternatives,\ it is well-known in
the stochastic choice theory that the nonnegativity of Block-Marschak
polynomials is both necessary and sufficient for a random choice rule to have
an additive random utility representation. For a detailed discussion, see, for
example, Chapters 1 and 2 of the recent book by\
%TCIMACRO{\TeXButton{Strzalecki2023}{\citet{Strzalecki2023}}}%
%BeginExpansion
\citet{Strzalecki2023}%
%EndExpansion
. In the present setting, the choice probabilities under SEVI are derived from
random utility maximization and as such, they trivially have an additive
random utility representation. The significance of (\ref{Block_Marschak1}) is
that it provides a simple characterization of a special Block-Marschak polynomial.

More generally, let $j$ be a target alternative of interest and
\[
\mathcal{D}_{j}:=\left\{  \left(  d_{1},\ldots,d_{\rho},j\right)
:d_{1}<\ldots<d_{\rho}\text{ and }\left(  d_{1}<\ldots<d_{\rho}\right)
\subseteq\mathcal{J}_{-j}\right\}
\]
be a subset of alternatives that \emph{includes} alternative $j.$ Then, the
(general) Block-Marschak polynomial for our probability choice rule under SEVI
errors is defined to be
\begin{align*}
&  \mathcal{K}^{\mathrm{SEVI}}(j,\mathcal{D}_{j})\\
&  :=\Pr\left(  Y_{i}=j|\mathcal{D}_{j},V_{i}=v_{i},\varepsilon_{im}\thicksim
iid\text{ }\mathrm{SEVI}\text{ over }m\in\mathcal{J}\right) \\
&  +\sum_{\ell=1}^{J-\left\vert \mathcal{D}_{j}\right\vert }\left(  -1\right)
^{\ell}\sum_{k_{1}<\ldots<k_{\ell}:\mathcal{S}_{\ell}:=\left\{  k_{1}%
,\ldots,k_{\ell}\right\}  \subseteq\mathcal{J}\backslash\mathcal{D}_{j}%
}P^{\mathrm{SEVI}}\left(  Y_{i}=j|\mathcal{S}_{\ell}\cup\mathcal{D}_{j}%
,V_{i}=v_{i}\right)  .
\end{align*}
Note that when $\mathcal{D}_{j}=\left\{  j\right\}  ,$ the first term in
$\mathcal{K}^{\mathrm{SEVI}}(j,\mathcal{D}_{j})$ becomes $1$ and
$\mathcal{K}^{\mathrm{SEVI}}(j,\mathcal{D}_{j})$ reduces to the special
Block-Marschak polynomial in (\ref{Block_Marschak1}). We can show that
$\mathcal{K}^{\mathrm{SEVI}}(j,\mathcal{D}_{j})$ is the probability that the
$j$-th alternative dominates all other alternatives in $\mathcal{D}_{j}$ but
is dominated by the alternatives that are not in $\mathcal{D}_{j}.$ See, for
example,
%TCIMACRO{\TeXButton{\citet{BarbaraPattanaik1986}}{\citet{BarbaraPattanaik1986}%
%}}%
%BeginExpansion
\citet{BarbaraPattanaik1986}%
%EndExpansion
. Hence, $\mathcal{K}^{\mathrm{SEVI}}(j,\mathcal{D}_{j})\geq0$ for all
$\left(  j,\mathcal{D}_{j}\right)  ,$ which is consistent with the
nonnegativity of Block-Marschak polynomials for any additive RUM model.

The Block-Marschak polynomial $\mathcal{K}^{\mathrm{SEVI}}(j,\mathcal{D}_{j})$
is precisely the probability relevant for rank-ordered data. While using the
SEVI model for rank-ordered data necessitates additional research, the formula
for $\mathcal{K}^{\mathrm{SEVI}}(j,\mathcal{D}_{j})$ provides a good starting point.

\subsection{Patterns of Substitution and the Lack of IIA\label{Subsec: IIA}}

In this subsection, we compute $\partial P^{\mathrm{SEVI}}\left(
Y_{i}=k|v_{i}\right)  /\partial v_{ij}$, the effect of an infinitesimal change
in alternative $j$'s systematic utility (as perceived by individual $i$) on
the probability of individual $i$ choosing alternative $k.$ This computation
is essential for determining $\partial P^{\mathrm{SEVI}}\left(  Y_{i}%
=k|v_{i}\right)  /\partial X_{ij},$ the effect of changing alternative $j$'s
attributes on the probability of choosing alternative $k.$

For notational economy, we let $P_{ik}^{\mathrm{SEVI}}:=P^{\mathrm{SEVI}%
}\left(  Y_{i}=k|v_{i}\right)  .$ We have, using Proposition
\ref{prop: mirror image} and\ some simple calculations,%
\begin{equation}
\frac{\partial P_{ij}^{\mathrm{SEVI}}}{\partial v_{ij}}=-\sum_{\ell=1}%
^{J-1}\left(  -1\right)  ^{\ell}\sum_{k_{1}<\ldots<k_{\ell}:\mathcal{S}_{\ell
}=\left\{  k_{1},\ldots,k_{\ell}\right\}  \subseteq\mathcal{J}_{-j}%
}P^{\mathrm{LEVI}}(Y_{i}=j|\mathcal{S}_{\ell}\cup j,-v_{i})P^{\mathrm{LEVI}%
}(Y_{i}\neq j|\mathcal{S}_{\ell}\cup j,-v_{i}), \label{substitution1}%
\end{equation}
where for $\mathcal{S}_{\ell-1}=\left\{  k_{1},\ldots,k_{\ell-1}\right\}
\subseteq\mathcal{J}_{-j},$
\begin{align*}
P^{\mathrm{LEVI}}(Y_{i}  &  =j|\mathcal{S}_{\ell}\cup j,-v_{i})=\frac
{\exp(-v_{ij})}{\sum_{m\in\mathcal{S}_{\ell}\cup j}\exp\left(  -v_{im}\right)
},\\
P^{\mathrm{LEVI}}(Y_{i}  &  \neq j|\mathcal{S}_{\ell}\cup j,-v_{i}%
)=1-P^{\mathrm{LEVI}}(Y_{i}=j|\mathcal{S}_{\ell}\cup j,-v_{i}).
\end{align*}
Furthermore, for $k\neq j,$
\begin{align}
\frac{\partial P_{ij}^{\mathrm{SEVI}}}{\partial v_{ik}}  &  =\sum_{\ell
=1}^{J-1}\left(  -1\right)  ^{\ell}\sum_{k_{1}<\ldots<k_{\ell-1}%
:\mathcal{S}_{\ell-1}=\left\{  k_{1},\ldots,k_{\ell-1}\right\}  \subseteq
\mathcal{J}_{-\left(  j,k\right)  }}\nonumber\\
&  P^{\mathrm{LEVI}}(Y_{i}\overset{}{=}j|\mathcal{S}_{\ell-1}\cup\left\{
j,k\right\}  ,-v_{i})\cdot P^{\mathrm{LEVI}}(Y_{i}\overset{}{=}k|\mathcal{S}%
_{\ell-1}\cup\left\{  j,k\right\}  ,-v_{i}), \label{substitution2}%
\end{align}
where for $\mathcal{S}_{\ell-1}=\left\{  k_{1},\ldots,k_{\ell-1}\right\}
\subseteq\mathcal{J}_{-\left(  j,k\right)  },$ a subset of $\mathcal{J}$
excluding $j$ and $k,$
\begin{align*}
P^{\mathrm{LEVI}}(Y_{i}  &  =j|\mathcal{S}_{\ell}\cup\left\{  j,k\right\}
,-v_{i})=\frac{\exp(-v_{ij})}{\sum_{m\in\mathcal{S}_{\ell}\cup\left\{
j,k\right\}  }\exp\left(  -v_{im}\right)  },\\
P^{\mathrm{LEVI}}(Y_{i}  &  =k|\mathcal{S}_{\ell}\cup\left\{  j,k\right\}
,-v_{i})=\frac{\exp(-v_{ik})}{\sum_{m\in\mathcal{S}_{\ell}\cup\left\{
j,k\right\}  }\exp\left(  -v_{im}\right)  }.
\end{align*}
In equation (\ref{substitution2}), the first term in the summation (i.e., when
$\ell=1)$ is understood to be
\[
-\frac{\exp(-v_{ij})}{\left[  \exp(-v_{ij})+\exp\left(  -v_{ik}\right)
\right]  }\frac{\exp(-v_{ik})}{\left[  \exp(-v_{ij})+\exp\left(
-v_{ik}\right)  \right]  }.
\]

Equation (\ref{substitution2}) reveals that
\begin{equation}
\frac{\partial P_{ij}^{\mathrm{SEVI}}}{\partial v_{ik}}=\frac{\partial
P_{ik}^{\mathrm{SEVI}}}{\partial v_{ij}}%
\end{equation}
for any $k$ and $j.$ That is, the choice probabilities satisfy a symmetry
property analogous to Slutsky symmetry in continuous choice models.

The expression of $\partial P_{ij}^{\mathrm{SEVI}}/\partial v_{ik}$ can be
used to derive the score function, which is essential for finding the MLE and
developing the asymptotic theory. It can also be employed to examine whether
the IIA property holds in a SEVI model when $J\geq3$. For the IIA to hold, the
cross (semi)elasticity $\partial\log P_{ij}^{\mathrm{SEVI}}/\partial v_{ik}$
should be the same for all $j\neq k.$ That is,
\[
\frac{\partial\log P_{ij_{1}}^{\mathrm{SEVI}}}{\partial v_{ik}}=\frac
{\partial\log P_{ij_{2}}^{\mathrm{SEVI}}}{\partial v_{ik}}%
\]
for all $j_{1}\neq j_{2}\neq k.$ However, this condition is not met, as is
clear from (\ref{substitution2}). Alternatively, for the IIA\ to hold, the
probability ratio $P_{ij_{1}}^{\mathrm{SEVI}}/P_{ij_{2}}^{\mathrm{SEVI}}$
should not depend on the systematic utilities of alternatives other than
alternatives $j_{1}$ and $j_{2}.$ However, it is evident that the probability
ratio \emph{does} depend on the systematic utilities of all alternatives,
indicating that the IIA does not hold.

To illustrate the violation of the IIA in a SEVI model with $J\geq3$, we start
with a scenario featuring two initial alternatives: individuals must choose
between alternative 1 and alternative 2. In this context, the choice between
the SEVI or LEVI does not matter. For a fixed scale value $s_{1}$ (we set
$v_{1}$ to $\log1$ so that $s_{1}=\exp(v_{1})=1$)$,$\footnote{Following
%TCIMACRO{\TeXButton{\citet{YELLOTT1977}}{\citet{YELLOTT1977}}}%
%BeginExpansion
\citet{YELLOTT1977}%
%EndExpansion
), we refer to the exponentiated systematic utility for alternative $j$
$\left(  \text{i.e.},\text{ }\exp\left(  v_{j}\right)  \right)  $ as the scale
value ($s_{j}$) for this alternative.} we examine a range of possible values
for $s_{2}:=\exp\left(  v_{2}\right)  \in\left\{  0.01:0.01:3\right\}  ,$ a
grid of values from 0.01 to 3 with increment 0.01. Each combination of
$\left(  s_{1},s_{2}\right)  $ corresponds to a probability ratio:
\[
\frac{P(Y=2|v_{1},v_{2})}{P(Y=1|v_{1},v_{2})}=\frac{P^{\mathrm{SEVI}%
}(Y=2|v_{1},v_{2})}{P^{\mathrm{SEVI}}(Y=1|v_{1},v_{2})}=\frac{P^{\mathrm{LEVI}%
}(Y=2|v_{1},v_{2})}{P^{\mathrm{LEVI}}(Y=1|v_{1},v_{2})}=\frac{\exp(v_{2}%
)}{\exp(v_{1})}=s_{2}.
\]
Next, we introduce a third alternative with its scale value $s_{3}=\exp\left(
v_{3}\right)  \in\left\{  0.01:0.01:3\right\}  $, the same grid for $s_{2}.$
We investigate how the probability ratio of choosing alternative 2 versus
alternative 1 changes. This ratio now depends on whether the stochastic
utilities follow the SEVI or LEVI distribution.

The difference of the ratios under SEVI and LEVI models, denoted as
$D_{\mathrm{PR}},$ is
\begin{align*}
D_{\mathrm{PR}}  &  =\frac{P^{\mathrm{SEVI}}(Y=2|v_{1},v_{2},v_{3}%
)}{P^{\mathrm{SEVI}}(Y=1|v_{1},v_{2},v_{3})}-\frac{P^{\mathrm{LEVI}}%
(Y=2|v_{1},v_{2},v_{3})}{P^{\mathrm{LEVI}}(Y=1|v_{1},v_{2},v_{3})}\\
&  =\frac{P^{\mathrm{SEVI}}(Y=2|v_{1},v_{2},v_{3})}{P^{\mathrm{SEVI}%
}(Y=1|v_{1},v_{2},v_{3})}-\frac{P^{\mathrm{LEVI}}(Y=2|v_{1},v_{2}%
)}{P^{\mathrm{LEVI}}(Y=1|v_{1},v_{2})}\\
&  =\frac{P^{\mathrm{SEVI}}(Y=2|v_{1},v_{2},v_{3})}{P^{\mathrm{SEVI}%
}(Y=1|v_{1},v_{2},v_{3})}-\frac{P^{\mathrm{SEVI}}(Y=2|v_{1},v_{2}%
)}{P^{\mathrm{SEVI}}(Y=1|v_{1},v_{2})}\\
&  =\frac{P^{\mathrm{SEVI}}(Y=2|v_{1},v_{2},v_{3})}{P^{\mathrm{SEVI}%
}(Y=1|v_{1},v_{2},v_{3})}-s_{2}.
\end{align*}
Here we have used properties of the LEVI probabilities. By the third equality
above, $D_{\mathrm{PR}}$ also gauges the extent of IIA violation. With $s_{1}$
set to 1, $D_{\mathrm{PR}}$ is a function of $s_{2}$ and $s_{3}$, so we can
write it as $D_{\mathrm{PR}}(s_{2},s_{3}).$ We present a contour plot of this
function in Figure \ref{Figure: contour_plot_IIA}. To aid interpretation, we
label the horizontal axis as $P(Y=2|v_{1},v_{2})/P(Y=1|v_{1},v_{2})=\exp
(v_{2})=s_{2},$ representing the initial probability ratio with only two
alternatives available.

It is clear from Figure \ref{Figure: contour_plot_IIA} that $D_{\mathrm{PR}%
}(s_{2},s_{3})$ is not a zero function. After the introduction of the third
alternative, the market shares of alternatives 1 and 2 do not decrease
proportionally in the SEVI\ model as what happens in the LEVI model,
indicating a violation of the IIA.

Figure \ref{Figure: contour_plot_IIA} also shows that there are two scenarios
where $D_{\mathrm{PR}}(s_{2},s_{3})$ equals zero or becomes approximately
zero. First, $D_{\mathrm{PR}}(s_{2},s_{3})$ equals zero when alternatives 1
and 2 have the same initial market share (i.e., $P(Y=2|v_{1},v_{2}%
)/P(Y=1|v_{1},v_{2})=s_{2}=1)$. This is expected, as in this case, the first
two alternatives are indistinguishable, and their market shares reduce
proportionally in both the LEVI and SEVI models when the third alternative is
introduced. Second, $D_{\mathrm{PR}}(s_{2},s_{3})$ approaches zero when one of
the alternatives becomes almost completely dominant in the market. This
corresponds to a scenario where either $s_{2}$ is small and $s_{3}$ is very
large, or $s_{3}$ is small and $s_{2}$ is very large. Hence, the difference
between a SEVI model and a LEVI model may not be large when there is clearly a
dominant player in the market who commands almost the entire market, and their
disparities become more apparent when no alternative has nearly all market share.

\begin{figure}[h]
\centering
\includegraphics[
height=2.739in,
width=3.694in
]{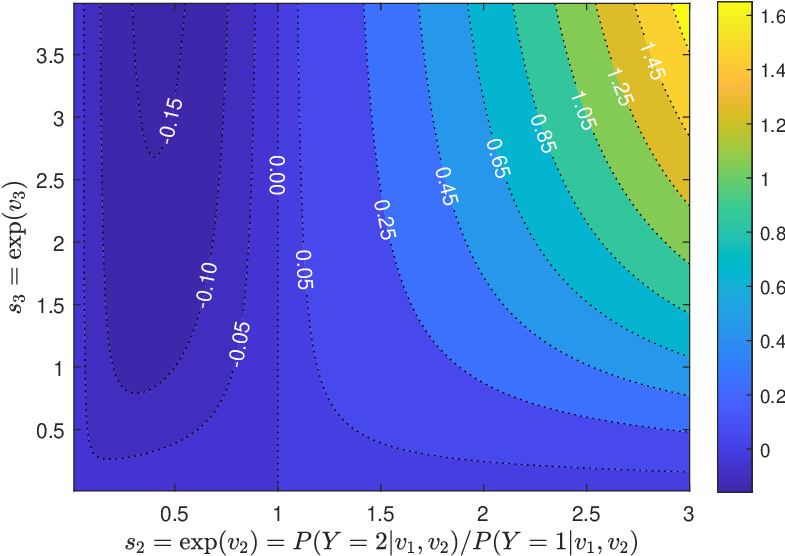}\caption{Contour plot of the difference of the choice
probability ratios $D_{\mathrm{PR}}(s_{2},s_{3})$ between the SEVI and LEVI
models against $\left[  s_{2},s_{3}\right]  $ when $s_{1}=1$}%
\label{Figure: contour_plot_IIA}%
\end{figure}

In an additively separable RUM with $U_{ij}=V_{ij}+\varepsilon_{ij},$ as
considered here, it is well known that the IIA property holds for any $J\geq3$
if and only if the stochastic utility $\varepsilon_{ij}$ follows the LEVI
distribution (cf.,
%TCIMACRO{\TeXButton{\citet{YELLOTT1977}}{\citet{YELLOTT1977}}}%
%BeginExpansion
\citet{YELLOTT1977}%
%EndExpansion
). Under certain conditions that permit an infinite $J,$ it can be shown that
in a random utility model with possibly nonseparable utility functions, the
IIA holds for any $J\geq3$ only if the latent utility $U_{ij}$ can be
represented as $H\left(  V_{ij}+\varepsilon_{ij}\right)  $ for some strictly
increasing function $H\left(  \cdot\right)  $ and LEVI-distributed
$\varepsilon_{ij}.$ This result is a corollary to Theorem 1 of
%TCIMACRO{\TeXButton{Dagsvik2016}{\citet{Dagsvik2016}}}%
%BeginExpansion
\citet{Dagsvik2016}%
%EndExpansion
; see also
%TCIMACRO{\TeXButton{\citet{Lindberg}}{\citet{Lindberg}} }%
%BeginExpansion
\citet{Lindberg}
%EndExpansion
and the references therein.\footnote{When $J$ is finite, the latent utility
presentation of $H\left(  V_{ij}+\varepsilon_{ij}\right)  $ is not necessary
for IIA to hold, but the counterexample given in
%TCIMACRO{\TeXButton{\citet{Lindberg}}{\citet{Lindberg}} }%
%BeginExpansion
\citet{Lindberg}
%EndExpansion
is unlikely to be relevant to most empirical applications.} The SEVI model
does not have the required representation, and this provides another
perspective on the absence of the IIA property in the SEVI model when
$J\geq3.$

One consequence of the lack of IIA is that we cannot consistently estimate the
preference parameters in a SEVI model using the random sampling approach. In
this approach, for each decision maker, we sample a random subset of
alternatives that contains the chosen alternative and treat this subset as if
it were the choice set faced by this decision maker. In the simplest case with
three total alternatives $(J=3)$, the reason for the inconsistency is that%
\[
\frac{P^{\mathrm{SEVI}}(Y=1|v_{1},v_{2},v_{3})}{P^{\mathrm{SEVI}}%
(Y=1|v_{1},v_{2},v_{3})+P^{\mathrm{SEVI}}(Y=2|v_{1},v_{2},v_{3})}\neq
P^{\mathrm{SEVI}}(Y=1|v_{1},v_{2}).
\]
This contrasts with a LEVI-based logit, where such a random sampling procedure
can be used to reduce the number of alternatives without inducing
inconsistency. For a detailed discussion of the LEVI case, see page 64 of
%TCIMACRO{\TeXButton{book_train_2009}{\citet{book_train_2009}}}%
%BeginExpansion
\citet{book_train_2009}%
%EndExpansion
.

We note in passing that IIA also does not hold for standard multinomial probit
models, i.e., the multinomial probit with iid normal random utility
components. See Section \ref{Subsec: MNP_IIA} in the supplementary appendix
for details.

\subsection{The Surplus Function: Compensating Variation and Identification}

A fundamental concept in the discrete choice analysis is the surplus function,
defined as:
\[
W(v)=E\left[  \max_{j\in\mathcal{J}}\left\{  v_{j}+\varepsilon_{j}\right\}
\right]  -E\max_{j\in\mathcal{J}}\left\{  \varepsilon_{j}\right\}  ,
\]
where the expectations are taken with respect to the distribution of $\left\{
\varepsilon_{1},\ldots,\varepsilon_{J}\right\}  .$ In the above, the first
term is the expected value of the maximum utility achieved, given the vector
of the systematic utilities $v=\left(  v_{1},\ldots,v_{J}\right)  ^{\prime}$
and the second term serves as a constant benchmark.\footnote{If these terms
are not well defined, $W(v)$ can always be expressed as $E\left[  \max
_{j\in\mathcal{J}}\left\{  v_{j}+a_{j}\right\}  -\max_{j\in\mathcal{J}%
}\left\{  a_{j}\right\}  \right]  ,$ which is well defined. \ } See, for
example,
%TCIMACRO{\TeXButton{McFadden1981}{\citet{McFadden1981}}}%
%BeginExpansion
\citet{McFadden1981}%
%EndExpansion
. Among other uses, the surplus function can be used to obtain the choice
probabilities. More specifically, the Williams-Daly-Zachary
theorem\footnote{It was termed the WDZ theorem in Section 5.8 of
\citet{McFadden1981}. McFadden refers to \citet{Daly1978} and
\citet{Williams1977} as forerunners of the result.} says that
\[
P_{ij}=\frac{\partial W(v_{i1},\ldots,v_{iJ})}{\partial v_{ij}}.
\]

The specific form of the surplus function $W(v)$ depends on the distribution
of $\varepsilon_{1},\ldots,\varepsilon_{J}.$ When $\varepsilon_{j}\thicksim
iid$ LEVI, it is well-known that
\[
W^{\mathrm{LEVI}}(v)=\ln\left[  \frac{1}{J}\sum_{j=1}^{J}\exp\left(
v_{j}\right)  \right]  ,
\]
taking the form of a \textquotedblleft log-sum\textquotedblright. Here the
superscript \textquotedblleft$\mathrm{LEVI}$\textquotedblright\ signifies that
the surplus function is based on the assumption that $\varepsilon_{j}\thicksim
iid$ LEVI.

The next proposition gives the form of the surplus function when
$\varepsilon_{j}\thicksim iid$ SEVI.

\begin{proposition}
\label{Prop: surplus}If $\varepsilon_{j}\thicksim iid$ $\mathrm{SEVI}$ across
$j=1,\ldots,J,$ then the surplus function, denoted by $W^{\mathrm{SEVI}}(v),$
is
\begin{align}
W^{\mathrm{SEVI}}(v)  &  =\sum_{\ell=1}^{J}\left(  -1\right)  ^{\ell}%
\sum_{k_{1}<k_{2}<\ldots<k_{\ell}:\left\{  k_{1},\ldots,k_{\ell}\right\}
\subseteq\mathcal{J}}\ln\left[  \frac{1}{\ell}\sum_{k\in\left\{  k_{1}%
,\ldots,k_{\ell}\right\}  }\exp\left(  -v_{k}\right)  \right] \nonumber\\
&  =\sum_{\ell=1}^{J}\left(  -1\right)  ^{\ell}\sum_{k_{1}<k_{2}%
<\ldots<k_{\ell}:\left\{  k_{1},\ldots,k_{\ell}\right\}  \subseteq\mathcal{J}%
}W^{\mathrm{LEVI}}(-\left(  v_{k_{1}},\ldots,v_{k_{\ell}}\right)  ),
\label{W_LEVI_negative_v}%
\end{align}
where
\[
W^{\mathrm{LEVI}}(-\left(  v_{k_{1}},\ldots,v_{k_{\ell}}\right)  )=\ln\left[
\frac{1}{\ell}\sum_{k\in\left\{  k_{1},\ldots,k_{\ell}\right\}  }\exp\left(
-v_{k}\right)  \right]  .
\]
Furthermore, $W^{\mathrm{SEVI}}(v)$ is strictly convex.
\end{proposition}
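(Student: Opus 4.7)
My approach rests on the algebraic identity
\[
\max_{j \in \mathcal{J}} x_j \;=\; \sum_{\emptyset \neq S \subseteq \mathcal{J}}(-1)^{|S|+1}\min_{j \in S} x_j,
\]
valid for any real numbers $x_1,\ldots,x_J$ (check by induction on $J$, or by inclusion--exclusion on $\bigcup_j\{X_j>t\}$ followed by integration in $t$). Setting $x_j = v_j+\varepsilon_j$ and taking expectations turns the surplus computation into the evaluation of $E\min_{j\in S}(v_j+\varepsilon_j)$ over all nonempty subsets $S \subseteq \mathcal{J}$.

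Each min-expectation collapses to a log-sum via the SEVI/LEVI mirror relation. Writing $\min_{j\in S}(v_j+\varepsilon_j) = -\max_{j\in S}\bigl((-v_j)+(-\varepsilon_j)\bigr)$ and using that $\{-\varepsilon_j\}$ are iid standard LEVI, the classical Gumbel-max formula $E\max_{j\in S}(a_j+\tilde\varepsilon_j) = \ln\sum_{j\in S}\exp(a_j)+\gamma$ (with $\gamma$ the Euler--Mascheroni constant) gives $E\min_{j\in S}(v_j+\varepsilon_j) = -\ln\sum_{j\in S}\exp(-v_j) - \gamma$. Plugging this into the identity and using $\sum_{S\neq\emptyset}(-1)^{|S|+1}=1$ yields
\[
E\max_j(v_j+\varepsilon_j) \;=\; \sum_{S\neq\emptyset}(-1)^{|S|}\ln\!\sum_{j\in S}\exp(-v_j) \;-\; \gamma.
\]
The same argument at $v=0$ produces $E\max_j\varepsilon_j = \sum_{S\neq\emptyset}(-1)^{|S|}\ln|S| - \gamma$, so the $\gamma$ terms cancel in $W^{\mathrm{SEVI}}(v) = E\max_j(v_j+\varepsilon_j)-E\max_j\varepsilon_j$, leaving the stated formula once subsets are regrouped by cardinality $\ell = |S|$.

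For strict convexity, I would argue as follows. For each realization of $\varepsilon$ the map $v\mapsto\max_j(v_j+\varepsilon_j)$ is a pointwise maximum of $J$ affine functions, hence convex; expectation against the continuous, full-support SEVI density preserves convexity and makes it strict in the following sense: whenever $v^{(1)}-v^{(2)}$ is not a scalar multiple of $\mathbf{1}$, the argmax of $v^{(1)}_j+\varepsilon_j$ differs from that of $v^{(2)}_j+\varepsilon_j$ on a set of positive SEVI probability, so the pointwise Jensen inequality is strict on that set and survives the expectation. The natural caveat is that $W^{\mathrm{SEVI}}(v+c\mathbf{1})=W^{\mathrm{SEVI}}(v)+c$, so strict convexity must be understood modulo constant shifts (equivalently, the Hessian is positive definite on the orthogonal complement of $\mathbf{1}$); this is the same sense in which $W^{\mathrm{LEVI}}$ is strictly convex and is exactly what the subsequent identification argument requires. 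The main obstacle is not conceptual but combinatorial: keeping the alternating signs and $\gamma$ terms aligned so that the cancellation between $E\max_j(v_j+\varepsilon_j)$ and $E\max_j\varepsilon_j$ actually produces the claimed log-sum expression, and properly quotienting out the gauge direction $\mathbf{1}$ when asserting strict convexity.
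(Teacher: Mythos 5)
Your proof is correct, and it takes a genuinely different route from the paper's. The paper decomposes $E\max_{j}(v_{j}+\varepsilon_{j})$ by conditioning on which alternative attains the maximum, writes each conditional mean as an integral $\int z\prod_{k}[1-G(v_{k}-v_{j}+z)]g(z)\,dz$, expands the product by inclusion--exclusion \emph{inside} the integrand, evaluates each resulting integral in closed form (producing the $\gamma$ and log-sum terms), and then reassembles everything by exchanging the order of summation. You instead apply inclusion--exclusion once, at the level of the deterministic max--min identity $\max_{j}x_{j}=\sum_{\emptyset\neq S\subseteq\mathcal{J}}(-1)^{|S|+1}\min_{j\in S}x_{j}$, and dispatch each term with the classical Gumbel log-sum formula after the sign flip $\min_{j\in S}(v_{j}+\varepsilon_{j})=-\max_{j\in S}(-v_{j}+\tilde{\varepsilon}_{j})$ with $\tilde{\varepsilon}_{j}$ iid LEVI. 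Your bookkeeping checks out: $E\min_{j\in S}(v_{j}+\varepsilon_{j})=-\ln\sum_{j\in S}e^{-v_{j}}-\gamma$, the identity $\sum_{\emptyset\neq S}(-1)^{|S|+1}=1$ collects the $\gamma$'s into a single $-\gamma$, this reproduces the paper's intermediate expression for $E\max_{j}(v_{j}+\varepsilon_{j})$ exactly, and subtracting the benchmark at $v=0$ yields the normalized $\ln[\ell^{-1}\sum_{k}e^{-v_{k}}]$ form. Your route is substantially shorter and makes the structure of (\ref{W_LEVI_negative_v}) transparent: the surplus is an alternating sum over subsets of LEVI surpluses at $-v$ precisely because the max is an alternating sum of mins and each min is minus a LEVI max; the paper's route, by contrast, passes through explicit conditional-mean calculations that require the auxiliary integral $\int_{0}^{\infty}\ln(1/s)e^{-s\xi}ds=(\gamma+\ln\xi)/\xi$ and some delicate cancellation. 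On strict convexity, the paper outsources the claim to Theorem 6 of Sorensen (2022), whereas you give a self-contained argument via positive-probability disagreement of the argmaxes; your caveat that $W^{\mathrm{SEVI}}(v+c\mathbf{1})=W^{\mathrm{SEVI}}(v)+c$ forces strict convexity to be read modulo the direction $\mathbf{1}$ is apt --- the unqualified statement cannot hold literally on all of $\mathbb{R}^{J}$, and this normalization is exactly the one the paper adopts (working on $\mathbb{R}^{J-1}\otimes\{0\}$) when it uses the result for identification.
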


Although the surplus function still contains \textquotedblleft
log-sum\textquotedblright, $W^{\mathrm{SEVI}}(v)$ is more complex than
$W^{\mathrm{LEVI}}(v)$. Nevertheless, Proposition \ref{Prop: surplus} gives us
a closed-form expression for $W^{\mathrm{SEVI}}(v)$ and shows that it is
strictly convex. The second representation given in (\ref{W_LEVI_negative_v})
shows how $W^{\mathrm{SEVI}}(v)$ is related to the log-sum over all subsets of
choice alternatives but with the systematic utilities flipped.

As the first application of Proposition \ref{Prop: surplus}, consider a SEVI
model of the form:%
\begin{equation}
U_{ij}=V_{ij}+\varepsilon_{ij}\text{, }V_{ij}=\tilde{V}_{ij}+\left(
\mathcal{I}_{i}-K_{j}\right)  \lambda, \label{SEVI_model_price}%
\end{equation}
where $\varepsilon_{ij}\thicksim iid$ $\mathrm{SEVI}$ for $j=1,2,\ldots,J.$
Here we have introduced an additional covariate $\mathcal{I}_{i}-K_{j}$ as
part of the systematic utility, where $\mathcal{I}_{i}$ is the income of
individual $i$ and $K_{j}$ is the cost/price of alternative $j.$ The
coefficient $\lambda$ on this covariate represents the marginal utility of
income. In this model, the systematic utility that individual $i$ obtains from
alternative $j$ consists of two components: the utility $\left(
\mathcal{I}_{i}-K_{j}\right)  \lambda$ derived from the expenditure on the
numeraire and the utility $\tilde{V}_{ij}$ derived from the characteristics of
alternative $j$. Such a formulation is common in applied welfare analysis
using discrete choice models.\footnote{In this formulation, income does not
influence the choice probabilities. This assumption may be reasonable when
costs associated with the alternatives are negligible compared to income or
where measurement errors in income outweigh the costs. While it is possible to
account for income effects, we do not explore this extension here.}

Now suppose we change the cost/price for the $m$-th alternative from $K_{m}$
to $K_{m}+\Delta_{m}$ while keeping all else the same as before. For each
individual $i,$ the compensating variation $\mathrm{CV}$ is defined implicitly
via the equation below:
\begin{align*}
&  \max_{j\in\mathcal{J}}\left\{  \tilde{V}_{ij}+\left[  \mathcal{I}%
_{i}+\mathrm{CV}-\left(  K_{j}+1\left\{  m=j\right\}  \Delta_{m}\right)
\right]  \lambda+\varepsilon_{ij}\right\} \\
&  =\max_{j\in\mathcal{J}}\left\{  \tilde{V}_{ij}+\left(  \mathcal{I}%
_{i}-K_{j}\right)  \lambda+\varepsilon_{ij}\right\}  ,
\end{align*}
where $1\left\{  \cdot\right\}  $ is the indicator function. That is,
$\mathrm{CV}$ is the additional income that individual $i$ would need to
maintain the same utility when the cost of alternative $m$ changes from
$K_{m}$ to $K_{m}+\Delta_{m}.$ Solving for $\mathrm{CV}$ yields
\[
\mathrm{CV}=\frac{1}{\lambda}\left[  \max\left\{  \tilde{V}_{ij}+\left(
\mathcal{I}_{i}-K_{j}\right)  \lambda+\varepsilon_{ij}\right\}  -\max
_{j}\left\{  \tilde{V}_{ij}+\left[  \mathcal{I}_{i}-\left(  K_{j}+1\left\{
m=j\right\}  \Delta_{m}\right)  \right]  \lambda+\varepsilon_{ij}\right\}
\right]  .
\]
\ Using Proposition \ref{Prop: surplus}, we can find that the expected value
of $\mathrm{CV}$ conditional on $V$ and $\Delta_{m}$ is%
\begin{equation}
E\left(  \mathrm{CV}|V,\Delta_{m}\right)  =\frac{1}{\lambda}\left[
W^{\mathrm{SEVI}}(V-e_{m}\Delta_{m}\lambda)-W^{\mathrm{SEVI}}(V)\right]  ,
\label{ECV_price}%
\end{equation}
where $e_{m}$ is the unit vector with a value of 1 at the $m$-th element and
$0$ elsewhere. We can average $E\left(  \mathrm{CV}|V,\Delta_{m}\right)  $
over the distribution of $V$ to get the expected compensating variation over
the population:
\[
E\left(  \mathrm{CV}|\Delta_{m}\right)  :=E\left[  E\left(  \mathrm{CV}%
|V,\Delta_{m}\right)  |\Delta_{m}\right]  .
\]
\ In empirical applications, we can\ estimate $E\left(  \mathrm{CV}%
|V,\Delta_{m}\right)  $ and $E\left(  \mathrm{CV}|\Delta_{m}\right)  $ using
sample averages after plugging in parameter estimates.

As a second and closely related application, we can employ the surplus
function to define the compensating variation when an alternative is removed
from the choice set. Using the same model as in (\ref{SEVI_model_price}), we
investigate the minimum compensation necessary for individual $i$ so that they
would not be worse off if alternative $k$ were eliminated from their choice
set. Denote this compensation as $\mathrm{CV}\left(  k\right)  $. Then,
$\mathrm{CV}\left(  k\right)  $ solves%
\[
\max_{j\in\mathcal{J}\backslash k}\left\{  \tilde{V}_{ij}+\left[
\mathcal{I}_{i}+\mathrm{CV}\left(  k\right)  -K_{j}\right]  \lambda
+\varepsilon_{ij}\right\}  =\max_{j\in\mathcal{J}}\left\{  \tilde{V}%
_{ij}+\left(  \mathcal{I}_{i}-K_{j}\right)  \lambda+\varepsilon_{ij}\right\}
.
\]
By a similar argument used to derive (\ref{ECV_price}), we find
\begin{align*}
\mathrm{CV}\left(  k\right)   &  =\frac{1}{\lambda}\left[  \max_{j\in
\mathcal{J}}\left\{  \tilde{V}_{ij}+\left(  \mathcal{I}_{i}-K_{j}\right)
\lambda+\varepsilon_{ij}\right\}  -\max_{j\in\mathcal{J}\backslash k}\left\{
\tilde{V}_{ij}+\left[  \mathcal{I}_{i}-K_{j}\right]  \lambda+\varepsilon
_{ij}\right\}  \right] \\
&  =\frac{1}{\lambda}\left[  W^{\mathrm{SEVI}}(V)-W^{\mathrm{SEVI}}%
(V_{-k})+W^{\mathrm{SEVI}}(0_{J})-W^{\mathrm{SEVI}}(0_{J-1})\right]  ,
\end{align*}
where
\[
W^{\mathrm{SEVI}}(0_{J})=E\left[  \max_{j\in\mathcal{J}}\varepsilon
_{ij}\right]  ,W^{\mathrm{SEVI}}(0_{J-1})=E\left[  \max_{j\in\mathcal{J}%
\backslash k}\varepsilon_{ij}\right]  \text{ for }\varepsilon_{ij}\thicksim
iid\mathrm{SEVI.}%
\]
By definition, $\mathrm{CV}\left(  k\right)  $ represents compensation for an
individual with systematic utility $V.$ To emphasize the dependence of
$\mathrm{CV}\left(  k\right)  $ on $V$ and the distribution of random
utilities, we denote it as $\mathrm{CV}^{\mathrm{SEVI}}\left(  k,V\right)  $ hereafter.

We can define $\mathrm{CV}^{\mathrm{LEVI}}\left(  k,V\right)  $ similarly
under the LEVI model. For each value of $V,$ the ratio $\mathrm{CV}%
^{\mathrm{SEVI}}\left(  k,V\right)  /\mathrm{CV}^{\mathrm{LEVI}}\left(
k,V\right)  $ quantifies the relative magnitude of the (conditional)
compensating variations across the SEVI and LEVI models. Figure
\ref{Figure: cv_notequalpro_norm_ratio_vs_levi} plots this ratio against
$\mathrm{CV}^{\mathrm{LEVI}}\left(  k,V\right)  $ when $V$ follows the same
data generating process as that in Figure \ref{Figure: Simu_theoretical_pro},
with one of the $X$'s regarded as the price/cost. Each point in the figure
represents the ratio for a particular instance of $k$ and $V.$ Qualitatively,
the figure resembles Figure \ref{Figure: Simu_theoretical_pro_DIFF}, where the
general patterns are applicable to the DGP in Figure
\ref{Figure: Simu_theoretical_pro}. Specifically, when the CV under the LEVI
model is low, the CV under the SEVI model tends to be even lower, and
conversely, when the CV under the LEVI model is high, the CV under the SEVI
model tends to be higher. These results are intuitive: removing an alternative
that is chosen more (or less) often should have a larger (or smaller) welfare implication.

A drawback of Figure \ref{Figure: cv_notequalpro_norm_ratio_vs_levi} is that
each point in the figure does not clearly indicate which alternative is
excluded from the choice set. To address this, we simulate the expected CVs,
$E[\mathrm{CV}^{\mathrm{SEVI}}\left(  k,V\right)  ]$ and $E[\mathrm{CV}%
^{\mathrm{LEVI}}\left(  k,V\right)  ],$ for each alternative $k$ excluded,
with respect to the distribution of $V$. Figure
\ref{Figure: cv_notequalpro_norm_ratio_vs_ki} in the supplementary appendix
plots the ratio of these two expected CVs against the label of the excluded
alternative ($k$). From the figure, it can be observed that the expected CV
under the LEVI appears to be larger than that under the SEVI, regardless of
which alternative is excluded. In addition, the percentage difference between
the two expected CVs appears to grow with $J,$ the number of alternatives.%

%TCIMACRO{\FRAME{fhFU}{4.5186in}{3.7836in}{0pt}{\Qcb{Scatter plot of the ratio
%of compensating variations in the SEVI and LEVI models
%$(\QTR{rm}{CV}^{\QTR{rm}{SEVI}}(k,V)/\QTR{rm}{CV}^{\QTR{rm}{LEVI}}(k,V))$
%against the compensating variation in the LEVI model
%$(\QTR{rm}{CV}^{\QTR{rm}{LEVI}}(k,V))$ with the same DGPs as in Figure
%\ref{Figure: Simu_theoretical_pro} ($X_{ij,\ell}$ is iid $N(0,\pi^{2}%
%\omega_{j}^{2}/36))$}}{\Qlb{Figure: cv_notequalpro_norm_ratio_vs_levi}%
%}{cv_notequalpro_norm_ratio_vs_levi_n10000.eps}%
%{\special{ language "Scientific Word";  type "GRAPHIC";  display "USEDEF";
%valid_file "F";  width 4.5186in;  height 3.7836in;  depth 0pt;
%original-width 6.7646in;  original-height 5.4665in;  cropleft "0";
%croptop "1";  cropright "1";  cropbottom "0";
%filename '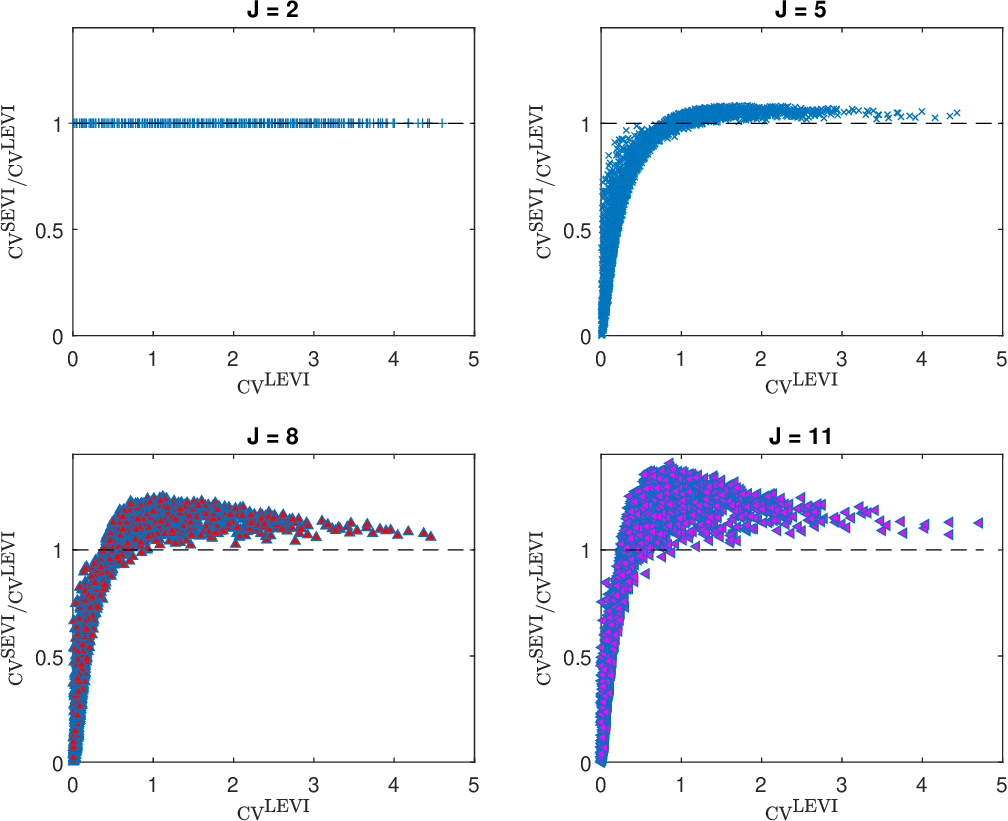';file-properties "XNPEU";}%
%}}%
%BeginExpansion
\begin{figure}[h]%
\centering
\includegraphics[
height=3.7836in,
width=4.5186in
]%
{}%
\caption{Scatter plot of the ratio of compensating variations in the SEVI and
LEVI models $(\mathrm{CV}^{\mathrm{SEVI}}(k,V)/\mathrm{CV}^{\mathrm{LEVI}%
}(k,V))$ against the compensating variation in the LEVI model $(\mathrm{CV}%
^{\mathrm{LEVI}}(k,V))$ with the same DGPs as in Figure
\ref{Figure: Simu_theoretical_pro} ($X_{ij,\ell}$ is iid $N(0,\pi^{2}%
\omega_{j}^{2}/36))$}%
\label{Figure: cv_notequalpro_norm_ratio_vs_levi}%
\end{figure}
%EndExpansion

As a third application, we use Proposition \ref{Prop: surplus} to establish
the identification of a SEVI model. Note that the model under the vector of
systematic utilities $V=(V_{1},\ldots,V_{J})$ is equivalent to that under
$V^{o}=(V_{1}-V_{J},\ldots,V_{J-1}-V_{J},0)$, as the choice problem depends
only on utility differences. For the purpose of studying identification, we
normalize the systematic utility of the last alternative to zero and maintain
this normalization in the rest of this subsection. Each vector of systematic
utilities $V\in\mathbb{R}_{o}^{J}:=\mathbb{R}^{J-1}\otimes\left\{  0\right\}
$ maps into a vector of choice probabilities $P\in\mathbb{S}^{J-1}%
:=\{P=\left(  P_{1},\ldots,P_{J-1}\right)  :P_{j}\geq0$ and $\sum_{j=1}%
^{J-1}P_{j}\leq1\}.$ The question remains whether we can deduce the systematic
utilities from the choice probabilities.

Define the conjugate surplus function:%
\[
W_{\ast}^{\mathrm{SEVI}}\left(  P\right)  :=\sup_{v\in\mathbb{R}_{o}^{J}%
}\left[  \sum_{j=1}^{J-1}v_{j}P_{j}-W^{\mathrm{SEVI}}(v)\right]  \text{ for
any }P\in\mathbb{S}^{J-1}.
\]
Since $W^{\mathrm{SEVI}}(v)$ is known and given in Proposition
\ref{Prop: surplus}, it follows by construction that all of $W_{\ast
}^{\mathrm{SEVI}}\left(  \cdot\right)  :$ $\mathbb{S}^{J}\rightarrow
\mathbb{R}$ and $\left\{  \partial W_{\ast}^{\mathrm{SEVI}}\left(
\cdot\right)  /\partial P_{j}:\mathbb{S}^{J}\rightarrow\mathbb{R}\right\}
_{j=1}^{J-1}$ are known functions. Using Theorem 7 of
%TCIMACRO{\TeXButton{\citet{SORENSEN2022}}{\citet{SORENSEN2022}}}%
%BeginExpansion
\citet{SORENSEN2022}%
%EndExpansion
, we have:\footnote{This may be of independent interest in IO/marketing
applications if the objective is to derive the systematic utilities from the
observed market shares, but here we focus on identification.}
\[
V_{j}=\left.  \frac{\partial W_{\ast}^{\mathrm{SEVI}}\left(  P\right)
}{\partial P_{j}}\right\vert _{P=\left[  P^{\mathrm{SEVI}}(Y=1|V),\ldots
,P^{\mathrm{SEVI}}\left(  Y=J-1|V\right)  \right]  }%
\]
for $j=1,\ldots,J-1.$ That is, we can plug the vector of choice probabilities
into $\left\{  \partial W_{\ast}^{\mathrm{SEVI}}\left(  \cdot\right)
/\partial P_{j}\right\}  _{j=1}^{J-1}$ to recover the vector of systematic utilities.

Given that the vector of choice probabilities is identified, the vector of
systematic utilities is also identified. The problem of identifying the model
parameter $\beta_{0}$ in a SEVI model reduces to the problem of identifying
$\beta_{0}$ from the (joint) distribution of $V^{\ast}\left(  \beta
_{0}\right)  :=(V_{1}\left(  \beta_{0}\right)  -V_{J}\left(  \beta_{0}\right)
,\ldots,V_{J-1}\left(  \beta_{0}\right)  -V_{J}\left(  \beta_{0}\right)
)\in\mathbb{R}^{J-1}.$ Intuitively, if we have an infinite number of
independent draws from the distribution of $V^{\ast}\left(  \beta_{0}\right)
,$ can we pin down $\beta_{0}?$ Mathematically, if
\begin{equation}
\Pr(V^{\ast}\left(  \beta_{1}\right)  \neq V^{\ast}\left(  \beta_{2}\right)
)>0\text{ for any }\beta_{1}\neq\beta_{2}, \label{ID}%
\end{equation}
then $\beta_{0}$ is identified. This identification condition requires that,
for two different model parameters, there is a positive probability that the
resulting vectors of (normalized) systematic utilities are different.\ For
example, in the linear case where $V_{i}^{\ast}\left(  \beta\right)  =\left[
\left(  X_{i1}-X_{iJ}\right)  \beta,\ldots,\left(  X_{iJ-1}-X_{iJ}\right)
\beta\right]  ,$ this requires that there is no perfect multicollinearity
among the variables in $X_{ij}-X_{iJ}$ for at least one $j\in\left\{
1,\ldots,J-1\right\}  .$

The identification condition in (\ref{ID}) does not involve the distribution
of $\varepsilon_{ij}.$ Therefore, it is applicable to any RUM model, provided
that the unobserved random components $\left(  \varepsilon_{i1},\ldots
,\varepsilon_{iJ}\right)  $ is independent of the systematic utility $\left(
V_{i1},\ldots,V_{iJ}\right)  $ and is absolutely continuous with full support
$\mathbb{R}^{J}.$ See Corollary 1 of
%TCIMACRO{\TeXButton{\citet{SORENSEN2022}}{\citet{SORENSEN2022}} }%
%BeginExpansion
\citet{SORENSEN2022}
%EndExpansion
for more discussions. In particular, this condition is applicable to a LEVI
model. Hence, a SEVI model is identified if and only if the corresponding LEVI
model is identified. No additional identification challenges arise in a SEVI model.

\section{The MLE and QMLE of a SEVI Model \label{Section: MLE_QMLE}}

\subsection{The Estimators}

Consider the following conditional SEVI model%
\[
U_{ij}\left(  \beta_{0}\right)  =V_{ij}\left(  \beta_{0}\right)
+\varepsilon_{ij}\text{ and }V_{ij}\left(  \beta_{0}\right)  =X_{ij}\beta_{0}%
\]
for $j=1,\ldots,J$ where $X_{ij}\in\mathbb{R}^{L}$ and $\varepsilon
_{ij}\thicksim iid$ $\mathrm{SEVI.}$ All other models in Remark
\ref{various_SEVI_models} can be reformulated to take the above form. For
example, if we want to include alternative-specific constants, we can
introduce a dummy variable for each alternative and include the dummies for
$\left(  J-1\right)  $ alternatives as part of $X_{ij}$. This dummy-variable
method can also be used to allow for alternative-specific coefficients for any
covariate, including individual-specific characteristics.

Given a simple random sample $\left\{  Y_{i},X_{i1},\ldots,X_{iJ}\right\}
_{i=1}^{n}$ from the above model, we consider estimating $\beta_{0}$ by MLE
and QMLE. The MLE is the maximum likelihood estimator when the likelihood
function is based on a correctly specified SEVI model, and the QMLE is the
maximum likelihood estimator based on an incorrectly specified LEVI model.

The negative log-likelihood function of the model under the correct SEVI
specification is%
\[
\ell_{\mathrm{SEVI}}\left(  \beta\right)  =\sum_{i=1}^{n}\ell_{i}%
^{\mathrm{SEVI}}\left(  \beta\right)  \text{ for }\ell_{i}^{\mathrm{SEVI}%
}\left(  \beta\right)  =-\sum_{k=1}^{J}Y_{ik}\log\left(  P_{ik}^{\mathrm{SEVI}%
}\left(  \beta\right)  \right)  ,
\]
where $Y_{ik}:=1\left\{  Y_{i}=k\right\}  $ and $P_{ik}^{\mathrm{SEVI}}\left(
\beta\right)  :=P^{\mathrm{SEVI}}\left(  Y_{i}=k|v_{i}\left(  \beta\right)
\right)  $ is defined as in (\ref{SEVI_prob_formula}) with $v_{i}$ replaced by
$v_{i}\left(  \beta\right)  .$ By minimizing the above negative log-likelihood
function $\ell_{\mathrm{SEVI}}\left(  \beta\right)  ,$ we obtain the MLE:%
\[
\hat{\beta}_{\mathrm{MLE}}=\arg\min_{\beta\in\mathcal{B}}\ell_{\mathrm{SEVI}%
}\left(  \beta\right)  ,
\]
where $\mathcal{B}$ is a large enough compact set in $\mathbb{R}^{L}.$

To compute the MLE, we can use a gradient-based algorithm. The gradient
function is
\[
\frac{\partial\ell_{\mathrm{SEVI}}\left(  \beta\right)  }{\partial\beta}%
=\sum_{i=1}^{n}\frac{\partial\ell_{i}^{\mathrm{SEVI}}\left(  \beta\right)
}{\partial\beta}:=\sum_{i=1}^{n}S_{i}^{\mathrm{SEVI}}\left(  \beta\right)  ,
\]
where
\[
S_{i}^{\mathrm{SEVI}}\left(  \beta\right)  =\sum_{j=1}^{J}\frac{\partial
\ell_{i}^{\mathrm{SEVI}}\left(  \beta\right)  }{\partial v_{ij}\left(
\beta\right)  }\frac{\partial v_{ij}\left(  \beta\right)  }{\partial\beta
}=-\sum_{j=1}^{J}\sum_{k=1}^{J}\frac{Y_{ik}}{P_{ik}^{\mathrm{SEVI}}\left(
\beta\right)  }\frac{\partial P_{ik}^{\mathrm{SEVI}}\left(  \beta\right)
}{\partial v_{ij}\left(  \beta\right)  }X_{ij}^{\prime}.
\]
In the above, the expression of the derivative $\frac{\partial P_{ik}%
^{\mathrm{SEVI}}\left(  \beta\right)  }{\partial v_{ij}\left(  \beta\right)
}$ is given in (\ref{substitution1}) or (\ref{substitution2}) after plugging
in the parameter $\beta.$ The closed-form expression for the gradient can be
supplied to a gradient-based optimization algorithm.

Given the score $S_{i}^{\mathrm{SEVI}}\left(  \beta\right)  $ and the
estimator $\hat{\beta}_{\mathrm{MLE}},$ we can estimate the asymptotic
variance of $\hat{\beta}_{\mathrm{MLE}}$ by
\[
\hat{\Omega}=\left[  \frac{1}{n}\sum_{i=1}^{n}S_{i}^{\mathrm{SEVI}}(\hat
{\beta}_{\mathrm{MLE}})S_{i}^{\mathrm{SEVI}}(\hat{\beta}_{\mathrm{MLE}%
})^{\prime}\right]  ^{-1}.
\]
Following standard arguments, we can show that $\hat{\Omega}^{-1/2}\sqrt
{n}(\hat{\beta}_{\mathrm{MLE}}-\beta_{0})\rightarrow^{d}N\left(
0,I_{L}\right)  .$ Details are omitted here, as the theory is standard; See,
for example, Chapter 13 of \citet{Wooldridge2002}.

If we want to test $H_{0}:R\beta_{0}=r$ against $H_{1}:R\beta_{0}\neq r$ for
some $q\times L$ matrix $R$ with row rank $q$ and $q\times1$ vector $r,$ we
can form the Wald statistic and show that it is asymptotically chi-squared
under the null:
\[
n\left(  R\hat{\beta}_{\mathrm{MLE}}-r\right)  \left(  R\hat{\Omega}R^{\prime
}\right)  ^{-1}\left(  R\hat{\beta}_{\mathrm{MLE}}-r\right)  \rightarrow
^{d}\chi_{q}^{2}.
\]
In particular, we can use the square root of the diagonal elements of $\left[
\sum_{i=1}^{n}S_{i}^{\mathrm{SEVI}}(\hat{\beta}_{\mathrm{MLE}})S_{i}%
^{\mathrm{SEVI}}(\hat{\beta}_{\mathrm{MLE}})^{\prime}\right]  ^{-1}$, denoted
by $\hat{\sigma}_{\ell},\ell=1,\ldots,L,$ as the standard errors for the
elements of $\hat{\beta}_{\mathrm{MLE}}.$ We can construct a 95\% confidence
interval as $[\hat{\beta}_{\mathrm{MLE},\ell}-1.96\hat{\sigma}_{\ell}%
,\hat{\beta}_{\mathrm{MLE},\ell}+1.96\hat{\sigma}_{\ell}].$

If the stochastic utilities follow SEVI, but we employ the maximum likelihood
method under the incorrect assumption that they follow LEVI, we obtain the
quasi-MLE (QMLE):\footnote{Here, we focus on the QMLE based on the LEVI
likelihood. Other QMLEs may be considered. For instance, we might explore the
QMLE derived from the specification that the random utility components follow
an MEV distribution (see (\ref{MEV})). As long as the marginal distribution is
incorrectly specified, such an alternative QMLE will generally be inconsistent
for the preference parameters.}
\begin{align}
\hat{\beta}_{\mathrm{QMLE}}  &  =\arg\min_{\beta\in\mathcal{B}}\ell
_{\mathrm{LEVI}}\left(  \beta\right)  \text{ where }\ell_{\mathrm{LEVI}%
}\left(  \beta\right)  =\sum_{i=1}^{n}\ell_{i}^{\mathrm{LEVI}}\left(
\beta\right)  \text{ for }\label{l_LEVI}\\
\ell_{i}^{\mathrm{LEVI}}\left(  \beta\right)   &  =-\sum_{k=1}^{J}Y_{ik}%
\log\left(  P_{ik}^{\mathrm{LEVI}}\left(  \beta\right)  \right)  =-\sum
_{k=1}^{J}Y_{ik}\left[  X_{ik}\beta\right]  -\log\left\{  \sum_{j=1}^{J}%
\exp\left[  X_{ij}\beta\right]  \right\}  .\nonumber
\end{align}
The asymptotic properties of the QMLE are also standard results. In
particular, we use the sandwich form for estimating the asymptotic variance of
$\hat{\beta}_{\mathrm{QMLE}}$ when we know that the LEVI model is not the
correct model.

\subsection{On the Estimation of Parameter Ratios}

Under some standard regularity conditions, $\hat{\beta}_{\mathrm{MLE}}$ is
consistent for $\beta_{0},$ but in general, $\hat{\beta}_{\mathrm{QMLE}}$ is
not. However, the ratio of the individual elements of $\hat{\beta
}_{\mathrm{QMLE}}$, say $\hat{\beta}_{\mathrm{QMLE},2}/\hat{\beta
}_{\mathrm{QMLE},1},$ may still be consistent for the true ratio, namely
$\beta_{0,2}/\beta_{0,1},$ for a nonzero $\beta_{0,1}.$ In this subsection, we
provide sufficient conditions for the consistency of the ratio estimator, even
though the distribution of the random utility component is misspecified.

We consider an RUM with $V_{ij}\left(  \beta_{0}\right)  =X_{ij}\beta_{0}$ and
\emph{specify} $\varepsilon_{ij}\thicksim iid\mathrm{Q}$ for some distribution
$\mathrm{Q}.$ This includes LEVI as an example but $\mathrm{Q}$ may be more
general and incorrectly specified. Denote the choice probability derived under
$\mathrm{Q}$ by $P^{\mathrm{Q}}(Y_{i}=k|V_{i}\left(  \beta\right)  )$. Let
$\beta_{\mathrm{Q}}^{\ast}$ be the pseudo-true value defined by
\[
\beta_{\mathrm{Q}}^{\ast}=\arg\min_{\beta\in\mathcal{B}}-E\sum_{k=1}^{J}%
Y_{ik}\log P^{\mathrm{Q}}(Y_{i}=k|V_{i}\left(  \beta\right)  ),
\]
which is also the probability limit of $\hat{\beta}_{\mathrm{Q}}:=\arg
\min_{\beta\in\mathcal{B}}-n^{-1}\sum_{i=1}^{n}\sum_{k=1}^{J}Y_{ik}\log
P^{\mathrm{Q}}(Y_{i}=k|V_{i}\left(  \beta\right)  ).$

We assume that $\beta_{\mathrm{Q}}^{\ast}$ is an interior point of the compact
set $\mathcal{B}$ so that $\beta_{\mathrm{Q}}^{\ast}$ satisfies the first
order zero-derivative conditions. The derivative of the population objective
function with respect to $\beta$ is
\[
-E\sum_{k=1}^{J}Y_{ik}\frac{\partial\log P^{\mathrm{Q}}(Y_{i}=k|\mathcal{J}%
,V_{i}\left(  \beta\right)  )}{\partial\beta}=-EX_{i}^{\prime}\sum_{k=1}%
^{J}Y_{ik}\lambda_{ik}^{\mathrm{Q}}\left(  X_{i}\beta\right)  =-EX_{i}%
^{\prime}\Lambda_{i}^{\mathrm{Q}}\left(  X_{i}\beta\right)  \overrightarrow{Y}%
_{i},
\]
where%
\begin{align*}
X_{i}  &  =(X_{i1}^{\prime},\ldots,X_{iJ}^{\prime})^{\prime}\in\mathbb{R}%
^{J\times L},\\
\lambda_{ik}^{\mathrm{Q}}\left(  X_{i}\beta\right)   &  =\left.
\frac{\partial\log P^{\mathrm{Q}}(Y_{i}=k|\mathcal{J},V_{i}=v_{i})}{\partial
v_{i}}\right\vert _{v_{i}=X_{i}\beta}\in\mathbb{R}^{J\times1},\\
\Lambda_{i}^{\mathrm{Q}}\left(  X_{i}\beta\right)   &  =\left[  \lambda
_{i1}^{\mathrm{Q}}\left(  X_{i}\beta\right)  ,\ldots,\lambda_{iJ}^{\mathrm{Q}%
}\left(  X_{i}\beta\right)  \right]  \in\mathbb{R}^{J\times J},\text{
}\overrightarrow{Y}_{i}=(Y_{i1},\ldots,Y_{iJ})^{\prime}\in\mathbb{R}%
^{J\times1}.
\end{align*}
So, $\beta_{\mathrm{Q}}^{\ast}$ satisfies $E\left[  X_{i}^{\prime}\Lambda
_{i}^{\mathrm{Q}}(X_{i}\beta_{\mathrm{Q}}^{\ast})\overrightarrow{Y}%
_{i}\right]  =0.$

\begin{proposition}
\label{Prop: proportional}Assume that (i) $\hat{\beta}_{\mathrm{Q}}%
\rightarrow^{p}\beta_{\mathrm{Q}}^{\ast}\in Int(\mathcal{B)}$, and
$\beta_{\mathrm{Q}}^{\ast}$ is the unique solution to $E\left[  X_{i}^{\prime
}\Lambda_{i}^{\mathrm{Q}}\left(  X_{i}\beta\right)  \overrightarrow{Y}%
_{i}\right]  =0;$

(ii) $\beta_{0}\neq0$, and there exists a nonzero scalar $\delta_{1}^{\circ}$
such that $E\left[  X_{i}^{\prime}\Lambda_{i}^{\mathrm{Q}}\left(  X_{i}%
\beta_{0}\delta_{1}^{\circ}\right)  \overrightarrow{Y}_{i}\right]  =0;$

(iii) $E\left[  X_{i,j,\ell}|X_{i1}\beta_{0},X_{i2}\beta_{0},\ldots
,X_{iJ}\beta_{0}\right]  =\Theta_{0,\ell}+\left(  X_{ij}\beta_{0}\right)
\Theta_{1,\ell}$ for some scalars $\Theta_{0,\ell}$ and $\Theta_{1,\ell}$ and
for $\ell=2,\ldots,L.$

Then: $\beta_{\mathrm{Q}}^{\ast}=\delta_{1}^{\circ}\beta_{0}.$
\end{proposition}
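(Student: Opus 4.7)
The plan is to verify that $\delta_{1}^{\circ}\beta_{0}$ is a zero of the population score $g(\beta):=E[X_{i}^{\prime}\Lambda_{i}^{\mathrm{Q}}(X_{i}\beta)\vec{Y}_{i}]$ and then invoke uniqueness from (i). Because (i) says $\beta_{\mathrm{Q}}^{\ast}$ is the unique solution to $g(\beta)=0$ in $\mathcal{B}$, it suffices to show $g(\delta_{1}^{\circ}\beta_{0})=0$. Assumption (ii) essentially asserts this, so the substance of the proof is to argue that (iii) forces the $L$-dimensional vector equation $g(\delta\beta_{0})=0$ to collapse into a scalar condition on $\delta$, of which $\delta_{1}^{\circ}$ is the root.

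The key computation is iterated expectations conditional on the vector of true systematic utilities $X_{i}\beta_{0}$. Two facts are in play. First, under the correct SEVI DGP, $\vec{Y}_{i}$ depends on $X_{i}$ only through $V_{i}=X_{i}\beta_{0}$, so $\vec{Y}_{i}\perp X_{i}\mid X_{i}\beta_{0}$ and in particular $E[\vec{Y}_{i}\mid X_{i}]=E[\vec{Y}_{i}\mid X_{i}\beta_{0}]$. Second, since $X_{i}\beta_{0}\delta_{1}^{\circ}=\delta_{1}^{\circ}(X_{i}\beta_{0})$, the matrix $\Lambda_{i}^{\mathrm{Q}}(\delta_{1}^{\circ}X_{i}\beta_{0})$ is $\sigma(X_{i}\beta_{0})$-measurable. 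Hence
\[
g(\delta_{1}^{\circ}\beta_{0})=E\!\left[E[X_{i}^{\prime}\mid X_{i}\beta_{0}]\,\Lambda_{i}^{\mathrm{Q}}(\delta_{1}^{\circ}X_{i}\beta_{0})\,E[\vec{Y}_{i}\mid X_{i}\beta_{0}]\right].
\]
Substituting the linear form from (iii), $E[X_{ij,\ell}\mid X_{i}\beta_{0}]=\Theta_{0,\ell}+(X_{ij}\beta_{0})\Theta_{1,\ell}$, writes $E[X_{i}\mid X_{i}\beta_{0}]$ as $\mathbf{1}_{J}\Theta_{0}^{\prime}+(X_{i}\beta_{0})\Theta_{1}^{\prime}$ (column by column). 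This pulls the outer expectation through and expresses $g(\delta\beta_{0})$ as a linear combination of the fixed vectors $\Theta_{0}$ and $\Theta_{1}$ in $\mathbb{R}^{L}$ whose scalar coefficients are integrals in the scalar quantity $\delta$ against the joint distribution of $(X_{i}\beta_{0},\vec{Y}_{i})$ alone.

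Consequently the vector equation $g(\delta\beta_{0})=0$ reduces to a low-dimensional condition in the scalar $\delta$; assumption (ii) states that this condition is met at $\delta=\delta_{1}^{\circ}$, so $\delta_{1}^{\circ}\beta_{0}$ is a zero of $g$. Uniqueness in (i) then delivers $\beta_{\mathrm{Q}}^{\ast}=\delta_{1}^{\circ}\beta_{0}$.

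The main obstacle is handling the first regressor cleanly, since (iii) is stated only for $\ell=2,\ldots,L$. The natural reading is that $X_{ij,1}$ is either an intercept-type regressor (whose conditional expectation is trivially constant and hence linear) or is the covariate carrying the scale normalization; in either case, it fits the same pattern used to collapse $E[X_{i}\mid X_{i}\beta_{0}]$ into a combination of $\mathbf{1}_{J}$ and $X_{i}\beta_{0}$. Once this boundary case is absorbed, the dimension reduction is routine, and (ii) together with (i) closes the argument.
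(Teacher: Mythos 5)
Your high-level strategy — verify that $\delta_{1}^{\circ}\beta_{0}$ solves the population first-order condition and then invoke uniqueness from (i) — is the same as the paper's, and your factorization $E\left[X_{i}^{\prime}\Lambda_{i}^{\mathrm{Q}}(\delta_{1}^{\circ}X_{i}\beta_{0})\overrightarrow{Y}_{i}\right]=E\left[E[X_{i}|X_{i}\beta_{0}]^{\prime}\Lambda_{i}^{\mathrm{Q}}(\delta_{1}^{\circ}X_{i}\beta_{0})E[\overrightarrow{Y}_{i}|X_{i}\beta_{0}]\right]$ is legitimate (it is the paper's iterated-expectations step, using $\varepsilon_{i}\perp X_{i}$). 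But the execution has two genuine gaps. First, after substituting (iii) you obtain $g(\delta\beta_{0})=\Theta_{0}\,a(\delta)+\Theta_{1}\,b(\delta)$ with $a(\delta)=E[\mathbf{1}_{J}^{\prime}\Lambda_{i}^{\mathrm{Q}}(\delta X_{i}\beta_{0})\overrightarrow{Y}_{i}]$ and $b(\delta)=E[(X_{i}\beta_{0})^{\prime}\Lambda_{i}^{\mathrm{Q}}(\delta X_{i}\beta_{0})\overrightarrow{Y}_{i}]$. Assumption (ii), read as the scalar condition your argument needs, only kills $b(\delta_{1}^{\circ})$; it says nothing about $a(\delta_{1}^{\circ})$. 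You must separately prove $\mathbf{1}_{J}^{\prime}\Lambda_{i}^{\mathrm{Q}}(v)\equiv 0$, which the paper derives from the invariance of the choice probabilities to adding a common constant $\mu$ to all systematic utilities (differentiate $E\sum_{k}Y_{ik}\log P^{\mathrm{Q}}(Y_{i}=k|V_{i}(\beta)+\mu\mathbf{1}_{J})$ in $\mu$). Without this identity the $\Theta_{0}$ term survives and your "low-dimensional condition" has two scalar components, only one of which (ii) controls. (If instead you read (ii) as the full $L$-vector equation, the proposition is a one-line tautology and (iii) is never used, which cannot be the intended reading.)

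Second, and more seriously, your decomposition of $E[X_{i}|X_{i}\beta_{0}]$ into $\mathbf{1}_{J}\Theta_{0}^{\prime}+(X_{i}\beta_{0})\Theta_{1}^{\prime}$ requires the linear-conditional-expectation condition for \emph{every} column of $X_{i}$, but (iii) is stated only for $\ell=2,\ldots,L$. Your proposed fix — declaring $X_{ij,1}$ an intercept-type regressor — is not supported by the hypotheses and is untenable in a conditional logit, where an alternative-invariant constant is not identified; $X_{ij,1}$ is a genuine attribute. The paper avoids needing (iii) for $\ell=1$ by reparametrizing $X_{i}\beta=\mathcal{X}_{i}\delta$ with $\mathcal{X}_{i}^{(1)}=X_{i}\beta_{0}$ and $\mathcal{X}_{i}^{(2)}=X_{i}^{(2)}$: the first equation of the transformed system is then exactly the scalar equation that (ii) solves, and only columns $2,\ldots,L$ require (iii). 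The equivalent patch in your direct formulation is to first establish $g_{\ell}(\delta_{1}^{\circ}\beta_{0})=0$ for $\ell\geq 2$ via (iii) and the two identities above, and then recover $g_{1}(\delta_{1}^{\circ}\beta_{0})=0$ from $\beta_{0}^{\prime}g(\delta_{1}^{\circ}\beta_{0})=b(\delta_{1}^{\circ})=0$ together with $\beta_{0,1}\neq 0$. As written, neither repair is in your proposal, so the argument does not go through.
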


Proposition \ref{Prop: proportional} shows that while the QMLE $\hat{\beta
}_{\mathrm{Q}}$ may not be consistent for $\beta_{0}$, it converges to
$\beta_{0}$ up to a multiplicative factor. In particular, if $\beta_{0,m_{1}%
}\neq0,$ then
\[
\frac{\hat{\beta}_{\mathrm{Q,}m_{2}}}{\hat{\beta}_{\mathrm{Q,}m_{1}}%
}\rightarrow^{p}\frac{\beta_{0,m_{2}}}{\beta_{0,m_{1}}}.
\]
This says that the plug-in ratio estimator is consistent for the true ratio.

Assumption (iii) in Proposition \ref{Prop: proportional} holds if
$X_{i1},\ldots,X_{iJ}$ are independent and each follows a multivariate normal
or elliptical distribution. The latter assumption is sufficient but not
necessary. There may exist other distributions of $X_{i1},\ldots,X_{iJ}$ or
specific ways to relax the independence assumption such that the
proportionality result holds. In empirical applications, the ratio estimator
could still be close to the true ratio even though Assumption (iii) does not hold.

Proposition \ref{Prop: proportional} is proved along the lines of the proof of
Theorem 1 in
%TCIMACRO{\TeXButton{\citet{RUUD1986}}{\citet{RUUD1986}}}%
%BeginExpansion
\citet{RUUD1986}%
%EndExpansion
, which does not cover multinomial choice models. Some adjustments to the
proof in\
%TCIMACRO{\TeXButton{\citet{RUUD1986}}{\citet{RUUD1986}} }%
%BeginExpansion
\citet{RUUD1986}
%EndExpansion
are needed to account for two main differences: there is no constant term in a
multinomial choice model and the covariate $X_{i}$ in this model is a matrix
rather than a vector.

\subsection{The Choice between SEVI and LEVI \label{Section: SEVI vs LEVI}}

In empirical applications, it is often the case that neither the SEVI nor the
LEVI is correctly specified. However, we still have to decide which model to
use. To aid in this decision, we can employ the Akaike Information Criterion
(AIC) or the Bayesian Information Criterion (BIC). Since both models have the
same number of parameters, we can compare their empirical (quasi)
log-likelihoods, which leads to the likelihood ratio statistic given by
\begin{align*}
\mathrm{LR}  &  :=\ell_{\mathrm{SEVI}}(\hat{\beta}_{\mathrm{SEVI}}%
)-\ell_{\mathrm{LEVI}}(\hat{\beta}_{\mathrm{LEVI}})\\
&  =\sum_{i=1}^{n}\left[  \ell_{i}^{\mathrm{SEVI}}\left(  \hat{\beta
}_{\mathrm{SEVI}}\right)  -\ell_{i}^{\mathrm{LEVI}}\left(  \hat{\beta
}_{\mathrm{LEVI}}\right)  \right]  .
\end{align*}
Note that in the previous section, $\hat{\beta}_{\mathrm{SEVI}}$ and
$\hat{\beta}_{\mathrm{LEVI}}$ were denoted as $\hat{\beta}_{\mathrm{MLE}}$ and
$\hat{\beta}_{\mathrm{QMLE}},$ respectively. However, in this section, both
the SEVI model and the LEVI model can be misspecified and so both $\hat{\beta
}_{\mathrm{SEVI}}$ and $\hat{\beta}_{\mathrm{LEVI}}$ can be quasi-maximum
likelihood estimators.

In empirical applications, the choice between the SEVI model and the LEVI
model can be based on the likelihood ratio. If $\mathrm{LR}>0$, then the LEVI
model is preferred.\footnote{Note that $\ell_{\mathrm{SEVI}}\left(
\cdot\right)  $ and $\ell_{\mathrm{LEVI}}\left(  \cdot\right)  $ are the
negative log-likelihood functions so when \ $LR>0,$ we may conclude that the
LEVI model fits the data better.} Conversely, if $\mathrm{LR}<0$, then the
SEVI model is preferred. This simple empirical rule of thumb is derived from
the empirical goodness of fit and serves as a practical guide for selecting
between the two models.\ It is worth pointing out again that this rule is also
the model selection rule based on the AIC or BIC, as the two models have the
same number of parameters.

Another method to determine which model to use is by employing Vuong's test (%
%TCIMACRO{\TeXButton{\citet{Vuong1989}}{\citet{Vuong1989}}}%
%BeginExpansion
\citet{Vuong1989}%
%EndExpansion
), which formally assesses the null hypothesis that the SEVI model and the
LEVI model provide equally good fits to the data. Let $\beta_{\mathrm{SEVI}%
}^{\ast}$ and $\beta_{\mathrm{LEVI}}^{\ast}$ be the respective limits of
$\hat{\beta}_{\mathrm{SEVI}}$ and $\hat{\beta}_{\mathrm{LEVI}}$. In the
present setting, the null hypothesis of Vuong's test can be expressed as:
\[
H_{0}:E\left[  \ell_{\mathrm{SEVI}}\left(  \beta_{\mathrm{SEVI}}^{\ast
}\right)  |V\right]  =E\left[  \ell_{\mathrm{LEVI}}\left(  \beta
_{\mathrm{LEVI}}^{\ast}\right)  |V\right]
\]
almost surely, where $V$ is the vector collecting the systematic utilities.
Note that
\[
E\left[  \ell_{\mathrm{SEVI}}\left(  \beta_{\mathrm{SEVI}}^{\ast}\right)
|V\right]  =-\sum_{k=1}^{J}P_{ik}^{o}\log\left(  P_{ik}^{\mathrm{SEVI}}\left(
\beta_{\mathrm{SEVI}}^{\ast}\right)  \right)
\]
where $P_{ik}^{o}:=\Pr(Y_{i}=k|V_{i})$ is the true choice probability. The
right-hand side of the above equation is the cross entropy between $\left(
P_{i1}^{0},\ldots,P_{iJ}^{o}\right)  $ and ($P_{i1}^{\mathrm{SEVI}}\left(
(\beta_{\mathrm{SEVI}}^{\ast}\right)  ,\ldots,P_{iJ}^{\mathrm{SEVI}}\left(
\beta_{\mathrm{SEVI}}^{\ast}\right)  $), which quantifies the difference
between these two distributions. By adding a constant $\sum_{k=1}^{J}%
P_{ik}^{o}\log\left(  P_{ik}^{o}\right)  $ to both sides, we have
\[
E\left[  \ell_{\mathrm{SEVI}}\left(  \beta_{\mathrm{SEVI}}^{\ast}\right)
|V\right]  +\sum_{k=1}^{J}P_{ik}^{o}\log\left(  P_{ik}^{o}\right)
=-\sum_{k=1}^{J}P_{ik}^{o}\log\left(  \frac{P_{ik}^{\mathrm{SEVI}}\left(
\beta_{\mathrm{SEVI}}^{\ast}\right)  }{P_{ik}^{o}}\right)  ,
\]
where the right-hand side is the KL distance between the two distributions
$\left(  P_{i1}^{0},\ldots,P_{iJ}^{o}\right)  $ and ($P_{i1}^{\mathrm{SEVI}%
}\left(  \beta_{\mathrm{SEVI}}^{\ast}\right)  ,\ldots,P_{iJ}^{\mathrm{SEVI}%
}\left(  \beta_{\mathrm{SEVI}}^{\ast}\right)  $) conditional on $V.$ Similar
calculations can be performed for $E\left[  \ell_{\mathrm{LEVI}}\left(
\beta_{\mathrm{LEVI}}^{\ast}\right)  |V\right]  .$ Therefore, the null of
Vuong's test is that the SEVI and LEVI models are equidistant from the true
data generating process, with the \textquotedblleft distance\textquotedblright%
\ measured by either the cross-entropy or the KL distance.

Since the SEVI model and the LEVI model are nonnested, $n^{-1/2}\left[
\ell_{\mathrm{SEVI}}(\hat{\beta}_{\mathrm{SEVI}})-\ell_{\mathrm{LEVI}}%
(\hat{\beta}_{\mathrm{LEVI}})\right]  $ is asymptotically normal. We can
standardize the LR statistic to obtain the $\mathcal{V}_{n}$ statistic of
%TCIMACRO{\TeXButton{\citet{Vuong1989}}{\citet{Vuong1989}}}%
%BeginExpansion
\citet{Vuong1989}%
%EndExpansion
:
\[
\mathcal{V}_{n}=\frac{\sum_{i=1}^{n}\left[  \ell_{i}^{\mathrm{SEVI}}\left(
\hat{\beta}_{\mathrm{SEVI}}\right)  -\ell_{i}^{\mathrm{LEVI}}\left(
\hat{\beta}_{\mathrm{LEVI}}\right)  \right]  }{\left\{  \sum_{i=1}^{n}\left[
\ell_{i}^{\mathrm{SEVI}}\left(  \hat{\beta}_{\mathrm{SEVI}}\right)  -\ell
_{i}^{\mathrm{LEVI}}\left(  \hat{\beta}_{\mathrm{LEVI}}\right)  -n^{-1}%
\mathrm{LR}\right]  ^{2}\right\}  ^{1/2}}.
\]
Following the arguments in
%TCIMACRO{\TeXButton{\citet{Vuong1989}}{\citet{Vuong1989}}}%
%BeginExpansion
\citet{Vuong1989}%
%EndExpansion
, we can show that under the null $\mathcal{V}_{n}\rightarrow^{d}N(0,1).$ This
serves as the basis for both one-sided tests and two-sided tests. For example,
if $\left\vert \mathcal{V}_{n}\right\vert >1.96,$ then we may reject the null
that the SEVI and LEVI models fit the data equally well at the 5\%
significance level. On the other hand, if $\mathcal{V}_{n}>1.645,$ we may
reject the null that the SEVI model fits the data better at the 5\%
significance level. If $\mathcal{V}_{n}<-1.645,$ we may reject the null that
the LEVI model fits the data better at the 5\% significance level.

\section{Simulation Evidence\label{Sec: simulation}}

We generate the data from a conditional SEVI model. More specifically, we
assume that%
\[
Y_{i}=\arg\max_{j=1,\ldots,J}\left\{  X_{ij}\beta+\varepsilon_{ij}\right\}
\]
where $\varepsilon_{ij}\thicksim iid$ SEVI over $j$ and $X_{ij,\ell}\thicksim
iidN(0,\pi^{2}\omega_{j}^{2}/36)$ or $X_{ij,\ell}\thicksim iidN(0,\pi^{2}/36)$
over $\ell=1,2$ and $3$. The DGPs are the same as the SEVI DGPs in Figures
\ref{Figure: Simu_theoretical_pro}--\ref{Figure: Simu_theoretical_pro_DIFF2}.
Given a simple random sample $\left\{  Y_{i},X_{i1},\ldots,X_{iJ}\right\}
_{i=1}^{n},$ we use the MLE and QMLE in Section \ref{Section: MLE_QMLE} to
estimate $\beta_{0}=\left(  1,2,1\right)  ^{\prime}.$ To reiterate, the
MLE\ is based on the correct specification that $\varepsilon_{ij}\thicksim
iid$ SEVI, and the QMLE\ is based on the incorrect specification that
$\varepsilon_{ij}$ $\thicksim iid$ LEVI.

\subsection{Simulation with a Large Sample Size}

We first consider the case that the sample size is large, and there is only
one simulation replication. The purpose of this exercise is to see how large
the misspecification bias is. Table \ref{Table:MLE_QMLE} reports the estimates
when the sample size $n$ is $10,000.$ When $J=2,$ the MLE and QMLE give the
same estimate of $\beta_{0}.$ This is consistent with the result that the SEVI
model and the LEVI model are the same when there are only two alternatives.
When $J>2,$ the MLE and QMLE are quite different. While the MLE is close to
the true parameter vector for all values of $J$ considered, the QMLE grows
with $J.$ For example, when $J=8$ and $X_{ij,\ell}\thicksim iidN(0,\pi
^{2}\omega_{j}^{2}/36),$ the QMLE is $\left(  1.48,2.99,1.49\right)  $, each
element of which is higher than the corresponding true value by about 50\%.
The misspecification bias of the QMLE is substantial. In line with Proposition
\ref{Prop: proportional}, the ratio of the estimates of any two elements of
$\beta_{0}$, say $\hat{\beta}_{\mathrm{QMLE},2}/\hat{\beta}_{\mathrm{QMLE}%
,1},$ is close to the true ratio $\beta_{0,2}/\beta_{0,1}.$ This is true
regardless of whether the attributes have the same variance across the
alternatives or not. Proposition \ref{Prop: proportional} permits variance
heterogeneity across the alternatives.%

%TCIMACRO{\TeXButton{B}{\begin{table}[tbp] \centering}}%
%BeginExpansion
\begin{table}[tbp] \centering
%EndExpansion
%

\begin{tabular}
[c]{l|ccccccccccccc}\hline
& \textbf{True} &  & \multicolumn{2}{c}{$J=2$} &  & \multicolumn{2}{c}{$J=5$}
&  & \multicolumn{2}{c}{$J=8$} &  & \multicolumn{2}{c}{$J=11$}\\\cline{1-2}%
\cline{4-5}\cline{7-8}\cline{10-11}\cline{13-14}
& $(\boldsymbol{\beta}_{\boldsymbol{0}})$ &  & \textbf{MLE} & \textbf{QMLE} &
& \textbf{MLE} & \textbf{QMLE} &  & \textbf{MLE} & \textbf{QMLE} &  &
\textbf{MLE} & \textbf{QMLE}\\\hline
&  &  & \multicolumn{11}{c}{$X_{ij,\ell}\thicksim iidN(0,\pi^{2}\omega_{j}%
^{2}/36)$}\\\hline
$\beta_{1}$ & 1.00 &  & 1.01 & 1.01 &  & 1.00 & 1.37 &  & 0.98 & 1.48 &  &
1.02 & 1.64\\
$\beta_{2}$ & 2.00 &  & 1.95 & 1.95 &  & 2.01 & 2.75 &  & 1.99 & 2.99 &  &
1.98 & 3.17\\
$\beta_{3}$ & 1.00 &  & 0.99 & 0.99 &  & 1.00 & 1.37 &  & 1.01 & 1.49 &  &
0.98 & 1.57\\\hline
&  &  & \multicolumn{11}{c}{$X_{ij,\ell}\thicksim iidN(0,\pi^{2}/36)$}\\\hline
$\beta_{1}$ & 1.00 &  & 1.01 & 1.01 &  & 1.01 & 1.35 &  & 0.99 & 1.51 &  &
1.00 & 1.64\\
$\beta_{2}$ & 2.00 &  & 2.03 & 2.03 &  & 2.05 & 2.80 &  & 1.98 & 3.09 &  &
1.98 & 3.37\\
$\beta_{3}$ & 1.00 &  & 1.03 & 1.03 &  & 1.03 & 1.38 &  & 1.03 & 1.56 &  &
1.00 & 1.64\\\hline
\end{tabular}

\caption{MLE and QMLE of the model parameters for various J }\label{Table:MLE_QMLE}%
%TCIMACRO{\TeXButton{E}{\end{table}}}%
%BeginExpansion
\end{table}%
%EndExpansion

In a nonlinear model, instead of focusing on the parameter values, it can be
more informative to examine the (empirical) average partial effects, defined
by
\begin{align*}
APE(j;\beta,\mathrm{SEVI})  &  =\frac{1}{n}\sum_{i=1}^{n}\frac{\partial
P_{ij}^{\mathrm{SEVI}}\left(  \beta\right)  }{\partial X_{ij,1}}=\frac{1}%
{n}\sum_{i=1}^{n}\frac{\partial P_{ij}^{\mathrm{SEVI}}\left(  \beta\right)
}{\partial v_{ij}\left(  \beta\right)  }\beta_{1},\\
APE(j;\beta,\mathrm{LEVI})  &  =\frac{1}{n}\sum_{i=1}^{n}\frac{\partial
P_{ij}^{\mathrm{LEVI}}\left(  \beta\right)  }{\partial X_{ij,1}}=\frac{1}%
{n}\sum_{i=1}^{n}P_{ij}^{\mathrm{LEVI}}\left(  \beta\right)  \left(
1-P_{ij}^{\mathrm{LEVI}}\left(  \beta\right)  \right)  \beta_{1}.
\end{align*}
In the above, $P_{ij}^{\mathrm{SEVI}}$ and $P_{ij}^{\mathrm{LEVI}}$ are the
choice probabilities under the SEVI and LEVI specifications, respectively.
$APE(j;\beta,\mathrm{SEVI})$ and $APE(j;\beta,\mathrm{LEVI})$ are the
corresponding empirical APEs that measure the effect on the probability of
choosing alternative $j$ when a marginal change is made to one of alternative
$j$'s attributes. We focus on the first attribute $X_{ij,1}$ and plot the
ratios:
\[
\frac{APE(j;\hat{\beta}_{\mathrm{MLE}},\mathrm{SEVI})}{APE(j;\beta
_{0},\mathrm{SEVI})}\text{ and }\frac{APE(j;\hat{\beta}_{\mathrm{QMLE}%
},\mathrm{LEVI})}{APE(j;\beta_{0},\mathrm{SEVI})}%
\]
against the label of each alternative $j=1,\ldots,J.$ Figure \ref{Figure: APE}
reports the results for the case that $X_{ij,\ell}\thicksim iidN(0,\pi
^{2}\omega_{j}^{2}/36)$ and makes it clear that the empirical APE is not close
to the target APE (i.e., $APE(j;\beta_{0},\mathrm{SEVI})$) when the model is
misspecified. This shows that the difference in the parameter estimates due to
model misspecification has substantial effects on APE estimation.%

%TCIMACRO{\FRAME{ftbhFU}{4.8456in}{3.9902in}{0pt}{\Qcb{Plots of the empirical
%APE under the correct SEVI specification and the incorrect LEVI specification
%against alternative labels for a sample of $n=10,000$ individuals}%
%}{\Qlb{Figure: APE}}{ape_notequal_pro_normal.eps}%
%{\special{ language "Scientific Word";  type "GRAPHIC";
%maintain-aspect-ratio TRUE;  display "USEDEF";  valid_file "F";
%width 4.8456in;  height 3.9902in;  depth 0pt;  original-width 6.6945in;
%original-height 5.5089in;  cropleft "0";  croptop "1";  cropright "1";
%cropbottom "0";
%filename '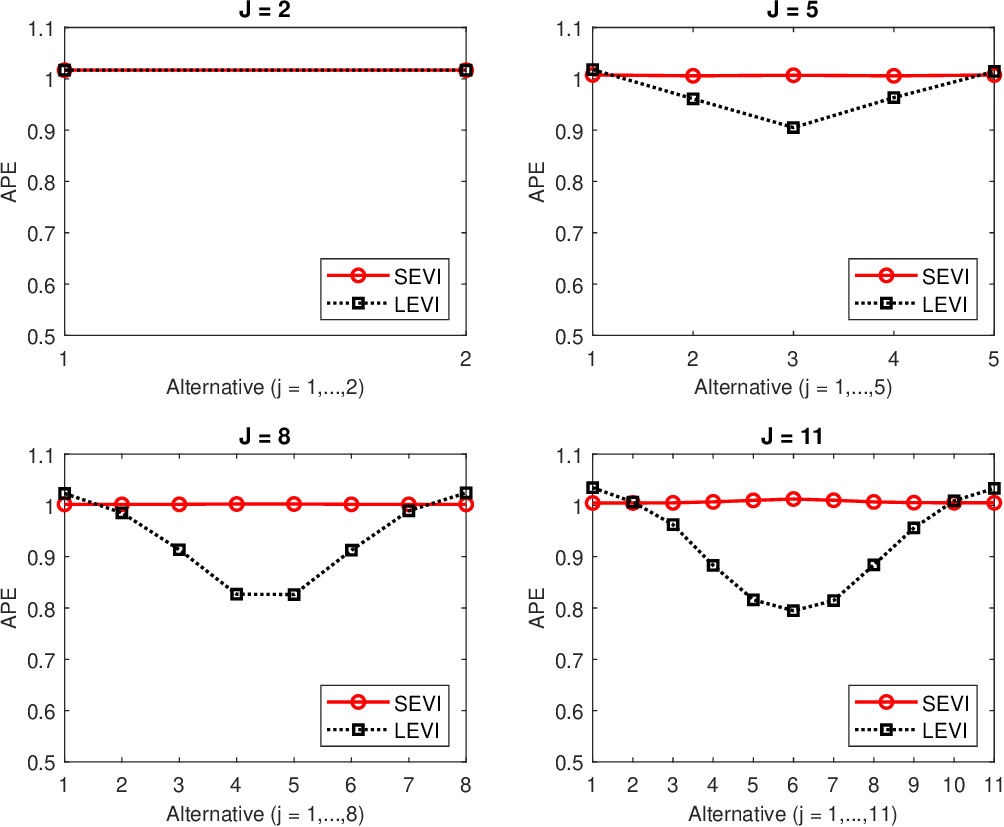';file-properties "XNPEU";}}}%
%BeginExpansion
\begin{figure}[tbh]%
\centering
\includegraphics[
height=3.9902in,
width=4.8456in
]%
{}%
\caption{Plots of the empirical APE under the correct SEVI specification and
the incorrect LEVI specification against alternative labels for a sample of
$n=10,000$ individuals}%
\label{Figure: APE}%
\end{figure}
%EndExpansion

To further illustrate the difference between the SEVI and LEVI specifications,
we conduct an out-of-sample prediction exercise. Let $\hat{\beta
}_{\mathrm{MLE}}^{\left(  J\right)  }$ and $\hat{\beta}_{\mathrm{QMLE}%
}^{\left(  J\right)  }$ be the MLE and QMLE of $\beta_{0}$, respectively, when
there are $J$ alternatives. We use $\hat{\beta}_{\mathrm{MLE}}^{\left(
J\right)  }$ and $\hat{\beta}_{\mathrm{QMLE}}^{\left(  J\right)  }$ to compute
out-of-sample choice probabilities for a certain choice problem. More
specifically, we assume that an out-of-sample individual chooses between the
first two alternatives only, and the rest $J-2$ alternatives are not
available. In this case, the out-of-sample choice probabilities are%
\[
\hat{P}_{i_{o},\mathrm{MLE}}:=P\left(  Y_{i_{o}}=1|X_{i_{o}};\hat{\beta
}_{\mathrm{MLE}}^{\left(  J\right)  }\right)  \text{ and }\hat{P}%
_{i_{o},\mathrm{QMLE}}=P\left(  Y_{i_{o}}=1|X_{i_{o}};\hat{\beta
}_{\mathrm{QMLE}}^{\left(  J\right)  }\right)
\]
where
\[
P\left(  Y_{i_{o}}=1|X_{i_{o}};\beta\right)  =\frac{\exp(X_{i_{o},1}\beta
)}{\exp(X_{i_{o},1}\beta)+\exp(X_{i_{o},2}\beta)}%
\]
and $i_{o}\in\left\{  n+1,\ldots,2n\right\}  $ represents an out-of-sample
individual whose observation $(X_{i_{o}},Y_{i_{o}})$ was not used in
constructing the MLE or QMLE, but $\left(  X_{i_{o}},Y_{i_{o}}\right)  $ is
generated in the same way as any in-sample observation. Note that the number
of \textquotedblleft out-of-sample\textquotedblright\ individuals is the same
as the number of \textquotedblleft in-sample\textquotedblright\ individuals
and both are equal to 10,000.

When there are only two alternatives in the out-of-sample prediction exercise,
the formulae for the choice probabilities are the same under SEVI or LEVI.
Hence, as given above, $\hat{P}_{i_{o},\mathrm{MLE}}$ and $\hat{P}%
_{i_{o},\mathrm{QMLE}}$ are different only because they are based on different
estimated parameters.

Figure \ref{Figure: KDE unequalJ511}(a\&b) reports the kernel density
estimators (KDE) of $\left\{  P\left(  Y_{i_{o}}=1|X_{i_{o}};\beta\right)
\right\}  _{i=n+1}^{2n}$ for $\beta=\beta_{0},\hat{\beta}_{\mathrm{MLE}%
}^{\left(  J\right)  }$ and $\hat{\beta}_{\mathrm{QMLE}}^{\left(  J\right)  }$
when $J=5$ and $X_{ij,\ell}\thicksim iidN(0,\pi^{2}\omega_{j}^{2}/36).$ It is
not surprising that the KDE under $\hat{\beta}_{\mathrm{MLE}}^{\left(
J\right)  }$ is very close to the target KDE, that is, the KDE under
$\beta_{0}.$ When the sample size is large, $\hat{\beta}_{\mathrm{MLE}%
}^{\left(  J\right)  }$ is very close to $\beta_{0}$, as demonstrated in Table
\ref{Table:MLE_QMLE}. As a result, $P(Y_{i_{o}}=1|X_{i_{o}};\hat{\beta
}_{\mathrm{MLE}}^{\left(  J\right)  })$ is close to $P\left(  Y_{i_{o}%
}=1|X_{i_{o}};\beta_{0}\right)  $ and therefore, the corresponding KDEs are
close to each other. On the other hand, the KDE under $\hat{\beta
}_{\mathrm{QMLE}}^{\left(  J\right)  }$ is quite different from the target
KDE. This demonstrates that the model misspecification introduces not only a
significant bias in the parameter estimator but also a substantial discrepancy
in the predicted choice probabilities.

The KDE figure for the case that $J=5$ and $X_{ij,\ell}\thicksim
iidN(0,\pi^{2}/36)$ is qualitatively similar to Figure
\ref{Figure: KDE unequalJ511}(a\&b) and is thus omitted. The KDE figure for
the case that $J=11$ and $X_{ij,\ell}\thicksim iidN(0,\pi^{2}\omega_{j}%
^{2}/36)$ is reported in Figure \ref{Figure: KDE unequalJ511}(c\&d). The
difference between the KDE under $\hat{\beta}_{\mathrm{QMLE}}^{(J)}$ and the
target KDE appears to increase with $J.$%

%TCIMACRO{\FRAME{fhFU}{4.8931in}{4.0231in}{0pt}{\Qcb{Kernel density estimation
%of the out-of-sample predictive probabilities when $n=10,000,J=5,11$ and
%$X_{ij,\ell}\thicksim iidN(0,\pi^{2}\omega_{j}^{2}/36)$}}%
%{\Qlb{Figure: KDE unequalJ511}}{oos_notequal_prob_normal_j_5_11_jout2kde.eps}%
%{\special{ language "Scientific Word";  type "GRAPHIC";
%maintain-aspect-ratio TRUE;  display "USEDEF";  valid_file "F";
%width 4.8931in;  height 4.0231in;  depth 0pt;  original-width 6.7222in;
%original-height 5.5218in;  cropleft "0";  croptop "1";  cropright "1";
%cropbottom "0";
%filename '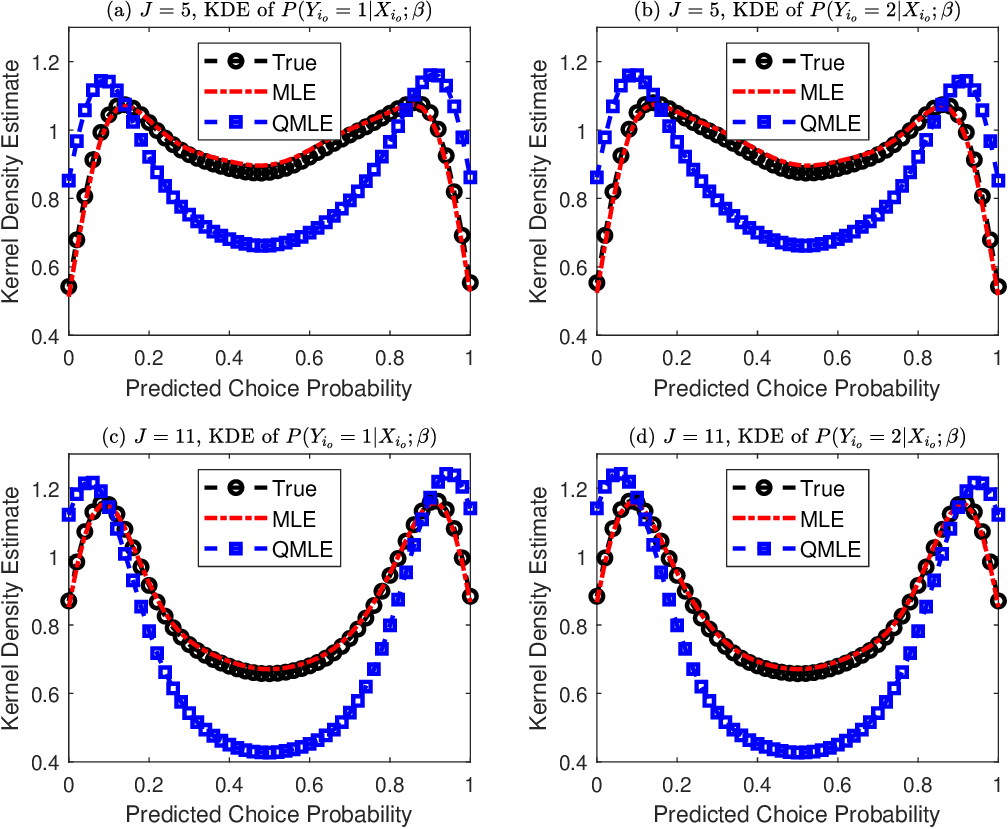';file-properties "XNPEU";}%
%}}%
%BeginExpansion
\begin{figure}[h]%
\centering
\includegraphics[
height=4.0231in,
width=4.8931in
]%
{}%
\caption{Kernel density estimation of the out-of-sample predictive
probabilities when $n=10,000,J=5,11$ and $X_{ij,\ell}\thicksim iidN(0,\pi
^{2}\omega_{j}^{2}/36)$}%
\label{Figure: KDE unequalJ511}%
\end{figure}
%EndExpansion

As final evidence that it matters whether the stochastic utility follows the
SEVI or LEVI distribution, we compute the out-of-sample predicted choice
probabilities when there are four alternatives $(J=4)$ for both the in-sample
estimation and out-of-sample prediction. For the latter, we compute $\hat
{P}_{i_{o},\mathrm{MLE}}^{\mathrm{SEVI}}(j):=P(Y_{i_{o}}=j|X_{i_{o}%
};\mathrm{SEVI},$ $\hat{\beta}_{\mathrm{MLE}}^{\left(  4\right)  })$ and
$\hat{P}_{i_{o},\mathrm{QMLE}}^{\mathrm{LEVI}}\left(  j\right)  :=P(Y_{i_{o}%
}=j|X_{i_{o}};\mathrm{LEVI},\hat{\beta}_{\mathrm{QMLE}}^{\left(  4\right)  })$
where we have emphasized the dependence of the choice probability on the
distributional assumption imposed on the stochastic utility. We plot the
difference $\hat{P}_{i_{o},\mathrm{MLE}}^{\mathrm{SEVI}}(j)-\hat{P}%
_{i_{o},\mathrm{QMLE}}^{\mathrm{LEVI}}(j)$ against the true choice probability
$P_{i_{o},0}^{\mathrm{SEVI}}(j):=P(Y_{i_{o}}=j|X_{i_{o}};\mathrm{SEVI},$
$\beta_{0})$ as a scatter plot. We omit the figure here, but it is available
as Figure \ref{Figure: difference in choice prob} in the supplementary
appendix. The figure clearly shows that, relative to the true choice
probability, the difference can be substantial.

\subsection{Simulation with Smaller Sample Sizes}

In this subsection, we study the performances of MLE and QMLE in finite
samples. Using the same DGP as in the previous subsection, we consider the
sample size $n=500$, $1000$ and the number of alternatives $J=2,5,8,11.$ The
number of simulation replications is 5000. For each estimator, we compute the
bias, the empirical standard derivation, and the average of the standard
errors across the simulation replications. Furthermore, we construct a 95\%
confidence interval for each element of $\beta_{0}$ following the standard
procedure of adding and subtracting 1.96 times the standard error and report
the empirical coverage of the CI (i.e., the proportion of times the
so-constructed confidence interval contains the true parameter value).

Table \ref{Table: finite sample MLE QMLE} reports the results when
$X_{ij,\ell}\thicksim iidN(0,\pi^{2}\omega_{j}^{2}/36)$ and $n=500$. The
reported results are representative of other cases with a different DGP for
$X_{ij,\ell}$ or sample size. A few patterns emerge. First, the bias of the
MLE is small and decreases with $J.$ On the other hand, the bias of the
QMLE\ is large and increases with $J.$ This is consistent with our large
sample result when $n=10,000$ and there is only one simulation replication.
Here, the finite sample biases (with $n=500$) are averaged over 5000
simulation replications. Second, for both the MLE and QMLE, the empirical
standard deviation decreases with $J.$ A larger $J$ can be regarded as having
more observations and, hence, smaller sampling noises. Third, for both
estimators, the average of the standard errors across the simulation
replications matches with the corresponding standard deviation very well. This
indicates that the standard errors are reliable estimates of the standard
deviation. Finally, for the MLE, the empirical coverage of the 95\% confidence
intervals is very close to 95\%. This shows that the asymptotic normal
approximation is reliable. On the other hand, for the QMLE, the empirical
coverage of the 95\% confidence intervals is much smaller than 95\%. The only
exception is the case when $J=2$, where MLE and QMLE are identical. In this
case, there is a small and nearly invisible difference in the empirical
coverages, because the standard error for the MLE is based on the non-robust
asymptotic variance estimator while the standard error for the QMLE is based
on the robust sandwich asymptotic variance estimator. Both variance estimators
are consistent when $J=2.$ When $J>2,$ the under-coverage of the CIs based on
the QMLE is due to the bias in the QMLE. The larger the number of alternatives
$J$ is, the larger the bias is, and the smaller the empirical coverage is.%

%TCIMACRO{\TeXButton{B}{\begin{table}[tbp] \centering}}%
%BeginExpansion
\begin{table}[tbp] \centering
%EndExpansion
%

\begin{tabular}
[c]{lccccccccc}\hline\hline
& \multicolumn{4}{c}{MLE} &  & \multicolumn{4}{c}{QMLE}\\\hline\hline
& {\small Empirical} & {\small Empirical} & {\small Average of} &
{\small Empirical} &  & {\small Empirical} & {\small Empirical} &
{\small Average of} & {\small Empirical}\\
& {\small Bias} & {\small StD} & {\small SE's} & {\small Coverage} &  &
{\small Bias} & {\small StD} & {\small SE's} & {\small Coverage}\\\hline
& \multicolumn{9}{c}{$J=2$}\\
$\beta_{1}$ & 0.009 & 0.211 & 0.210 & 0.951 &  & 0.009 & 0.211 & 0.209 &
0.948\\
$\beta_{2}$ & 0.023 & 0.240 & 0.241 & 0.950 &  & 0.023 & 0.240 & 0.239 &
0.950\\
$\beta_{3}$ & 0.010 & 0.209 & 0.210 & 0.950 &  & 0.010 & 0.209 & 0.209 &
0.950\\\hline
& \multicolumn{9}{c}{$J=5$}\\
$\beta_{1}$ & 0.003 & 0.105 & 0.107 & 0.957 &  & 0.366 & 0.141 & 0.142 &
0.261\\
$\beta_{2}$ & 0.010 & 0.136 & 0.137 & 0.950 &  & 0.738 & 0.175 & 0.173 &
0.004\\
$\beta_{3}$ & 0.004 & 0.106 & 0.107 & 0.958 &  & 0.367 & 0.142 & 0.142 &
0.266\\\hline
& \multicolumn{9}{c}{$J=8$}\\
$\beta_{1}$ & 0.006 & 0.087 & 0.088 & 0.952 &  & 0.514 & 0.130 & 0.129 &
0.017\\
$\beta_{2}$ & 0.011 & 0.117 & 0.116 & 0.951 &  & 1.026 & 0.159 & 0.158 &
0.000\\
$\beta_{3}$ & 0.006 & 0.088 & 0.088 & 0.951 &  & 0.513 & 0.129 & 0.129 &
0.017\\\hline
& \multicolumn{9}{c}{$J=11$}\\
$\beta_{1}$ & 0.004 & 0.080 & 0.080 & 0.950 &  & 0.600 & 0.124 & 0.123 &
0.000\\
$\beta_{2}$ & 0.007 & 0.105 & 0.107 & 0.955 &  & 1.201 & 0.149 & 0.150 &
0.000\\
$\beta_{3}$ & 0.007 & 0.081 & 0.080 & 0.946 &  & 0.605 & 0.126 & 0.123 &
0.001\\\hline\hline
\end{tabular}

\caption{Finite sample performances of the MLE and QMLE}\label{Table: finite sample MLE QMLE}%
%TCIMACRO{\TeXButton{E}{\end{table}}}%
%BeginExpansion
\end{table}%
%EndExpansion

Next, we consider testing IIA using the case when $X_{ij,\ell}\thicksim
iidN(0,\pi^{2}/36),\ell=1,2,3$ and $n=500$ as an example. We follow the
standard procedure to perform the Hausman-McFadden test (cf.,
%TCIMACRO{\TeXButton{HausmanMcFadden1984}{\citet{HausmanMcFadden1984}}}%
%BeginExpansion
\citet{HausmanMcFadden1984}%
%EndExpansion
). For each simulation replication, we first estimate the model under the
(incorrect) LEVI specification using the full sample. We then carry out the
same estimation procedure but use only the subsample of individuals who choose
either alternative 1 or 2. We compare these two estimates using the Wald
statistic based on the 95\% critical value from the chi-squared distribution
with three degrees of freedom ($\chi^{2}(3)$). The empirical rates of
rejection over 5000 simulation replications for $J=5,8,$ and $11$ are 30.30\%,
40.18\%, and 47.52\%, respectively. This indicates a fairly good chance of the
IIA being rejected. Following an IIA rejection, the standard practice is to
employ discrete choice models that do not exhibit the IIA property, such as
the mixed logit. However, in the present setting, such a practice would lead
to a wrong characterization of agents' preferences.

We note that the LEVI model is misspecified for both the full sample and the
subsample of the individuals who choose either alternative 1 or 2. The
misspecification of the LEVI model for the full sample is obvious as the
stochastic utility follows the SEVI distribution rather than the LEVI
distribution. For the subsample with the two alternatives selected by the
econometrician, the LEVI model is still misspecified. From the point of view
of the econometrician, the conditional probability of choosing alternative 1
given that either alternative 1 or alternative 2 is chosen is%
\begin{equation}
\frac{\Pr(Y=1|\mathcal{J},V=v,\varepsilon_{j}\thicksim iid\text{
}\mathrm{SEVI}\text{ over }j\in\mathcal{J})}{\Pr(Y=1|\mathcal{J}%
,V=v,\varepsilon_{j}\thicksim iid\text{ }\mathrm{SEVI}\text{ over }%
j\in\mathcal{J})+\Pr(Y=2|\mathcal{J},V=v,\varepsilon_{j}\thicksim iid\text{
}\mathrm{SEVI}\text{ over }j\in\mathcal{J})}. \label{Generalized_GM}%
\end{equation}
This is not equal to $\exp(v_{1})/[\exp(v_{1})+\exp(v_{2})]$ unless
$\mathcal{J}=\left\{  1,2\right\}  .$ However, the Hausman--McFadden test
requires at least three alternatives in total, as otherwise the test statistic
equals zero and the IIA null is never rejected. So, whenever the
Hausman-McFadden test is meaningful, the LEVI model is misspecified for any
subsample with at least two alternatives.

Finally, we consider using Vuong's test and AIC/BIC to decide which model fits
the data better. When the SEVI is the correct model, $X_{ij,\ell}\thicksim
iidN(0,\pi^{2}\omega_{j}^{2}/36)$ and $n=500$, the fractions of the test
statistic $\mathcal{V}_{n}$ less than $-1.645$ across 5000 simulation
replications are 47.08\%, 69.08, and 80.82\%, respectively, for $J=5,8,$ and
$11$. The corresponding fractions of the test statistic less than $0$ are
92.56\%, 97.86\%, and 99.32\%, respectively. These results underscore the
reasonable power of Vuong's test and the high accuracy of the AIC/BIC in
selecting the correct SEVI model. Results are similar when $X_{ij,\ell
}\thicksim iidN(0,\pi^{2}/36)$ and for other sample sizes. On the other hand,
when the LEVI is the correct model, $X_{ij,\ell}\thicksim iidN(0,\pi^{2}%
\omega_{j}^{2}/36)$ and $n=500$, we observe that the test statistic
$\mathcal{V}_{n}$ exceeds 1.645 in 30.58\%, 48.22\%, and 60.72\% of the cases
across 5000 simulation replications for $J=5,8,$ and $11$, respectively. The
corresponding percentages for $\mathcal{V}_{n}$ greater than zero are 87.56\%,
94.94\%, and 97.18\% for $J=5,8,$ and $11$, respectively. This affirms the
power of Vuong's test and the effectiveness of using the AIC/BIC in selecting
the correct LEVI model. Therefore, regardless of whether the SEVI or LEVI
model is the true model, Vuong's test has good power and the AIC and BIC
perform very well in correctly identifying the true model.

\subsection{Computation Time Comparison\label{Sec: time}}

In this subsection, we compare the computation times required for estimating
the LEVI and SEVI models. Additionally, we include the multinomial probit
model with iid normal errors with variance $\pi^{2}/6$ in our comparison. For
the sake of uniformity with the nomenclature used for the other two models, we
will henceforth refer to this restricted multinomial probit model as the NORM model.

The data are generated according to $Y_{i}=\max_{j=1}^{J}\left\{  X_{ij}%
\beta+\varepsilon_{ij}\right\}  $ where $X_{ij,\ell}\thicksim iidN(0,\pi
^{2}\omega_{j}^{2}/36)$ over $\ell=1,\ldots,L$ (cf., equation (\ref{omega1}))
and $\varepsilon_{ij}\thicksim iid$ $\mathrm{SEVI}$ over $j.$ We fit the LEVI,
SEVI, NORM models to the so-generated data. All models have the same correct
linear specification $V_{ij}\left(  \beta\right)  =X_{ij}\beta$ for the
systematic utility but differ with respect to the distributional specification
of the random components.

We use a sample size of 1000, explore different numbers of alternatives $(J)$
ranging from 2 to 16, and consider two different numbers of attributes $(L=3$
and $L=10)$.\ To mimic real data analysis, we simulate only one sample with
model parameter $\beta_{0}$ drawn randomly from a multivariate standard normal distribution.

We document the time needed to obtain the ML/QML estimate of the model
parameters for each of the three models. The time represents the seconds spent
by our MATLAB program to find each point estimate. All computations are
performed on a desktop PC equipped with an Intel i7-13700F processor and 32GB
of RAM. While no parallel method is employed,\footnote{As noted earlier, this
represents one potential direction to explore for improving the computational
speed of the SEVI model for larger values of $J.$} we strive to vectorize all
operations as much as possible. For the SEVI model, the choice probabilities
are computed using the formula in Theorem \ref{Theorem: choice_prob_under_SEV}%
. In the case of the NORM model, the choice probabilities are obtained using
the Geweke--Hajivassiliou--Keane (GHK) simulator with 500 simulations (%
%TCIMACRO{\TeXButton{\citet{Geweke2001}}{\citet{Geweke2001}}}%
%BeginExpansion
\citet{Geweke2001}%
%EndExpansion
;
%TCIMACRO{\TeXButton{\citet{Hajivassiliou1998}}{\citet{Hajivassiliou1998}}}%
%BeginExpansion
\citet{Hajivassiliou1998}%
%EndExpansion
;
%TCIMACRO{\TeXButton{\citet{Keane1994}}{\citet{Keane1994}}}%
%BeginExpansion
\citet{Keane1994}%
%EndExpansion
).
%TCIMACRO{\FRAME{ftbpFU}{4.9926in}{2.5036in}{0pt}{\Qcb{Comparison of
%computation times for the LEVI, SEVI, and NORM (MNP with $\varepsilon
%_{ij}\thicksim iidN(0,\pi^{2}/6))$ models when $V_{ij}=X_{ij}\beta$}%
%}{\Qlb{non_asc_sevi_levi_norm_time_n1000}}%
%{non_asc_sevi_levi_norm_time_n1000_r500.eps}%
%{\special{ language "Scientific Word";  type "GRAPHIC";
%maintain-aspect-ratio TRUE;  display "USEDEF";  valid_file "F";
%width 4.9926in;  height 2.5036in;  depth 0pt;  original-width 6.7646in;
%original-height 3.3788in;  cropleft "0";  croptop "1";  cropright "1";
%cropbottom "0";
%filename '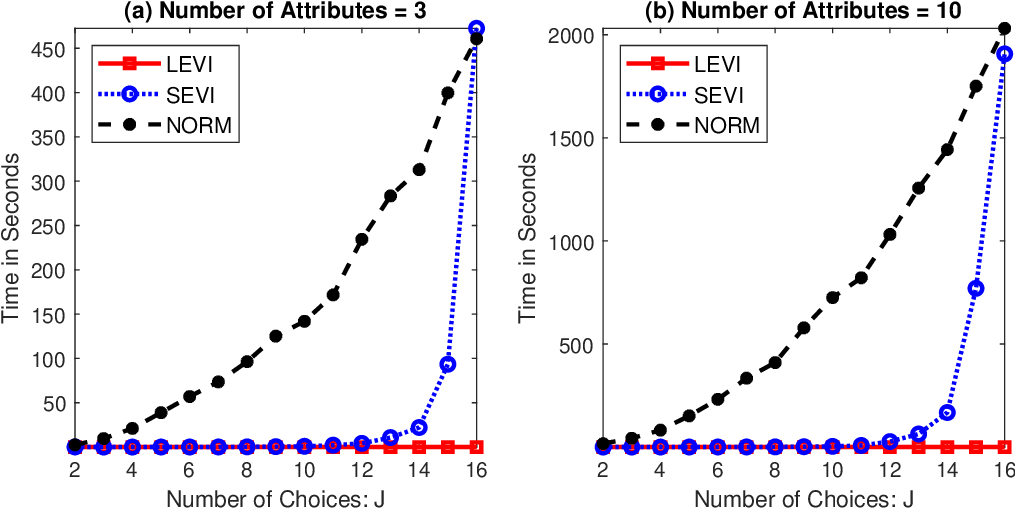';file-properties "XNPEU";}%
%}}%
%BeginExpansion
\begin{figure}[ptb]%
\centering
\includegraphics[
height=2.5036in,
width=4.9926in
]%
{Non_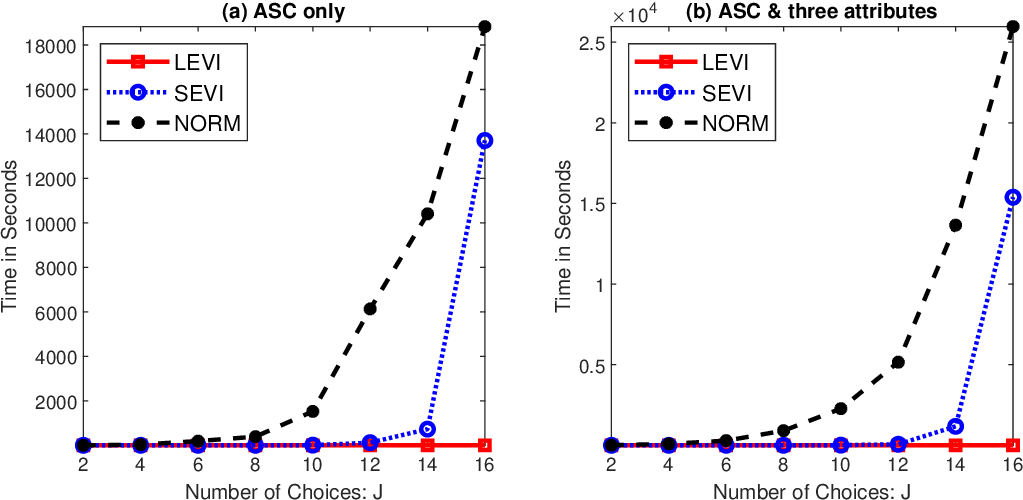}%
\caption{Comparison of computation times for the LEVI, SEVI, and NORM (MNP
with $\varepsilon_{ij}\thicksim iidN(0,\pi^{2}/6))$ models when $V_{ij}%
=X_{ij}\beta$}%
\label{non_asc_sevi_levi_norm_time_n1000}%
\end{figure}
%EndExpansion

Figure \ref{non_asc_sevi_levi_norm_time_n1000} plots the computation time
against the number of alternatives. On one hand, when $J\leq12,$ the
computation time for the SEVI model is comparable to that for the LEVI model.
When $J>12$, the computation time for the SEVI model starts to increase with
$J$ and becomes significantly larger than that for the LEVI model, especially
when $J$ reaches $16.$ On the other hand, when $J\leq12,$ the NORM model
exhibits a clear computational disadvantage compared to the LEVI and
SEVI\ models. While the computation time for each of the latter two models is
less than 27 seconds when the number of attributes is 10 and the number of
alternatives is 12, the corresponding computation time for the NORM model is
more than 1000 seconds. In our experiment, the SEVI model starts to become
computationally as costly as the NORM model only when $J=16.$

While the first DGP contains no alternative specific constant, the second
group of DGPs we consider does. We examine two scenarios: one with only
alternative-specific constants where $V_{ij}=Z_{i}\beta_{zj}$ for $Z_{i}=1$
(hereafter, ASC-only) and another with alternative-specific constants $\left(
Z_{i}\right)  $ and three attributes $X_{ij,\ell}\thicksim iidN(0,\pi
^{2}\omega_{j}^{2}/36),\ell=1,2,3$. In the latter case, $V_{ij}=Z_{i}%
\beta_{zj}+X_{ij,1}\beta_{x1}+X_{ij,2}\beta_{x2}+X_{ij,3}\beta_{x3}$ where
$Z_{i}=1.$ In both cases, $\varepsilon_{ij}\thicksim iid$ $\mathrm{SEVI}$ over
$j=1,\ldots,J.$ For each dataset, we fit the LEVI,\ SEVI, and NORM models with
correctly specified system utility. Figure \ref{asc_sevi_levi_norm_time_n1000}
reports the computation times. The general patterns observed in Figure
\ref{non_asc_sevi_levi_norm_time_n1000} carry over to this figure. When
$J\leq12,$ the SEVI and LEVI models require less than 65 seconds for
estimation, with much shorter times for a smaller $J.$ In contrast, the time
needed for the NORM model grows rapidly as $J$ increases from 2 to 12. For
example, when $J=10$ and $12,$ the required times for the NORM model for the
ASC-only case are 2274 and 5159 seconds, respectively. In other words,
estimating a standard multinomial probit model takes between half an hour and
two hours when $J$ is about 10 or 12, whereas the corresponding time for a
SEVI model is just about 1 minute. As before, the SEVI model becomes
computationally less competitive for $J>12$ and has a computational cost
starting to approach the NORM model when $J$ reaches 16.
%TCIMACRO{\FRAME{ftbpFU}{5.0522in}{2.3739in}{0pt}{\Qcb{Comparison of
%computation times for the LEVI, SEVI, and NORM (standard MNP) models when
%$V_{ij}=Z_{i}\beta_{zj}$ or $V_{ij}=Z_{i}\beta_{zj}+X_{ij}\beta_{x}$}%
%}{\Qlb{asc_sevi_levi_norm_time_n1000}}{asc_sevi_levi_norm_time_n1000_r500.eps}%
%{\special{ language "Scientific Word";  type "GRAPHIC";  display "USEDEF";
%valid_file "F";  width 5.0522in;  height 2.3739in;  depth 0pt;
%original-width 7.8776in;  original-height 3.3762in;  cropleft "0";
%croptop "1";  cropright "1";  cropbottom "0";
%filename 'asc_sevi_levi_norm_time_N1000_R500.eps';file-properties "XNPEU";}}}%
%BeginExpansion
\begin{figure}[ptb]%
\centering
\includegraphics[
height=2.3739in,
width=5.0522in
]%
{}%
\caption{Comparison of computation times for the LEVI, SEVI, and NORM
(standard MNP) models when $V_{ij}=Z_{i}\beta_{zj}$ or $V_{ij}=Z_{i}\beta
_{zj}+X_{ij}\beta_{x}$}%
\label{asc_sevi_levi_norm_time_n1000}%
\end{figure}
%EndExpansion

\section{Some Extensions \label{Sec: extension}}

\subsection{Mixed LEVI-SEVI Model\label{Subsection: mixed LEVI-SEVI}}

One way to understand the random components in an RUM is to conceptualize them
as capturing private signals exclusive to the individual agent making the
choice, yet undisclosed to the econometrician (cf.,
%TCIMACRO{\TeXButton{Manski1977}{\citet{Manski1977}}}%
%BeginExpansion
\citet{Manski1977}%
%EndExpansion
). These signals may exhibit left or right skewness, varying across
individuals. This variability suggests the possibility of two distinct groups
of individuals: one characterized by the LEVI type and the other by the SEVI
type. While the group membership of a specific individual remains
unobservable, it is still useful to estimate the relative sizes of these two
groups for the purpose of prediction and policy analysis.

Assume that individuals in the population have identical preference parameters
but may differ in terms of random utility components. We postulate that with a
probability of $\rho_{0},$ $\varepsilon_{ij}\thicksim iid$ $\mathrm{SEVI}$
over $j=1,\ldots,J$ and with a probability of $1-\rho_{0},$ $\varepsilon
_{ij}\thicksim iid$ $\mathrm{LEVI}$ over $j=1,\ldots,J.$ In such a mixed
LEVI-SEVI model, the probability of a randomly chosen individual, say
individual $i,$ selecting alternative $j,$ given the vector of systematic
utilities $v_{i},$ is%
\[
\Pr\left(  Y_{i}=j|V_{i}=v_{i}\right)  =\rho_{0}P^{\mathrm{SEVI}}\left(
Y_{i}=j|V_{i}=v_{i}\right)  +\left(  1-\rho_{0}\right)  P^{\mathrm{LEVI}%
}\left(  Y_{i}=j|V_{i}=v_{i}\right)  .
\]
Here, $\rho_{0}$ and $\left(  1-\rho_{0}\right)  $ can also be interpreted as
the proportions of individuals belonging to the SEVI and LEVI types,
respectively, in the population of interest.

This new latent class mixed LEVI-SEVI model (mixed LS model hereafter) differs
from a usual latent class logit model where the classes have different
preference parameters but the same random component. In the mixed LS model,
identical preference parameters are assumed, but different random components
are considered. Hence, it can be regarded as a mirror image of the usual
latent class logit model.

While the identification, estimation, and inference of the mixed LS model
warrant rigorous theoretical analysis, we present some simulation evidence
suggesting that the key mixing parameter $\rho_{0}$ can be reliably estimated.
We consider a parametrization of the systematic utilities as before where
$X_{ij,\ell}\thicksim iidN(0,\pi^{2}\omega_{j}^{2}/36)$ for $\ell=1,2,3$ and
$\beta_{0}=\left(  1,2,1\right)  ^{\prime}$. We let the sample size be 5000
and consider $J=3,5,7,9.$ We estimate the model by MLE. Figure
\ref{Fig: boxplot_rho_hat_vs_rho0_no_outlier_n5000} presents a standard
boxplot of the MLEs over 1000 simulation replications against the true value
$\rho_{0}=0,0.1,0.2,\ldots,1.$ The diamond and bar markers inside each box
indicate, respectively, the average and median of the MLEs for a given true
value $\rho_{0},$ and the 45-degree line represents the true target
line.\ Both the average and the median of the estimates are fairly close to
the true value. The interquartile range (the difference between the 75th
percentile and the 25th percentile) indicates some uncertainty in the
estimator, particularly when $J$ is smaller, but this uncertainty diminishes
as $J$ increases. Figures not reported here (see, for example, Figure
\ref{Fig: boxplot_bhat1_vs_rho0_no_outlier_n5000}) show that the preference
parameters can also be reliably estimated. Overall, our simulation results
suggest that, with a reasonably large sample size, we can extract useful
information on the mixing parameter and preference parameters. Consequently, a
mixed LS model is an interesting alternative in empirical applications,
especially given that it includes a SEVI model $(\rho_{0}=1)$ and a LEVI model
$(\rho_{0}=0)$ as special cases. A natural extension is to specify the mixing
parameter as a function of observable covariates that may influence whether an
agent's random component is LEVI or SEVI.
%TCIMACRO{\FRAME{fhFU}{4.7202in}{3.6841in}{0pt}{\Qcb{Boxplot of $\hat{\rho}$
%for each true value $\rho_{0}=0,0.1,\ldots,1$ with sample size $5000$ and
%$J=3,5,7,9$}}{\Qlb{Fig: boxplot_rho_hat_vs_rho0_no_outlier_n5000}%
%}{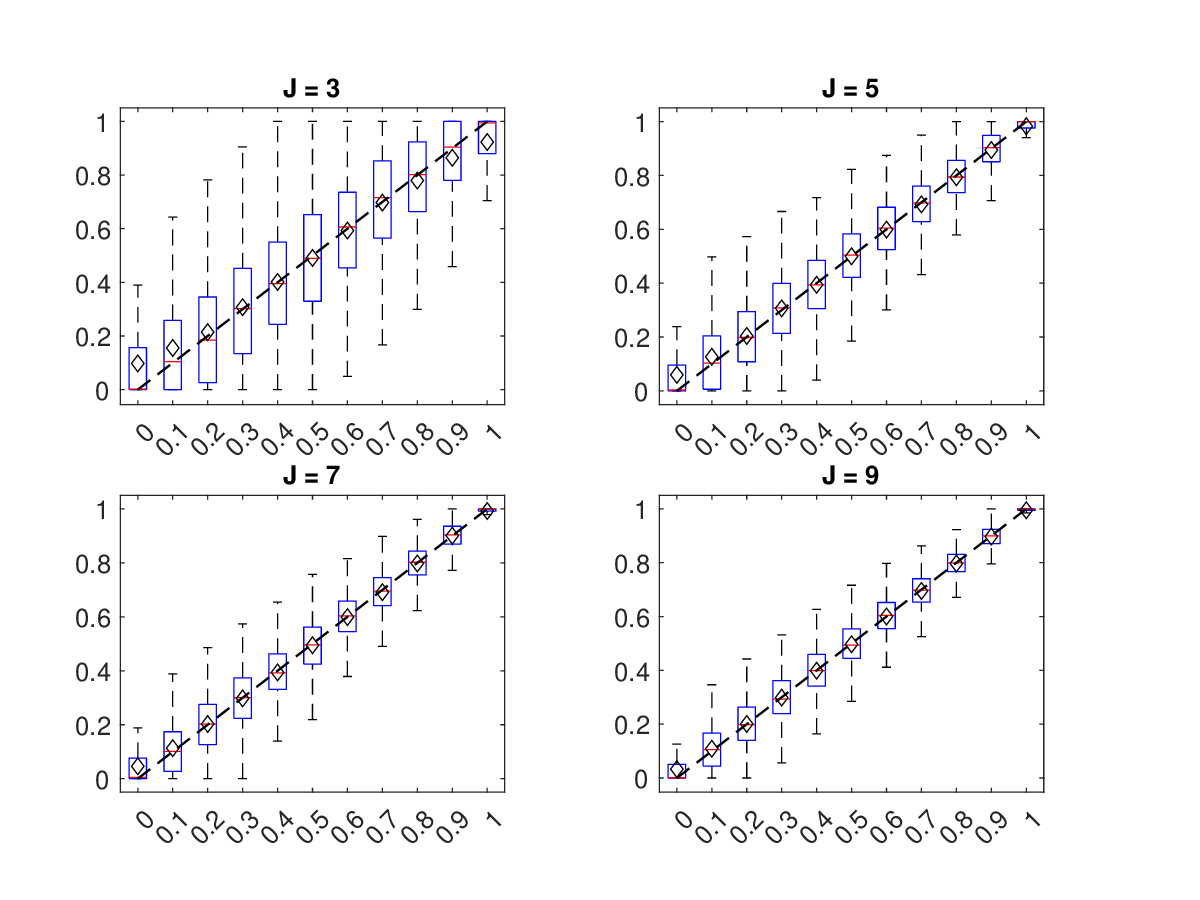}%
%{\special{ language "Scientific Word";  type "GRAPHIC";
%maintain-aspect-ratio TRUE;  display "USEDEF";  valid_file "F";
%width 4.7202in;  height 3.6841in;  depth 0pt;  original-width 8.003in;
%original-height 5.9949in;  cropleft "0.0358";  croptop "0.9359";
%cropright "0.8849";  cropbottom "0.0530";
%filename 'boxplot_rho_hat_vs_rho0_no_outlier_n5000.eps';file-properties "XNPEU";}%
%}}%
%BeginExpansion
\begin{figure}[h]%
\centering
\includegraphics[
trim=0.286507in 0.317730in 0.921146in 0.384273in,
height=3.6841in,
width=4.7202in
]%
{}%
\caption{Boxplot of $\hat{\rho}$ for each true value $\rho_{0}=0,0.1,\ldots,1$
with sample size $5000$ and $J=3,5,7,9$}%
\label{Fig: boxplot_rho_hat_vs_rho0_no_outlier_n5000}%
\end{figure}
%EndExpansion

It is interesting to see how the QMLE under the usual conditional logit
behaves when the true DGP is a mixed LS model. Figure
\ref{Fig: bhat_levi_vs_rho0_n5000} plots the average of $\hat{\beta
}_{\mathrm{LEVI}}$ across the simulation replications against the true value
of $\rho_{0}.$ When $\rho_{0}=0,$ the LEVI model is correctly specified,
$\hat{\beta}_{\mathrm{LEVI}}$ is close to the true value $\beta_{0}$ in an
average sense. When $\rho_{0}$ increases, signaling a higher misspecification
of the LEVI model, the bias of $\hat{\beta}_{\mathrm{LEVI}}$ increases. For a
given $\rho_{0}>0,$ the bias of $\hat{\beta}_{\mathrm{LEVI}}$ is larger when
there are more alternatives. When $\rho_{0}=1,$ the DGP reduces to a SEVI
model, and the figure aligns with the simulation results reported earlier.
%TCIMACRO{\FRAME{fhFU}{4.6769in}{3.7913in}{0pt}{\Qcb{Plot of the average of
%$\hat{\beta}_{\QTR{rm}{LEVI}}$ against $\rho_{0}$ with sample size 5000 and
%$J=3,5,7,9$}}{\Qlb{Fig: bhat_levi_vs_rho0_n5000}}{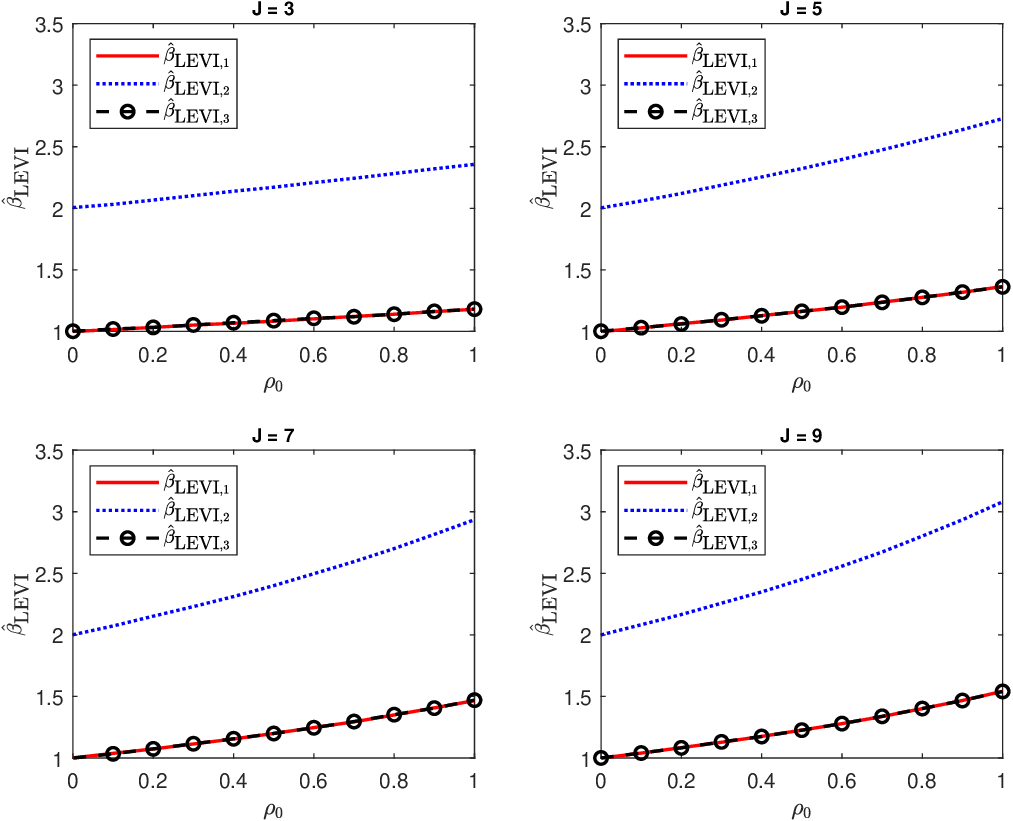}%
%{\special{ language "Scientific Word";  type "GRAPHIC";  display "USEDEF";
%valid_file "F";  width 4.6769in;  height 3.7913in;  depth 0pt;
%original-width 8.003in;  original-height 5.9949in;  cropleft "0";
%croptop "1";  cropright "1";  cropbottom "0";
%filename 'bhat_levi_vs_rho0_n5000.eps';file-properties "XNPEU";}}}%
%BeginExpansion
\begin{figure}[h]%
\centering
\includegraphics[
height=3.7913in,
width=4.6769in
]%
{}%
\caption{Plot of the average of $\hat{\beta}_{\mathrm{LEVI}}$ against
$\rho_{0}$ with sample size 5000 and $J=3,5,7,9$}%
\label{Fig: bhat_levi_vs_rho0_n5000}%
\end{figure}
%EndExpansion

Finally, we evaluate the performance of a conventional mixed logit model when
applied to data generated by a mixed LS model where $X_{ij,\ell}\thicksim
iidN(0,\pi^{2}\omega_{j}^{2}/36)$ for $\ell=1,2,3$ and $\beta_{0}=\left(
1,2,1\right)  ^{\prime}$. We let $\rho_{0}=0,0.25,0.50,0.75,$ and $1$ in the
mixed LS model. We consider a large sample size of 50,000 and generate only
one sample. Our experiment can be regarded as a large sample experiment. The
preference parameters in the proposed mixed logit model are assumed to follow
independent normal distributions. More specifically, $U_{ij}=X_{ij,1}%
\beta_{i,1}+X_{ij,2}\beta_{i,2}+X_{ij,3}\beta_{i,3}+\varepsilon_{ij}$ where
$\varepsilon_{ij}$ is iid LEVI and $\beta_{i}:=\left(  \beta_{i,1},\beta
_{i,2},\beta_{i,3}\right)  ^{\prime}\thicksim\mathrm{iidN}(\beta
,\mathrm{diag}\left(  \sigma_{1}^{2},\sigma_{2}^{2},\sigma_{3}^{3}\right)  ).$
We refer to $\beta$ as the mean parameter and $\sigma=(\sigma_{1},\sigma
_{2},\sigma_{3})^{\prime}$ as the standard deviation parameter. After fitting
this mixed logit to the data generated by the mixed LS model, we consider
three types of tests. First, we test whether $\beta=\beta_{0}$ using a
standard Wald test. Second, we use the likelihood-ratio (LR) test for testing
the null hypothesis of a logit model with fixed taste parameters (i.e.,
$\sigma^{2}=0)$ against the mixed logit. Finally, we use a generalized
Hausman--McFadden (HM) test (%
%TCIMACRO{\TeXButton{\citet{Hahn2020}}{\citet{Hahn2020}}}%
%BeginExpansion
\citet{Hahn2020}%
%EndExpansion
) for testing the null hypothesis that the mixed logit model is correctly
specified. As in
%TCIMACRO{\TeXButton{HausmanMcFadden1984}{\citet{HausmanMcFadden1984}}}%
%BeginExpansion
\citet{HausmanMcFadden1984}%
%EndExpansion
, we first estimate the mixed logit model using the full sample and the
subsample obtained by excluding individuals who chose the last alternative. We
then compare these two estimates using a chi-square test.\footnote{Similar to
(\ref{Generalized_GM}), the choice probabilities with the last alternative
removed represent the probabilities as perceived by the econometrician. They
are equal to the conditional choice probabilities, given that the last
alternative was not chosen. Importantly, they do not correspond to the choice
probabilities when individuals are presented with the first $J-1$
alternatives. Additional discussions on this point can be found in\
%TCIMACRO{\TeXButton{\citet{Hahn2020}}{\citet{Hahn2020}}}%
%BeginExpansion
\citet{Hahn2020}%
%EndExpansion
.} The comparison for the generalized HM test can be based on either the mean
parameters or both the mean and standard deviation parameters. The generalized
HM test targets the IIA property of mixed logit models at the individual
level, and can be regarded as an IIA-oriented specification test that assesses
whether the random utility components are iid LEVI. We set the nominal level
for all three tests at 5\%.

We provide a brief discussion of the simulation results. First, for all $J$
values $(3,5,7,9)$ considered, the mean parameter estimator has a positive
bias, and the bias increases as $\rho_{0}$ or $J$ increases. The Wald test
rejects the null that the mean parameter is equal to $\beta_{0}$ when
$\rho_{0}\geq0.25$ for all $J$ values. This suggests that a mixed logit model
fitted to the data generated by a mixed LS model can not reliably recover the
mean preference parameters.

Second, the LR test fails to reject the null hypothesis of fixed taste
parameters when either $\rho_{0}$ smaller than or equal to $0.5$ or when $J$
$\leq5$. In these cases, the mixed logit does not significantly improve the
fit relative to the fixed-parameter logit model, with the standard deviation
parameters typically being insignificant. This suggests that the normally
distributed taste parameters are not able to pick up the SEVI random
component. On the other hand, the LR test rejects the null of fixed taste
parameters when both $\rho_{0}$ and $J$ are large. In these cases, a large
$\rho_{0}$ $(\rho_{0}\geq0.75)$ and a large $J$ $\left(  J\geq7\right)  $
prompt the LR test to mistake heterogeneity in the random utility components
for heterogeneity in the taste parameters. While the LR test tends to be
conservative as the variance parameters are on the boundary under the null,
our simulation results highlight that its use can lead to the incorrect
conclusion of taste heterogeneity when the taste parameters are the same for
all individuals.

Finally, the generalized HM test based on the mean parameters rejects the null
of iid LEVI random utility components when $\rho_{0}\geq0.75$ and $J\geq7$. On
the other hand, the generalized HM test based on both the mean and standard
deviation parameters rejects the null of iid LEVI random utility components
for almost all $\rho_{0}$ and $J$ values considered.

These findings generally remain valid with minimal exceptions when the random
preference parameters follow a normal distribution with an unrestricted
variance matrix.

In summary, a mixed logit model with correlated or uncorrelated preference
parameters will not be far off when $\rho_{0}$ is small, but it may yield
misleading preference estimates even with a moderately large $\rho_{0}$.
Notably, the mixed logit model can mistake heterogeneity in the random utility
components for taste heterogeneity, particularly in situations with a large
$\rho_{0}$ and a large $J$. Consequently, the mixed logit model is not a
reliable substitute for the mixed LS model.

\subsection{Cost Minimizing Choice\label{Subsection: Cost}}

In some empirical applications, economic agents may choose one alternative
from a set of alternatives to minimize a specific objective function, such as
cost, regret, disutility, or loss, rather than maximizing it. For instance, in
the study by
%TCIMACRO{\TeXButton{Fowlie2010}{\citet{Fowlie2010}}}%
%BeginExpansion
\citet{Fowlie2010}%
%EndExpansion
, which we will examine in the next section, plant managers choose an
environmental compliance strategy in order to minimize their overall cost.

To model the minimization behavior, we assume that there is a latent function
$C_{ij}$ that encompasses an observable component $D_{ij}\left(  \beta
_{0}\right)  $ and an unobservable component $\varepsilon_{ij},$ so that%
\[
C_{ij}\left(  \beta_{0}\right)  =D_{ij}\left(  \beta_{0}\right)
+\varepsilon_{ij}\text{ for }j=1,\ldots,J\text{ }%
\]
where $C_{ij}\left(  \beta_{0}\right)  $ is the `cost' that individual $i$
incurs from choosing alternative $j.$ Individuals then make a choice that
results in the smallest $C$ value. More specifically, the observed choice of
individual $i$, denoted by $Y_{i},$ equals
\[
Y_{i}=\arg\min_{j=1,\ldots,J}C_{ij}\left(  \beta_{0}\right)  .
\]

Assuming that $\left\{  D_{ij}\left(  \beta_{0}\right)  \right\}  _{j=1}^{J}$
and $\left\{  \varepsilon_{ij}\right\}  _{j=1}^{J}$ are independent of each
other, we consider two cases: either $\varepsilon_{ij}\thicksim^{iid}%
\mathrm{LEVI}$ or $\varepsilon_{ij}\thicksim^{iid}\mathrm{SEVI.}$ Let
$D_{i}\left(  \beta_{0}\right)  =(D_{i1}\left(  \beta_{0}\right)
,\ldots,D_{iJ}\left(  \beta_{0}\right)  )^{\prime}\in\mathbb{R}^{J}$ and
$d_{i}\in\mathbb{R}^{J}$ be a point in the support of $D_{i}\left(  \beta
_{0}\right)  .$ In the former case, we have
\begin{align*}
\Pr(Y_{i}  &  =j|D_{i}\left(  \beta_{0}\right)  =d_{i},\varepsilon
_{i,m}\overset{iid}{\thicksim}\mathrm{LEVI})\\
&  =\Pr\left(  d_{ij}+\varepsilon_{i,j}\leq d_{ik}+\varepsilon_{i,k}\text{,
}k\in\mathcal{J},\varepsilon_{i,m}\overset{iid}{\thicksim}\mathrm{LEVI}\right)
\\
&  =\Pr\left(  -d_{ij}-\varepsilon_{i,j}\geq d_{ik}-\varepsilon_{i,k},\text{
}k\in\mathcal{J},\text{ }\varepsilon_{i,m}\overset{iid}{\thicksim
}\mathrm{LEVI}\right) \\
&  =\Pr\left(  -d_{ij}+\varepsilon_{i,j}\geq-d_{ik}+\varepsilon_{i,k},\text{
}k\in\mathcal{J},\varepsilon_{i,m}\overset{iid}{\thicksim}\mathrm{SEVI}\right)
\\
&  =P^{\mathrm{SEVI}}\left(  Y_{i}=j|-d_{i}\right)  ,
\end{align*}
where $P^{\mathrm{SEVI}}\left(  Y_{i}=j|-d_{i}\right)  $ is defined in
(\ref{SEVI_prob_formula}). In the latter case, we have, using the same
argument,
\[
\Pr(Y_{i}=j|D_{i}\left(  \beta_{0}\right)  =d_{i},\varepsilon_{i,m}%
\overset{iid}{\thicksim}\mathrm{SEVI})=P^{\mathrm{LEVI}}\left(  Y_{i}%
=j|-d_{i}\right)  ,
\]
where $P^{\mathrm{LEVI}}\left(  Y_{i}=j|-d_{i}\right)  $ is defined in
(\ref{LEVI_prob_formula}). So, for discrete choice problems aiming at
minimizing an objective function, when $\varepsilon_{i,j}%
\overset{iid}{\thicksim}\mathrm{LEVI,}$ the choice probability takes the form
of (\ref{SEVI_prob_formula}). In contrast, when $\varepsilon_{i,j}%
\overset{iid}{\thicksim}\mathrm{SEVI,}$ the choice probability\ takes the
usual logit form.

Note that there is a sign change. It is $-d_{i}$ rather than $d_{i}$ itself
that enters the expression of the choice probability derived under utility
maximization. If we parameterize $D_{ij}\left(  \beta_{0}\right)  =X_{ij}%
\beta_{0}$ for some covariate $X_{ij}$ and run a conditional logit using
$X_{ij}$ as the covariate in a statistical package, then we are assuming that
$\varepsilon_{i,j}\overset{iid}{\thicksim}\mathrm{SEVI}$, and what the package
returns is an estimate of $-\beta_{0}.$ To correct this, we can flip the sign
of the estimate or use the flipped covariate $-X_{ij}$ as the covariate.

If we deem that $\varepsilon_{i,j}\overset{iid}{\thicksim}\mathrm{LEVI}$ is
more plausible, then the choice probability formula in
(\ref{SEVI_prob_formula}) has to be used, and the approach proposed in this
paper can be easily applied.

Irrespective of whether we are dealing with a minimization or maximization
choice problem, we call a model with SEVI errors a SEVI model and a model with
LEVI errors a LEVI model. In the case of maximization problems, estimating a
LEVI model is more straightforward, while for minimization problems, a SEVI
model is easier to estimate.

\section{Empirical Applications\label{Sec:Emp}}

In this section, we examine four commonly used datasets from well-cited
textbooks and papers, all of which have been included in the
%TCIMACRO{\TeXButton{mlogit}{\textsf{mlogit}} }%
%BeginExpansion
\textsf{mlogit}
%EndExpansion
package in
%TCIMACRO{\TeXButton{R}{\textsf{R}} }%
%BeginExpansion
\textsf{R}
%EndExpansion
(cf.,
%TCIMACRO{\TeXButton{Croissant2020}{\citet{Croissant2020}}}%
%BeginExpansion
\citet{Croissant2020}%
%EndExpansion
). For each dataset, we apply three distinct models: the LEVI model, the SEVI
model, and the NORM model (i.e., a multinomial probit model with iid $N\left(
0,\pi^{2}/6\right)  $ errors). The variances of random utility components of
all three models are the same and equal to $\pi^{2}/6.$ The sole difference
among these three models lies in the shape of the distributions of the random
utility components.\footnote{Note that we are conducting a comparative
analysis solely among the three basic models, and we do not claim that any of
these models is consistent with the true underlying DGP.}

The four datasets we analyze are:

\begin{enumerate}
\item Fishing mode choice: the dataset comes from\ (%
%TCIMACRO{\TeXButton{Herriges_Kling1999}{\citet{Herriges_Kling1999}}}%
%BeginExpansion
\citet{Herriges_Kling1999}%
%EndExpansion
), which has been analyzed in several textbooks, including
%TCIMACRO{\TeXButton{ cameron_trivedi_2005}{\citet{ cameron_trivedi_2005}}}%
%BeginExpansion
\citet{ cameron_trivedi_2005}%
%EndExpansion
,
%TCIMACRO{\TeXButton{cameron_trivedi_2022a}{\citet{cameron_trivedi_2022a}}}%
%BeginExpansion
\citet{cameron_trivedi_2022a}%
%EndExpansion
, and
%TCIMACRO{\TeXButton{cameron_trivedi_2022b}{\citet{ cameron_trivedi_2022b}}}%
%BeginExpansion
\citet{ cameron_trivedi_2022b}%
%EndExpansion
.

\item Stated preference vehicle choice: the dataset originates from
%TCIMACRO{\TeXButton{BROWNSTONE1996}{\citet{BROWNSTONE1996}} }%
%BeginExpansion
\citet{BROWNSTONE1996}
%EndExpansion
and is subsequently used in
%TCIMACRO{\TeXButton{BROWNSTONE1998}{\citet{BROWNSTONE1998}} }%
%BeginExpansion
\citet{BROWNSTONE1998}
%EndExpansion
and
%TCIMACRO{\TeXButton{McFadden_Train2000}{\citet{McFadden_Train2000}}}%
%BeginExpansion
\citet{McFadden_Train2000}%
%EndExpansion
.

%Data folder: V:\Research8\SEVI\Empirics\car_mloigt_R
%See also page 3 of https://cran.r-project.org/web/packages/mlogit/mlogit.pdf

\item Choice of brand for saltine crackers: the dataset originates from
%TCIMACRO{\TeXButton{Paap_Franses2000}{\citet{Paap_Franses2000}}}%
%BeginExpansion
\citet{Paap_Franses2000}%
%EndExpansion
, and a subset of the data was analyzed in
%TCIMACRO{\TeXButton{Jain_etc1994}{\citet{Jain_etc1994}}}%
%BeginExpansion
\citet{Jain_etc1994}%
%EndExpansion
.

%Data folder: V:\Research8\SEVI\Empirics\car_mloigt_R
%See also page 3 of https://cran.r-project.org/web/packages/mlogit/mlogit.pdf

\item Choice of NO$_{x}$ emissions compliance strategies: the dataset comes
from
%TCIMACRO{\TeXButton{Fowlie2010}{\citet{Fowlie2010}}}%
%BeginExpansion
\citet{Fowlie2010}%
%EndExpansion
.

\end{enumerate}

Table \ref{Table: empirics1} presents the computation times and log-likelihood
performances of the three models. In all cases, the computational time
represents the seconds spent by our MATLAB program on finding the MLE/QMLE.
Although there may be slight fluctuations in the reported computation times
due to concurrent background processes, they serve as a reliable indicator of
the time needed for estimating each model. We will revisit this table when we
discuss each application.%

%TCIMACRO{\TeXButton{B}{\begin{table}[tbp] \centering}}%
%BeginExpansion
\begin{table}[tbp] \centering
%EndExpansion
\caption{Performances of LEVI, SEVI, and NORM on several data sets}%

\begin{tabular}
[c]{c|ccc|ccc|ccc|c}\hline\hline
{\small Data Set} & ${\small n}$ & ${\small J}$ & ${\small L}$ &
\multicolumn{3}{c|}{{\small Computation time}} &
\multicolumn{3}{c|}{{\small Log-likelihood}} & {\small Likelihood}\\\hline
&  &  &  & {\small LEVI} & {\small SEVI} & {\small NORM} & {\small LEVI} &
{\small SEVI} & {\small NORM} & {\small Ranking}\\\hline
\multicolumn{1}{l|}{{\small Fishing}} & \multicolumn{1}{|r}{{\small 1182}} &
\multicolumn{1}{r}{{\small 4}} & \multicolumn{1}{r|}{{\small 8}} &
\multicolumn{1}{|r}{{\small 0.35}} & \multicolumn{1}{r}{{\small 0.58}} &
\multicolumn{1}{r|}{{\small 370}} & \multicolumn{1}{|r}{{\small -1215.14}} &
\multicolumn{1}{r}{{\small -1213.21}} & \multicolumn{1}{r|}{{\small -1218.93}}
& {\small S
%TCIMACRO{\TEXTsymbol{>} }%
%BeginExpansion
$>$
%EndExpansion
L
%TCIMACRO{\TEXTsymbol{>} }%
%BeginExpansion
$>$
%EndExpansion
N}\\
\multicolumn{1}{l|}{{\small Vehicles}} & \multicolumn{1}{|r}{{\small 4654}} &
\multicolumn{1}{r}{{\small 6}} & \multicolumn{1}{r|}{{\small 21}} &
\multicolumn{1}{|r}{{\small 3.57}} & \multicolumn{1}{r}{{\small 14.00}} &
\multicolumn{1}{r|}{{\small 2232}} & \multicolumn{1}{|r}{{\small -7394.62}} &
\multicolumn{1}{r}{{\small -7388.75}} & \multicolumn{1}{r|}{{\small -7389.50}}
& {\small S
%TCIMACRO{\TEXTsymbol{>} }%
%BeginExpansion
$>$
%EndExpansion
N
%TCIMACRO{\TEXTsymbol{>} }%
%BeginExpansion
$>$
%EndExpansion
L}\\
\multicolumn{1}{l|}{{\small Crackers}} & \multicolumn{1}{|r}{{\small 3289}} &
\multicolumn{1}{r}{{\small 4}} & \multicolumn{1}{r|}{{\small 6}} &
\multicolumn{1}{|r}{{\small 0.31}} & \multicolumn{1}{r}{{\small 0.55}} &
\multicolumn{1}{r|}{{\small 157}} & \multicolumn{1}{|r}{{\small -3347.61}} &
\multicolumn{1}{r}{{\small -3347.13}} & \multicolumn{1}{r|}{{\small -3344.51}}
& {\small N
%TCIMACRO{\TEXTsymbol{>} }%
%BeginExpansion
$>$
%EndExpansion
S
%TCIMACRO{\TEXTsymbol{>} }%
%BeginExpansion
$>$
%EndExpansion
L}\\
\multicolumn{1}{l|}{{\small NO}$_{x}${\small : dereg}} &
\multicolumn{1}{|r}{{\small 227}} & \multicolumn{1}{r}{{\small 15}} &
\multicolumn{1}{r|}{{\small 6}} & \multicolumn{1}{|r}{{\small 135.00}} &
\multicolumn{1}{r}{{\small 0.02}} & \multicolumn{1}{r|}{{\small 791}} &
\multicolumn{1}{|r}{{\small -345.35}} & \multicolumn{1}{r}{{\small -339.07}} &
\multicolumn{1}{r|}{{\small -343.21}} & {\small S
%TCIMACRO{\TEXTsymbol{>} }%
%BeginExpansion
$>$
%EndExpansion
N
%TCIMACRO{\TEXTsymbol{>} }%
%BeginExpansion
$>$
%EndExpansion
L}\\
\multicolumn{1}{l|}{{\small NO}$_{x}${\small : public}} &
\multicolumn{1}{|r}{{\small 113}} & \multicolumn{1}{r}{{\small 15}} &
\multicolumn{1}{r|}{{\small 6}} & \multicolumn{1}{|r}{{\small 78.00}} &
\multicolumn{1}{r}{{\small 0.01}} & \multicolumn{1}{r|}{{\small 2357}} &
\multicolumn{1}{|r}{{\small -86.30}} & \multicolumn{1}{r}{{\small -78.46}} &
\multicolumn{1}{r|}{{\small -82.38}} & {\small S
%TCIMACRO{\TEXTsymbol{>} }%
%BeginExpansion
$>$
%EndExpansion
N
%TCIMACRO{\TEXTsymbol{>} }%
%BeginExpansion
$>$
%EndExpansion
L}\\
\multicolumn{1}{l|}{{\small NO}$_{x}${\small : reg}} &
\multicolumn{1}{|r}{{\small 292}} & \multicolumn{1}{r}{{\small 15}} &
\multicolumn{1}{r|}{{\small 6}} & \multicolumn{1}{|r}{{\small 279.00}} &
\multicolumn{1}{r}{{\small 0.01}} & \multicolumn{1}{r|}{{\small 4600}} &
\multicolumn{1}{|r}{{\small -364.99}} & \multicolumn{1}{r}{{\small -359.74}} &
\multicolumn{1}{r|}{{\small -365.96}} & {\small S
%TCIMACRO{\TEXTsymbol{>} }%
%BeginExpansion
$>$
%EndExpansion
L
%TCIMACRO{\TEXTsymbol{>} }%
%BeginExpansion
$>$
%EndExpansion
N}\\\hline\hline
\multicolumn{11}{l}{{\small (i) }${\small n}$$\ ${\small is the \# of choice
situations, }${\small J}$ {\small is the (max) \# of alternatives, and
}${\small L}$ {\small is the total \# of covariates.}}\\
\multicolumn{11}{l}{{\small (ii) \textquotedblleft NO}$_{x}${\small :
dereg\textquotedblright\ is the subsample of deregulated plants;
\textquotedblleft NO}$_{x}${\small : public\textquotedblright\ is the
subsample of public plants,}}\\
\multicolumn{11}{l}{\ \ \ \ \ \ {\small and} {\small \textquotedblleft
NO}$_{x}${\small : reg\textquotedblright\ is the subsample of
\textquotedblleft regulated\textquotedblright\ plants.}}\\
\multicolumn{11}{l}{{\small (iii) All computation times are measured in
seconds. }}\\
\multicolumn{11}{l}{{\small (iv) In the likelihood ranking, S, L, and N are
short for SEVI, LEVI, and NORM, respectively.}}%
\end{tabular}
\label{Table: empirics1}%
%TCIMACRO{\TeXButton{E}{\end{table}}}%
%BeginExpansion
\end{table}%
%EndExpansion

\subsection{Application: Choice of Fishing Mode}

For the fishing mode application, we model an individual's choice of fishing
modes among four mutually exclusive alternatives: beach $(j=1)$, pier $\left(
j=2\right)  $, private boat $(j=3)$, and charter boat $(j=4)$. We assume that
the latent utility is given by
\begin{align*}
U_{ij}  &  =X_{1j}\beta_{x1}+X_{2j}\beta_{x2}+D2_{j}\times\beta_{d2}%
+D3_{j}\times\beta_{d3}+D4_{j}\times\beta_{d4}\\
&  +\left(  Z_{i}\cdot D2_{j}\right)  \times\beta_{z2}+\left(  Z_{i}\cdot
D3_{j}\right)  \times\beta_{z3}+\left(  Z_{i}\cdot D4_{j}\right)  \times
\beta_{z4}+\varepsilon_{ij},
\end{align*}
where $X_{1j}$ and $X_{2j}$ are the price and catch rate of alternative $j$,
$D2_{j},D3_{j}$, and $D4_{j}$ are the dummy variables for alternatives 2, 3,
and 4, respectively, and $Z_{i}$ is the income of individual $i.$ For this
application, the LEVI and SEVI models yield similar parameter estimates and
likelihood function values. The correlation coefficient between the in-sample
predicted choice probabilities for the two models is 0.998, indicating a
nearly perfect positive correlation between them. Both models can be estimated
with minimal computational cost with the optimization algorithm completing in
less than a minute. Nevertheless, the SEVI model has a higher log-likelihood
than the LEVI model. In contrast, the multinomial probit incurs significantly
higher computational costs and yields a lower log-likelihood compared to both
the LEVI and SEVI models.

Figure \ref{figure: fishing_pdiff_vs_plevi} plots the difference between the
predicted choice probabilities (i.e., $P^{\mathrm{SEVI}}(\hat{\beta
}_{\mathrm{SEVI}})-P^{\mathrm{LEVI}}(\hat{\beta}_{\mathrm{LEVI}})$ or
$P^{\mathrm{NORM}}(\hat{\beta}_{\mathrm{NORM}})-P^{\mathrm{LEVI}}(\hat{\beta
}_{\mathrm{LEVI}}))$ against $P^{\mathrm{LEVI}}(\hat{\beta}_{\mathrm{LEVI}}).$
While all three predicted choice probabilities are very highly correlated, the
figure illustrates some subtle distinctions. Specifically, as
$P^{\mathrm{LEVI}}(\hat{\beta}_{\mathrm{LEVI}})$ increases, the variabilities
of $P^{\mathrm{SEVI}}(\hat{\beta}_{\mathrm{SEVI}})-P^{\mathrm{LEVI}}%
(\hat{\beta}_{\mathrm{LEVI}})$ and $P^{\mathrm{NORM}}(\hat{\beta
}_{\mathrm{NORM}})-P^{\mathrm{LEVI}}(\hat{\beta}_{\mathrm{LEVI}})$ also
increase, indicating that the three models tend to align closely when the
choice probabilities are low but exhibit discrepancies otherwise.%

%TCIMACRO{\FRAME{ftbpFU}{5.3575in}{2.2615in}{0pt}{\Qcb{Plot of the difference
%of the predicted choice probabilities against the LEVI predicted probability
%in the fishing mode choice application}}{\Qlb{figure: fishing_pdiff_vs_plevi}%
%}{fishing_pdiff_vs_plevi.eps}{\special{ language "Scientific Word";
%type "GRAPHIC";  maintain-aspect-ratio TRUE;  display "USEDEF";
%valid_file "F";  width 5.3575in;  height 2.2615in;  depth 0pt;
%original-width 6.9176in;  original-height 2.9049in;  cropleft "0";
%croptop "1";  cropright "1";  cropbottom "0";
%filename '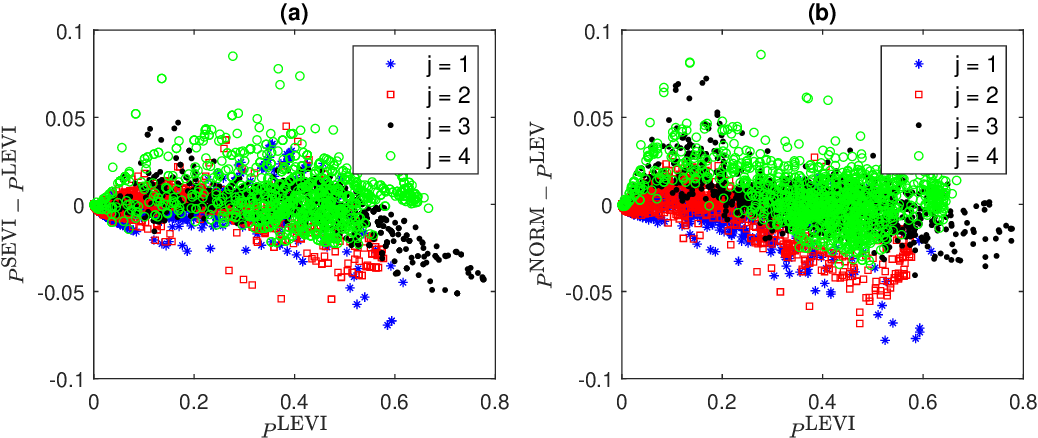';file-properties "XNPEU";}}}%
%BeginExpansion
\begin{figure}[ptb]%
\centering
\includegraphics[
height=2.2615in,
width=5.3575in
]%
{}%
\caption{Plot of the difference of the predicted choice probabilities against
the LEVI predicted probability in the fishing mode choice application}%
\label{figure: fishing_pdiff_vs_plevi}%
\end{figure}
%EndExpansion

To illustrate the difference between the SEVI and LEVI models using the
fishing choice data, we calculate the choice probabilities $P^{\mathrm{LEVI}%
}(\hat{\beta}_{\mathrm{LEVI}})$ and $P^{\mathrm{SEVI}}(\hat{\beta
}_{\mathrm{LEVI}})$ using the same parameter value $\hat{\beta}_{\mathrm{LEVI}%
}.$ Figure \ref{Figure: Fishing_levi_sevi_w_levi_para} plots $P^{\mathrm{SEVI}%
}(\hat{\beta}_{\mathrm{LEVI}})$ against $P^{\mathrm{LEVI}}(\hat{\beta
}_{\mathrm{LEVI}})$. We find that when $P^{\mathrm{LEVI}}(\hat{\beta
}_{\mathrm{LEVI}})$ is small, $P^{\mathrm{SEVI}}(\hat{\beta}_{\mathrm{LEVI}})$
tends to be smaller than $P^{\mathrm{LEVI}}(\hat{\beta}_{\mathrm{LEVI}})$, and
when $P^{\mathrm{LEVI}}(\hat{\beta}_{\mathrm{LEVI}})$ is large,
$P^{\mathrm{SEVI}}(\hat{\beta}_{\mathrm{LEVI}})$ tends to be larger than
$P^{\mathrm{LEVI}}(\hat{\beta}_{\mathrm{LEVI}})$. This observation aligns with
the previously reported simulation results, as seen in Figure
\ref{Figure: Simu_theoretical_pro_DIFF2}, for example.

\subsection{Application: Stated Preferences for Vehicle Choice}

For the stated preference vehicle data, each individual was presented with 6
hypothetical vehicles fueled by methanol, compressed natural gas, electricity
or regular gas. These vehicles have various attributes, such as fuel type,
price, cost per mile, mileage range between refuelings/rechargings, vehicle
type (SUV, sports car, station wagon, truck, van, regular car), pollution
level, and speed, among others. We employ the same set of\ 21 covariates as
in
%TCIMACRO{\TeXButton{McFadden_Train2000}{\citet{McFadden_Train2000}} }%
%BeginExpansion
\citet{McFadden_Train2000}
%EndExpansion
(Table 1), some of which are the products of vehicle attributes and individual
characteristics, such as household size greater than 2 or not, college
education or not, and a daily commute of less than 5 miles or not.

As shown in Table \ref{Table: empirics1}, in terms of the log-likelihood
achieved, the LEVI model performs worse than the SEVI model and the NORM model.

Figures \ref{figure: car_pdiff_vs_plevi_pnorm} (a) and (b) plot the difference
in the predicted probabilities (i.e., $P^{\mathrm{SEVI}}(\hat{\beta
}_{\mathrm{SEVI}})-P^{\mathrm{LEVI}}(\hat{\beta}_{\mathrm{LEVI}})$ or
$P^{\mathrm{NORM}}(\hat{\beta}_{\mathrm{NORM}})-P^{\mathrm{LEVI}}(\hat{\beta
}_{\mathrm{LEVI}})$ against the LEVI predicted probability (i.e.,
$P^{\mathrm{LEVI}}(\hat{\beta}_{\mathrm{LEVI}})$. There are some intriguing
patterns. Figure \ref{figure: car_pdiff_vs_plevi_pnorm}(a) shows that
$P^{\mathrm{SEVI}}(\hat{\beta}_{\mathrm{SEVI}})$ tends to be higher than
$P^{\mathrm{LEVI}}(\hat{\beta}_{\mathrm{LEVI}})$ when the latter is low and
tends to be lower than $P^{\mathrm{LEVI}}(\hat{\beta}_{\mathrm{LEVI}})$ when
the latter is high.

Note that the difference $P^{\mathrm{SEVI}}(\hat{\beta}_{\mathrm{SEVI}%
})-P^{\mathrm{LEVI}}(\hat{\beta}_{\mathrm{LEVI}})$ can be decomposed as
\[
P^{\mathrm{SEVI}}(\hat{\beta}_{\mathrm{SEVI}})-P^{\mathrm{LEVI}}(\hat{\beta
}_{\mathrm{LEVI}})=\text{ }\underset{\text{I}}{\underbrace{P^{\mathrm{SEVI}%
}(\hat{\beta}_{\mathrm{SEVI}})-P^{\mathrm{SEVI}}(\hat{\beta}_{\mathrm{LEVI}}%
)}}+\underset{\text{II}}{\underbrace{P^{\mathrm{SEVI}}(\hat{\beta
}_{\mathrm{LEVI}})-P^{\mathrm{LEVI}}(\hat{\beta}_{\mathrm{LEVI}})}}.
\]
The first part (I) above is due to the difference in the parameter
estimates\ used ($\hat{\beta}_{\mathrm{SEVI}}$ vs. $\hat{\beta}_{\mathrm{LEVI}%
})$, and the second part (II) above is due to the difference in the predictive
models used (SEVI vs. LEVI). According to an unreported figure, the second
part exhibits the same pattern as in Figure
\ref{Figure: Simu_theoretical_pro_DIFF2}. So the pattern in Figure
\ref{figure: car_pdiff_vs_plevi_pnorm}(a) emerges because the effect of the
first part dominates that of the second part. Figure
\ref{figure: car_pdiff_vs_plevi_pnorm}(b) shows a similar pattern, although
$P^{\mathrm{NORM}}(\hat{\beta}_{\mathrm{NORM}})-P^{\mathrm{LEVI}}(\hat{\beta
}_{\mathrm{LEVI}})$ appears to be less variable than $P^{\mathrm{SEVI}}%
(\hat{\beta}_{\mathrm{SEVI}})-P^{\mathrm{LEVI}}(\hat{\beta}_{\mathrm{LEVI}}).$

Figures \ref{figure: car_pdiff_vs_plevi_pnorm} (c) and (d) are similar to
Figures \ref{figure: car_pdiff_vs_plevi_pnorm} (a) and (b) but use
$P^{\mathrm{NORM}}(\hat{\beta}_{\mathrm{NORM}})$ as the benchmark. According
to these two figures, $P^{\mathrm{SEVI}}(\hat{\beta}_{\mathrm{SEVI}%
})-P^{\mathrm{NORM}}(\hat{\beta}_{\mathrm{NORM}})$ and $P^{\mathrm{LEVI}}%
(\hat{\beta}_{\mathrm{LEVI}})-P^{\mathrm{NORM}}(\hat{\beta}_{\mathrm{NORM}})$
display opposite patterns. On one hand, $P^{\mathrm{SEVI}}(\hat{\beta
}_{\mathrm{SEVI}})$ tends to be higher than $P^{\mathrm{NORM}}(\hat{\beta
}_{\mathrm{NORM}})$ when the latter is low and tends to be lower than
$P^{\mathrm{NORM}}(\hat{\beta}_{\mathrm{NORM}})$ when the latter is high. On
the other hand, $P^{\mathrm{LEVI}}(\hat{\beta}_{\mathrm{LEVI}})$ tends to be
lower than $P^{\mathrm{NORM}}(\hat{\beta}_{\mathrm{NORM}})$ when the latter is
low and tends to be higher than $P^{\mathrm{NORM}}(\hat{\beta}_{\mathrm{NORM}%
})$ when the latter is high. Clearly, the decision to employ SEVI, NORM, and
LEVI random components can have a significant effect on the predicted choice probabilities.%

%TCIMACRO{\FRAME{ftbpFU}{5.6507in}{3.7671in}{0pt}{\Qcb{Plot of the difference
%of the predicted choice probabilities against the LEVI or NORM predicted
%probability in the vehicle choice application}}%
%{\Qlb{figure: car_pdiff_vs_plevi_pnorm}}{car_pdiff_vs_plevi_pnorm.eps}%
%{\special{ language "Scientific Word";  type "GRAPHIC";
%maintain-aspect-ratio TRUE;  display "USEDEF";  valid_file "F";
%width 5.6507in;  height 3.7671in;  depth 0pt;  original-width 0pt;
%original-height 0pt;  cropleft "0";  croptop "1";  cropright "1";
%cropbottom "0";
%filename '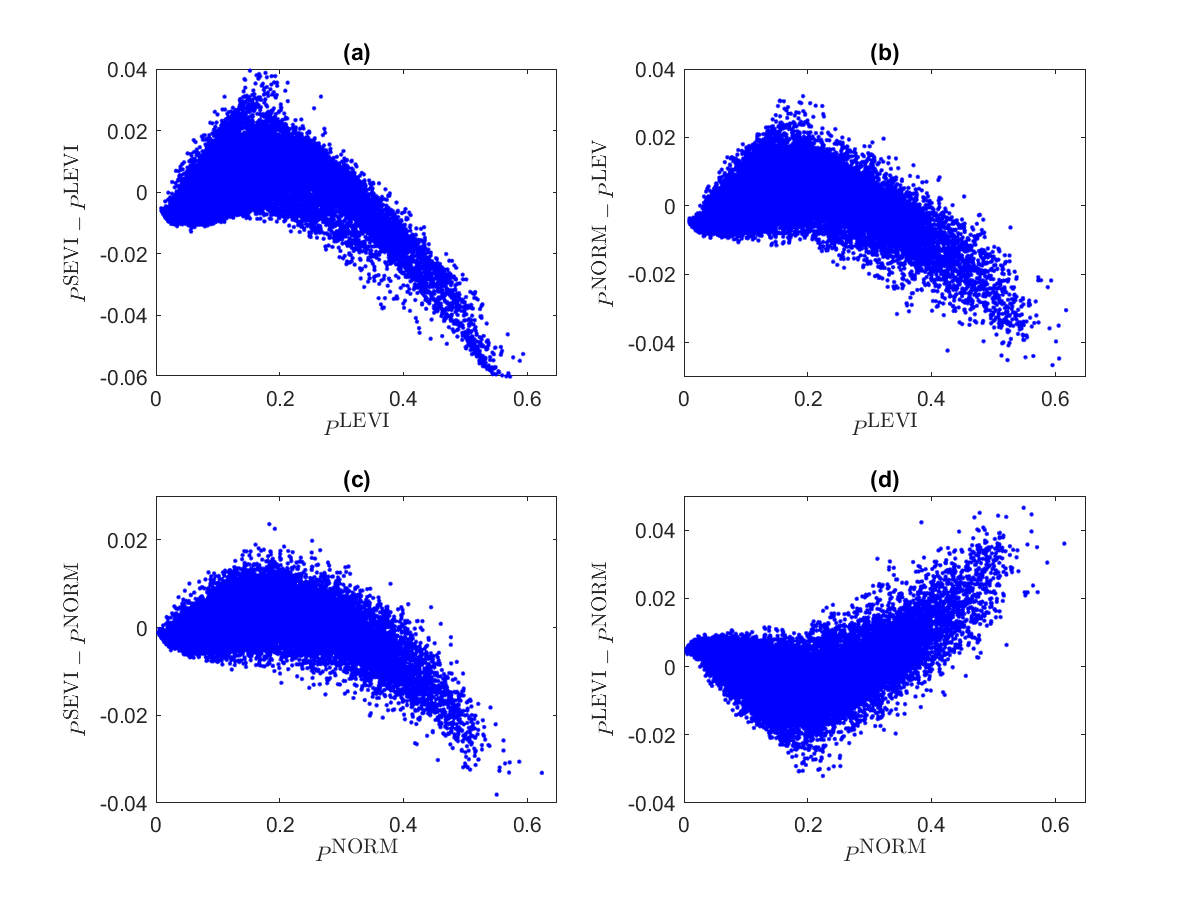';file-properties "XNPEU";}}}%
%BeginExpansion
\begin{figure}[ptb]%
\centering
\includegraphics[
height=3.7671in,
width=5.6507in
]%
{}%
\caption{Plot of the difference of the predicted choice probabilities against
the LEVI or NORM predicted probability in the vehicle choice application}%
\label{figure: car_pdiff_vs_plevi_pnorm}%
\end{figure}
%EndExpansion

We also report the point estimates of the parameters under the LEVI and SEVI
model, along with their 95\% confidence intervals. See Figure
\ref{figure: car_b_levi_sevi_w_ci} in the supplementary appendix. While the
CI's for 14 parameters clearly overlap, those for $6$ parameters either do not
overlap or barely overlap. In practical terms, this divergence suggests that
the two models may lead to different conclusions regarding the impact of
certain attributes on vehicle choices. Specifically, attributes such as
mileage range between refuelings/rechargings, cost per mile, and vehicle types
exhibit different effects in the LEVI and SEVI models. Compared to the LEVI
model, the SEVI model suggests that the mileage range between
refuelings/rechargings and the vehicle's classification as an SUV (rather than
a regular car) are less influential factors. On the other hand, the cost per
mile and other vehicle types (rather than a regular car) appear to be more
influential in the SEVI model.

\subsection{Application: Choice of Saltine Crackers}

The dataset on saltine cracker choices contains information on all purchases
of crackers (3292) of 136 households, including the brand chosen from four
alternative brands, prices of all four brands (\texttt{Price}), whether there
was an in-store display of the brand (\texttt{Display} dummy) and/or whether
the brand was featured in the store's newspaper advertising (\texttt{Feature}
dummy) at the time of purchase. The latent utility is specified as follows:%
\begin{align*}
U_{ij}  &  =\mathtt{Price}\text{ }\times\beta_{x1}+\mathtt{Display}\text{
}\times\beta_{x2}+\mathtt{Feature}\text{ }\times\beta_{x3}\\
&  +D2_{j}\times\beta_{d2}+D3_{j}\times\beta_{d3}+D4_{j}\times\beta
_{d4}+\varepsilon_{ij},
\end{align*}
where $j=1,2,3,4$ corresponding to the brands \textquotedblleft
Sunshine\textquotedblright, \textquotedblleft Keebler\textquotedblright,
\textquotedblleft Nabisco\textquotedblright, \textquotedblleft
Private\textquotedblright\ (a private-label brand), respectively, and
$D2,\,D3,$ and $D4$ are dummies for the latter three brands (the dummy for the
first band has been omitted to achieve identification).

For this application, individuals made repeated purchase decisions. Choice
situations belonging to the same individual are possibly correlated. To
account for the correlation, we treat each individual as a cluster and employ
cluster-robust standard errors. An unreported figure similar to Figure
\ref{figure: car_b_levi_sevi_w_ci} shows that the estimated coefficients on
the three main covariates (\texttt{Price}, \texttt{Display}, and
\texttt{Feature}) are similar across the LEVI and SEVI models. However, the
alternative-specific coefficients on D2, D3, and D4 are different across these
two models. As shown in Table \ref{Table: empirics1}, the SEVI model has a
slightly higher log-likelihood compared to the LEVI model, although the
difference is small.

As shown in Table \ref{Table: empirics1}, the NORM model has a higher
log-likelihood than both the LEVI model and the SEVI model. This may indicate
that the random utility component is symmetrically distributed so that a
normal distribution, which is symmetric, might provide a more accurate
characterization than the LEVI and SEVI distributions. Figure
\ref{figure: cracker_pdiff_vs_pnorm} plots the difference in the predicted
choice probabilities using the NORM probability as the benchmark, highlighting
distinct deviations between the SEVI and LEVI probabilities from the NORM probability.%

%TCIMACRO{\FRAME{ftbpFU}{6.1808in}{2.6403in}{0pt}{\Qcb{Plot of the difference
%of the predicted choice probabilities against the multinomial probit predicted
%probability in the cracker choice application}}%
%{\Qlb{figure: cracker_pdiff_vs_pnorm}}{cracker_pdiff_vs_pnorm.eps}%
%{\special{ language "Scientific Word";  type "GRAPHIC";
%maintain-aspect-ratio TRUE;  display "USEDEF";  valid_file "F";
%width 6.1808in;  height 2.6403in;  depth 0pt;  original-width 6.8346in;
%original-height 2.9049in;  cropleft "0";  croptop "1";  cropright "1";
%cropbottom "0";
%filename '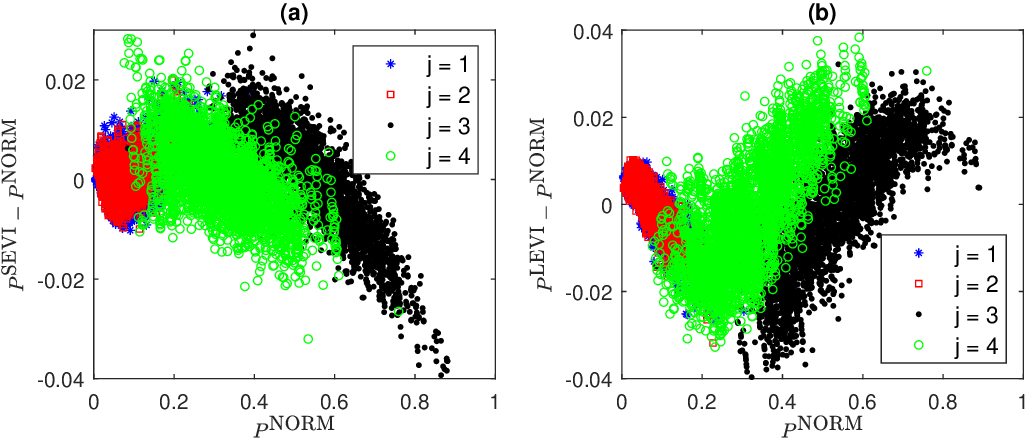';file-properties "XNPEU";}}}%
%BeginExpansion
\begin{figure}[ptb]%
\centering
\includegraphics[
height=2.6403in,
width=6.1808in
]%
{}%
\caption{Plot of the difference of the predicted choice probabilities against
the multinomial probit predicted probability in the cracker choice
application}%
\label{figure: cracker_pdiff_vs_pnorm}%
\end{figure}
%EndExpansion

\subsection{Application: Choice of a NO$_{x}$ Emissions Reduction Technology}

Managers of coal-fired electric power plants chose a NO$_{x}$ emissions
reduction technology in order to comply with a regional NO$_{x}$ emissions
trading program. The manager of a plant unit, say unit $i,$ chooses a
compliance strategy from $J_{i}$ alternatives indexed by $j=1,\ldots,J_{i}$ to
minimize the unobserved overall cost $C_{ij}.$ We assume that
\[
C_{i,j}=X_{ij}\beta+\varepsilon_{i,j}%
\]
for $j=1,\ldots,J_{i}$ where $X_{ij}$ comprises observed factors that affect
the cost and $\varepsilon_{i,j}$ captures unobserved cost shifters. We assume
that these two components are independent of each other. The observed choice
is indicated by $Y_{i}=\arg\min_{j=1}^{J_{i}}C_{i,j}$. The model is
essentially the same as that outlined in Section \ref{Subsection: Cost} but
there are two differences. First, the maximum number of options is 15 but not
all options are available for each plant. To accommodate this, we simply
assign an infinite systematic cost to any unavailable option, ensuring it is
never selected. This may not be computationally efficient, but the number of
options becomes the same for all choice situations. In addition, this method
effectively provides us with an empirical example when the number of
alternatives is quite large. Second, some plants have the same manager so
there may be dependence in the unobserved costs across these plants. Following
standard practice, we will not consider the dependence when constructing the
likelihood function. That is, we will use a partial likelihood approach, but
we account for the dependence in estimating the asymptotic variance by
clustering the managers.

The compliance options have different capital costs (denoted by \texttt{kcost}%
$_{ij}$) and variable operating costs (denoted by \texttt{vcost}$_{ij}$)$.$
Each compliance option incorporates one or more of the following three
technologies: post-combustion pollution control technology (indicated by the
dummy \texttt{post}$_{j}$), combustion modification technology (indicated by
the dummy \texttt{cm}$_{j}$), and low NO$_{x}$ burners technology (indicated
by the dummy \texttt{lnb}$_{j}$). Taking these into consideration, we follow
%TCIMACRO{\TeXButton{Fowlie2010}{\citet{Fowlie2010}} }%
%BeginExpansion
\citet{Fowlie2010}
%EndExpansion
and let $X_{ij}=$ (\texttt{post}$_{j},$ \texttt{cm}$_{j},$ \texttt{lnb}$_{j},$
\texttt{kcost}$_{ij},$ \texttt{vcost}$_{ij},$ \texttt{kage}$_{ij}$) where
\texttt{kage }= \texttt{kcost} $\times$ \texttt{age} and \texttt{age} is the
age of a plant.

The plants are in one of the three regulatory environments: they are either
\textquotedblleft deregulated\textquotedblright, \textquotedblleft
public\textquotedblright,\ or \textquotedblleft regulated\textquotedblright.
The central question investigated by
%TCIMACRO{\TeXButton{Fowlie2010}{\citet{Fowlie2010}} }%
%BeginExpansion
\citet{Fowlie2010}
%EndExpansion
is whether and how regulatory environments influence a manager's choice of
compliance strategies. The question can be addressed by comparing the
coefficient estimates across the subsamples of \textquotedblleft
deregulated\textquotedblright, \textquotedblleft public\textquotedblright,\ or
\textquotedblleft regulated\textquotedblright\ plants.

We apply the LEVI, SEVI, and NORM models to each subsample and evaluate their
performances. Table \ref{Table: empirics1} shows that the SEVI model, which
assumes that $\varepsilon_{i,j}\overset{iid}{\thicksim}\mathrm{SEVI}$,
achieves the highest likelihood among the three models and for all three
subsamples. We note that
%TCIMACRO{\TeXButton{Fowlie2010}{\citet{Fowlie2010}} }%
%BeginExpansion
\citet{Fowlie2010}
%EndExpansion
made this SEVI assumption implicitly, and in doing so, happened to use the
model out of these three with the highest likelihood.

The parameter estimates and the clustered-robust standard errors are reported
in Table \ref{Table: empirics2}. For each of the three models, the
coefficients vary significantly across different regulatory environments. In
particular, the coefficient on capital cost (\texttt{kcost}) based on the
\textquotedblleft deregulated\textquotedblright\ subsample is positive and
statistically significant at the 1\% level, but those based on the
\textquotedblleft public\textquotedblright\ subsample and the
\textquotedblleft regulated\textquotedblright\ subsample are negative, albeit
not statistically significant. This implies that plant managers give more
weight to capital cost if their plants are in a \textquotedblleft
deregulated\textquotedblright\ market rather than a \textquotedblleft
public\textquotedblright\ or \textquotedblleft regularized\textquotedblright%
\ market. Our findings confirm the main result of
%TCIMACRO{\TeXButton{Fowlie2010}{\citet{Fowlie2010}} }%
%BeginExpansion
\citet{Fowlie2010}
%EndExpansion
and show that the result is robust to the LEVI specification of the random
cost component.%

%TCIMACRO{\TeXButton{B}{\begin{table}[htbp] \centering}}%
%BeginExpansion
\begin{table}[htbp] \centering
%EndExpansion
{\footnotesize
\begin{tabular}
[c]{lcccccc}\hline\hline
& $\hat{\beta}_{\mathrm{LEVI}}$ & se:$\hat{\beta}_{\mathrm{LEVI}}$ &
$\hat{\beta}_{\mathrm{SEVI}}$ & se:$\hat{\beta}_{\mathrm{SEVI}}$ & $\hat
{\beta}_{\mathrm{NORM}}$ & se:$\hat{\beta}_{\mathrm{NORM}}$\\\hline
& \multicolumn{6}{c}{Deregulated}\\\hline
\texttt{post} & 1.502 & 0.106 & 0.862 & 0.056 & 1.245 & 0.177\\
\texttt{cm} & 1.538 & 0.120 & 0.859 & 0.068 & 1.250 & 0.155\\
\texttt{lnb} & 1.551 & 0.121 & 0.784 & 0.064 & 1.193 & 0.175\\
\texttt{vcost} & 0.188 & 0.045 & 0.112 & 0.030 & 0.145 & 0.041\\
\texttt{kcost} & 0.060 & 0.020 & 0.036 & 0.012 & 0.050 & 0.018\\
\texttt{kage} & 0.037 & 0.008 & 0.028 & 0.004 & 0.037 & 0.010\\
& \multicolumn{6}{c}{Public}\\\hline
\texttt{post} & 5.706 & 1.050 & 3.890 & 0.601 & 4.927 & 0.824\\
\texttt{cm} & 4.433 & 0.442 & 2.685 & 0.193 & 3.589 & 0.384\\
\texttt{lnb} & 3.964 & 0.307 & 2.532 & 0.201 & 3.323 & 0.463\\
\texttt{vcost} & 1.564 & 0.604 & 0.840 & 0.293 & 1.177 & 0.270\\
\texttt{kcost} & -0.039 & 0.255 & -0.100 & 0.118 & -0.086 & 0.088\\
\texttt{kage} & 0.080 & 0.046 & 0.024 & 0.027 & 0.049 & 0.038\\
& \multicolumn{6}{c}{Regulated}\\\hline
\texttt{post} & 2.665 & 0.142 & 1.680 & 0.070 & 2.302 & 0.213\\
\texttt{cm} & 1.911 & 0.103 & 1.250 & 0.058 & 1.686 & 0.151\\
\texttt{lnb} & 2.208 & 0.111 & 1.377 & 0.057 & 1.884 & 0.180\\
\texttt{vcost} & 0.278 & 0.045 & 0.171 & 0.028 & 0.196 & 0.047\\
\texttt{kcost} & -0.008 & 0.023 & -0.005 & 0.013 & -0.023 & 0.022\\
\texttt{kage} & 0.023 & 0.005 & 0.014 & 0.003 & 0.010 & 0.008\\\hline\hline
\end{tabular}
}\caption{Parameter estimates and standard errors: environmental compliance
choices.}\label{Table: empirics2}%
%TCIMACRO{\TeXButton{E}{\end{table}}}%
%BeginExpansion
\end{table}%
%EndExpansion

Note that for the same subsample, the coefficients in the LEVI and SEVI models
may be directly compared. For the \textquotedblleft
deregulated\textquotedblright\ subsample, each element of $\hat{\beta
}_{\mathrm{LEVI}}$ is larger than the corresponding element of $\hat{\beta
}_{\mathrm{SEVI}}$ by more than 50\%. For the \textquotedblleft
public\textquotedblright\ or \textquotedblleft regularized\textquotedblright%
\ subsamples, the coefficients on \texttt{post}$,$ \texttt{cm}$,$%
\texttt{lnb}$,$ \texttt{vcost}$,$ and \texttt{kage} in the LEVI model are
larger than those in the SEVI model by at least 50\%. This indicates the
significance of the model choice on the parameter estimates with consequential
implications on policy recommendations.

\subsection{Further Remarks on the Empirical Applications}

We conclude the empirical application by making some further comments on the
rankings of the three models according to their likelihood values and
computation times.

First, in terms of likelihood values, the SEVI model outperforms the LEVI
model across all considered applications. With the exception of the cracker
brand application, the SEVI model achieves the highest likelihood value among
all three models. Notably, in the application of choosing an environmental
compliance strategy, the SEVI model yields a choice probability in the
conventional logit form, as the focus is on minimizing, rather than
maximizing, an objective function. Overall, regardless of whether an agent
aims to minimize or maximize an objective function, the SEVI distribution
proves to be an important contender for characterizing the random components
in a discrete choice model.

Second, for all applications, the computation time is shortest when the choice
probabilities take the standard LEVI-based logit form. When the number of
alternatives, $J,$ is not larger than 6, the computation time for the model
with the extended choice probability formula, as given in
(\ref{SEVI_prob_formula}), is only two to four times as large as that for the
computationally simplest model and is much lower than that for the multinomial
probit model. When $J=15,$ the computation time for the model with the
extended choice probability formula remains significantly lower than that of
the multinomial probit model, although the competitive advantage is not as
pronounced as when $J$ is smaller.

\section{Conclusions and Further Research Directions\label{Sec: conclusion}}

In a random utility model, individuals make choices among a set of
alternatives based on their utilities. These utilities are typically assumed
to consist of a systematic component, representing aspects of the choice
alternatives that are observable to the analyst, and a random component,
capturing factors that may be known only to the decision makers. Typically,
the random components are assumed to follow the right-skewed \emph{largest}
extreme value Type I (LEVI) distribution, leading to the standard multinomial
and conditional logit models and their generalizations. In this paper, we ask
the question: what if the random component followed, instead, the left-skewed
\emph{smallest} extreme value type I (SEVI) distribution? \emph{A priori},
there is no reason to support choosing LEVI over SEVI as correct beyond one's
prior belief with respect to the skewness of the RUM's random component.

We show that the SEVI-based choice probabilities still have closed-form
expressions, although they are not as simple as those in the standard logit
models. We use the MLE to estimate the RUM when the random component follows
the smallest extreme value type I distribution and show that standard
asymptotic theory, including the asymptotic normality of the MLE, works well
in finite samples. We provide an approach based on Gosper's hack to
efficiently compute the choice probabilities and the MLE for the SEVI model.

If one incorrectly assumes the random component follows LEVI when SEVI is
appropriate and applies the standard logit, the resulting QMLE can have a
large bias arising from misspecification. A distinctive feature of this setup
is that this bias increases with the size of the choice set. The bias also has
important effects on estimating standard economic quantities such as average
partial effects and elasticities, as well as predicting the out-of-sample
choice probabilities. The LEVI and SEVI models exhibit distinctly different
patterns of substitution, with the former conforming to IIA and the latter
exhibiting a particular way of weighting outcomes from all subsets comprised
of two or more alternatives.

In an important way, the distinction between the LEVI and SEVI models is more
striking than the long-standing comparisons of a LEVI model versus a
multinomial probit model. The latter comparisons have focused on the
theoretical convenience of a multinomial probit to allow for correlated error
components. In a SEVI model, the error components exhibit no correlation, yet
the patterns of substitutions are considerably more intricate compared to a
LEVI model. In particular, IIA does not hold in a SEVI model with more than
two alternatives, even though the random utilities are iid and still follow an
extreme value distribution. Furthermore, for the same estimate of an agent's
preference parameters, the predicted probabilities based on the LEVI and
SEVI\ models are typically different in predictable and economically
meaningful ways.

The fundamental question when choosing between the LEVI and SEVI models
revolves around whether the random utility is more likely to be positive, as
is the case with LEVI, or negative, as is the case with SEVI. This is crucial
in a choice context because increasing the magnitude of the random utility is
typically of little or no consequence in a choice model since the
\textquotedblleft scale\textquotedblright\ is not separately identified as
long as all of the random components come from the same iid distribution. This
is not true in general for the skewness of the distribution when there are
more than two alternatives. A key takeaway from this paper is that we should
pay more attention to the skewness of a distribution when it is used to model
the random components in a discrete choice model. From a practical point of
view, we may use past empirical evidence, if any, to decide whether the
stochastic utility is likely to be positively or negatively skewed. If we do
not have such guidance, we may use the model that has a higher likelihood
function in an AIC/BIC sense or employ Vuong's test to decide which model
better fits the data at hand.

There are many possible extensions of our work. Below, we discuss a few
possibilities within the utility maximization framework, keeping in mind that
all discussions apply to discrete decision-making problems aimed at minimizing
an objective function.

First, the LEVI-based conditional and multinomial logit models have been
extended in a number of ways, often in an effort to allow a more flexible
substitution pattern than the IIA. These extensions are generally amenable to
using the SEVI-based conditional logit core in place of the current LEVI one.
Consider, for example, the simplest version of the latent class conditional
logit model, where there are two classes with each characterized by a standard
conditional logit model with potentially different vectors of preference
parameters and a mixing fraction. The standard conditional logit model can be
swapped out here for its SEVI-based counterpart and a variant of the Vuong
test described earlier can be used to help distinguish between them. As
discussed in \ref{Subsection: mixed LEVI-SEVI}, another variant of the latent
class framework would be to allow one of the classes to be based on a
LEVI-conditional logit model and the other to be characterized by a
SEVI-conditional logit model with the same preference parameters across the
two classes. Still, a different version would extend the scale heterogeneity
LEVI model to the SEVI context by allowing for two SEVI conditional logit
models with common preference parameters but different error component
variances, influenced by exogenous factors. These factors might include
sources of the samples, such as whether the data pertains to observed
decisions or stated preference surveys, and whether the data comes from
different regulatory environments, as exemplified in one of our empirical
applications. Obviously, these can all be extended to more than two latent
classes and combined in various ways, such as LEVI and SEVI classes with
different preference parameters. One could envision an EM algorithm that
alternates between fixing preference parameters and estimating the mixing
proportion, and vice versa.

Second, while latent class LEVI variants are quite popular in marketing
because of their \textquotedblleft segmentation\textquotedblright%
\ interpretation and the use of covariates to predict segment memberships,
random parameter variants of conditional logit models have attracted more
attention from economists who find the notion of a continuum of preference
parameters in the population appealing. Nothing in this appeal is tied to the
LEVI versus SEVI distinction in the nature of the random utility component. As
such, there is no conceptual issue involved in swapping the LEVI-core for a
SEVI-core against which the random parameter distributions will be defined.

Third, there are various other flavors of logit models used by economists. The
nested logit explicitly allows a correlation between different subsets of
alternatives that are uncorrelated within a nest. Exploring the properties of
the SEVI analogue would be of interest, given the quite different patterns of
substitution within a nest. The generalized multinomial logit's gamma
parameter determines how the random component of the mixed logit interacts
with the scale parameter (%
%TCIMACRO{\TeXButton{Fiebig2010}{\citet{Fiebig2010}}}%
%BeginExpansion
\citet{Fiebig2010}%
%EndExpansion
). It is unclear how this tradeoff is influenced by a SEVI or LEVI core.
Another interesting question is how the cut-off points and parameter estimates
are influenced in an ordered logit model when switching its foundation from
LEVI to SEVI error components. As discussed in Section \ref{Subsec: relation},
the SEVI analogue of the LEVI-type model for completely rank-ordered can be
developed, but the exact details warrant further studies.

Fourth, while the conditional logit and its generalizations dominate empirical
economic work, the statistically equivalent multinomial/polychotomous logit
models are heavily used in biomedical sciences and in other social sciences.
The same issues involving LEVI and SEVI arise in these models. For instance,
in a political election, such as an open party primary for a legislative seat,
beyond the set of attributes available in the official voter's pamphlet, is
the voter likely to learn information that, on average, improves their
perception of the candidate's quality/suitability (LEVI) or decreases (SEVI)?
Predominance of negative advertising suggests SEVI might be more likely in
some contexts.

Fifth, multinomial logistic regression is one of the basic tools taught to
budding data scientists. Its kernel forms the basis of other more complex
procedures, including some flavors of neural networks. Further, multinomial
classification problems are often addressed using the multinomial logit
objective function coupled with variable section procedures like LASSO when
there are a large number of potential predictors. There are SEVI variants of
all of these that could be explored.

Finally, along a very different path, cleverly designed discrete choice
experiments can help shed light on whether all/most agents have random utility
components that are left-skewed, symmetric, or right-skewed. There are a
number of interesting questions along this avenue. For example, does the
skewness of the random component vary with deeper aspects of preferences, such
as risk aversion from economics or openness to new experiences from psychology
(%
%TCIMACRO{\TeXButton{\citet{Jagelka2024}}{\citet{Jagelka2024}} }%
%BeginExpansion
\citet{Jagelka2024}
%EndExpansion
and
%TCIMACRO{\TeXButton{\citet{Jiang2024}}{\citet{Jiang2024}}}%
%BeginExpansion
\citet{Jiang2024}%
%EndExpansion
)? Does the skewness vary with observable demographic characteristics like age
and gender, or is it largely a function of educational attainment? Does it
vary by potentially quantifiable differences in how agents obtain information?
Does it also vary with specific aspects of the larger context in which the
choice opportunity is embedded? Does observing repeated choices by the same
agent in similar contexts over time offer more power to identifying an agent's
skewness type, or is it possible that agents shift from one type when the
choice set is novel to another type when the choice opportunity is relatively
routine? There are a number of comparisons of LEVI versus SEVI in different
contexts that should be of broad interest: revealed versus stated preference
data, in-store versus online purchases, private versus public goods, eliciting
the best versus the worst alternative in a set, choices involving different
forms of uncertainty, and choices involving gains versus losses.\ Our work
provides tools to examine these issues.

\bigskip

\bigskip\pagebreak

\section{Appendix of Proofs}

\begin{proof}
[Proof of Theorem \ref{Theorem: choice_prob_under_SEV}]A direct algebraic
proof is given in the supplementary appendix (see Section
\ref{alternative_proof}). Here, we prove the result using the
inclusion-exclusion principle. For each $k\in\mathcal{J}_{-j}:=\left\{
1,\ldots,J\right\}  \backslash\left\{  j\right\}  $, define the event%
\[
\mathcal{E}_{k,j}:v_{ij}+\varepsilon_{ij}\geq v_{ik}+\varepsilon_{ik},
\]
where $\varepsilon_{ij}\thicksim iid$ $\mathrm{SEVI}$ over $j\in\mathcal{J}$.
Equivalently, we define the event $\mathcal{E}_{k,j}$ as
\[
\mathcal{E}_{k,j}:-v_{ij}+\tilde{\varepsilon}_{ij}\leq-v_{ik}+\tilde
{\varepsilon}_{ik},
\]
where $\tilde{\varepsilon}_{ij}\overset{d}{=}-\varepsilon_{ij}$ and
$\tilde{\varepsilon}_{ij}\thicksim iid$ $\mathrm{LEVI}$ over $j\in\mathcal{J}%
$. Let $\mathcal{E}_{k,j}^{c}:-v_{ij}+\tilde{\varepsilon}_{ij}>-v_{ik}%
+\tilde{\varepsilon}_{ik}$ be the complement of $\mathcal{E}_{k,j}$. Then
\begin{align}
P^{\mathrm{SEVI}}\left(  Y_{i}=j|v_{i}\right)   &  =\Pr\left(  \cap
_{k\in\mathcal{J}_{-j}}\mathcal{E}_{k,j}|v_{i}\right)  =1-\Pr\left(
\cup_{_{k\in\mathcal{J}_{-j}}}\mathcal{E}_{k,j}^{c}|v_{i}\right) \nonumber\\
&  =1-\sum_{k_{1}:\left\{  k_{1}\right\}  \subseteq\mathcal{J}_{-j}}\Pr\left(
\mathcal{E}_{k,j}^{c}|v_{i}\right)  +\sum_{k_{1}<k_{2}:\left\{  k_{1}%
,k_{2}\right\}  \subseteq\mathcal{J}_{-j}}\Pr\left(  \mathcal{E}_{k_{1},j}%
^{c}\cap\mathcal{E}_{k_{2},j}^{c}|v_{i}\right) \nonumber\\
&  +\ldots+\left(  -1\right)  ^{J-1}\sum_{k_{1}<\ldots<k_{J-1}:\left\{
k_{1},\ldots,k_{J-1}\right\}  \subseteq\mathcal{J}_{-j}}\Pr\left(
\mathcal{E}_{k_{1,j}}^{c}\cap\mathcal{E}_{k_{2},j}^{c}\cap\ldots
\cap\mathcal{E}_{k_{J-1},j}^{c}|v_{i}\right)  , \label{inclusion_exclusion}%
\end{align}
where we have used the inclusion-exclusion principle (e.g., Proposition 4.4
of
%TCIMACRO{\TeXButton{ross2013first}{\citet{ross2013first}}}%
%BeginExpansion
\citet{ross2013first}%
%EndExpansion
) for computing $\Pr(\cup_{_{k\in\mathcal{J}_{-j}}}\mathcal{E}_{k,j}^{c}%
|v_{i})$, the probability of a union of $\left(  J-1\right)  $ events.

Given $v_{i},$ the event $\mathcal{E}_{k_{1},j}^{c}\cap\mathcal{E}_{k_{2}%
,j}^{c}\cap\ldots\cap\mathcal{E}_{k_{\ell},j}^{c}$ is the same as
\[
-v_{ij}+\tilde{\varepsilon}_{ij}>-v_{ik}+\tilde{\varepsilon}_{ik}\text{ for
all }k\in\left\{  k_{1},\ldots,k_{\ell}\right\}  \subseteq\mathcal{J}_{-j}.
\]
The probability of the above event is the same as the probability of choosing
the $j$-th alternative out of the set of alternatives $\left\{  k_{1}%
,\ldots,k_{\ell},j\right\}  $ in the standard LEVI model with $\left\{
-v_{ik_{1}},\ldots,-v_{ik_{\ell}},-v_{ij}\right\}  $ as the systematic
utilities. Hence, this probability is equal to
\[
\frac{\exp\left(  -v_{ij}\right)  }{\exp\left(  -v_{ij}\right)  +\sum
_{k\in\left\{  k_{1},\ldots,k_{\ell}\right\}  }\exp\left(  -v_{ik}\right)  }.
\]
Plugging the above into (\ref{inclusion_exclusion}) yields the desired formula.
\end{proof}

\bigskip

\begin{proof}
[Proof of Proposition \ref{prop: mirror image}]The first result is the same as
that given in Theorem \ref{Theorem: choice_prob_under_SEV}. To prove the
second result, we note that
\begin{align*}
&  P^{\mathrm{LEVI}}\left(  Y_{i}=j|\mathcal{J},V_{i}=v_{i}\right) \\
&  =\Pr(Y_{i}=j|\mathcal{J},V_{i}=v_{i},\varepsilon_{im}\thicksim iid\text{
}\mathrm{LEVI}\text{ for }m\in\mathcal{J})\\
&  =\Pr\left(  \cap_{k=1}^{J}\left\{  V_{ij}+\varepsilon_{ij}\geq
V_{ik}+\varepsilon_{ik}\right\}  |V_{i}=v_{i},\varepsilon_{im}\thicksim
iid\text{ }\mathrm{LEVI}\text{ for }m\in\mathcal{J}\right) \\
&  =\Pr\left(  \cap_{k=1}^{J}\left\{  -V_{ij}-\varepsilon_{ij}\leq
-V_{ik}-\varepsilon_{ik}\right\}  |V_{i}=v_{i},\varepsilon_{im}\thicksim
iid\text{ }\mathrm{LEVI}\text{ for }m\in\mathcal{J}\right) \\
&  =\Pr\left(  \cap_{k=1}^{J}\left\{  V_{ij}+\varepsilon_{ij}\leq
V_{ik}+\varepsilon_{ik}\right\}  |V_{i}=-v_{i},\varepsilon_{im}\thicksim
iid\text{ }\mathrm{SEVI}\text{ for }m\in\mathcal{J}\right) \\
&  =1-\Pr\left(  \left[  \cap_{k=1}^{J}\left\{  V_{ij}+\varepsilon_{ij}\leq
V_{ik}+\varepsilon_{ik}\right\}  \right]  ^{c}|V_{i}=-v_{i},\varepsilon
_{im}\thicksim iid\text{ }\mathrm{SEVI}\text{ for }m\in\mathcal{J}\right) \\
&  =1-\Pr\left(  \cup_{k=1}^{J}\left[  \left\{  V_{ij}+\varepsilon_{ij}\leq
V_{ik}+\varepsilon_{ik}\right\}  ^{c}\right]  |V_{i}=-v_{i},\varepsilon
_{im}\thicksim iid\text{ }\mathrm{SEVI}\text{ for }m\in\mathcal{J}\right) \\
&  =1-\Pr\left(  \cup_{k=1}^{J}\left[  \left\{  V_{ij}+\varepsilon_{ij}%
>V_{ik}+\varepsilon_{ik}\right\}  \right]  |V_{i}=-v_{i},\varepsilon
_{im}\thicksim iid\text{ }\mathrm{SEVI}\text{ for }m\in\mathcal{J}\right)  .
\end{align*}
Using the inclusion-exclusion principle as in the proof of Theorem
\ref{Theorem: choice_prob_under_SEV}, we have
\begin{align*}
&  P^{\mathrm{LEVI}}\left(  Y_{i}=j|\mathcal{J},V_{i}=v_{i}\right) \\
&  =1+\sum_{\ell=1}^{J-1}\left(  -1\right)  ^{\ell}\sum_{k_{1}<\ldots<k_{\ell
}:\mathcal{S}_{\ell}=\left\{  k_{1},\ldots,k_{\ell}\right\}  \subseteq
\mathcal{J}_{-j}}P^{\mathrm{SEVI}}\left(  Y_{i}=j|\mathcal{S}_{\ell}\cup
j,V_{i}=-v_{i}\right)  .
\end{align*}

\end{proof}

\bigskip

\begin{proof}
[Proof of Proposition \ref{Prop: surplus}]Recall that $G\left(  \cdot\right)
$ and $g\left(  \cdot\right)  $ are the CDF and the pdf of the LEVI
distribution. For $\varepsilon_{j}\thicksim iid$ $\mathrm{SEVI}$ and
$\tilde{\varepsilon}_{j}\thicksim iid$ $\mathrm{LEVI},$ we have%
\begin{align*}
&  E\left[  \max_{j=1,\ldots,J}\left\{  v_{j}+\varepsilon_{j}\right\}  \right]
\\
&  =\sum_{j=1}^{J}E\left[  v_{j}+\varepsilon_{j}|v_{j}+\varepsilon_{j}\geq
v_{k}+\varepsilon_{k},\text{ }k\in\mathcal{J}_{-j}\right]  P^{\mathrm{SEVI}%
}\left(  Y=j|v\right) \\
&  =\sum_{j=1}^{J}E\left[  v_{j}-\tilde{\varepsilon}_{j}|v_{j}-\tilde
{\varepsilon}_{j}\geq v_{k}-\tilde{\varepsilon}_{k},\text{ }k\in
\mathcal{J}_{-j}\right]  P^{\mathrm{SEVI}}\left(  Y=j|v\right) \\
&  =\sum_{j=1}^{J}E\left[  v_{j}-\tilde{\varepsilon}_{j}|-v_{j}+\tilde
{\varepsilon}_{j}\leq-v_{k}+\tilde{\varepsilon}_{k},\text{ }k\in
\mathcal{J}_{-j}\right]  P^{\mathrm{SEVI}}\left(  Y=j|v\right) \\
&  =\sum_{j=1}^{J}\left[  v_{j}-E\left(  \tilde{\varepsilon}_{j}|v_{k}%
-v_{j}+\tilde{\varepsilon}_{j}\leq\tilde{\varepsilon}_{k},\text{ }%
k\in\mathcal{J}_{-j}\right)  \right]  P^{\mathrm{SEVI}}\left(  Y=j|v\right) \\
&  =\sum_{j=1}^{J}v_{j}P^{\mathrm{SEVI}}\left(  Y=j|v\right) \\
&  -\sum_{j=1}^{J}E\left(  \tilde{\varepsilon}_{j}|v_{k}-v_{j}+\tilde
{\varepsilon}_{j}\leq\tilde{\varepsilon}_{k},\text{ }k\in\mathcal{J}%
_{-j}\right)  P^{\mathrm{SEVI}}\left(  Y=j|v\right)  .
\end{align*}
Next,
\begin{align*}
&  E\left(  \tilde{\varepsilon}_{j}|v_{k}-v_{j}+\tilde{\varepsilon}_{j}%
\leq\tilde{\varepsilon}_{k},\text{ }k\in\mathcal{J}_{-j}\right)
P^{\mathrm{SEVI}}\left(  Y=j|v\right) \\
&  =E\left(  \tilde{\varepsilon}_{j}1\left\{  v_{k}-v_{j}+\tilde{\varepsilon
}_{j}\leq\tilde{\varepsilon}_{k},\text{ }k\in\mathcal{J}_{-j}\right\}  \right)
\\
&  =E\left\{  E\left(  \tilde{\varepsilon}_{j}1\left\{  v_{k}-v_{j}%
+\tilde{\varepsilon}_{j}\leq\tilde{\varepsilon}_{k},\text{ }k\in
\mathcal{J}_{-j}\right\}  |\tilde{\varepsilon}_{j}\right)  \right\} \\
&  =\int_{-\infty}^{\infty}z\prod_{k\in\mathcal{J}_{-j}}\left[  1-G\left(
v_{k}-v_{j}+z\right)  \right]  g\left(  z\right)  dz\\
&  =\gamma+\sum_{\ell=1}^{J-1}\left(  -1\right)  ^{\ell}\sum_{k_{1}%
<\ldots<k_{\ell}:\left\{  k_{1},\ldots,k_{\ell}\right\}  \subseteq
\mathcal{J}_{-j}}\int_{-\infty}^{\infty}zG\left(  v_{k_{1}}-v_{j}+z\right)
G\left(  v_{k_{2}}-v_{j}+z\right)  \ldots G\left(  v_{k_{\ell}}-v_{j}%
+z\right)  g\left(  z\right)  dz,
\end{align*}
where $\gamma=\int_{-\infty}^{\infty}zg\left(  z\right)  dz$ is the
Euler--Mascheroni constant.

For the integral in the preceding expression, we have
\begin{align*}
&  \int_{-\infty}^{\infty}zG\left(  v_{k_{1}}-v_{j}+z\right)  G\left(
v_{k_{2}}-v_{j}+z\right)  \ldots G\left(  v_{k_{\ell}}-v_{j}+z\right)
g\left(  z\right)  dz\\
&  =\int_{-\infty}^{\infty}z\exp\left[  -\sum_{k\in\left\{  k_{1}%
,\ldots,k_{\ell}\right\}  }\exp\left(  -v_{k}+v_{j}-z\right)  \right]
\exp\left[  \left(  -z-\exp\left(  -z\right)  \right)  \right]  dz\\
&  =\int_{-\infty}^{\infty}z\exp\left[  -\exp\left(  -z\right)  \sum
_{k\in\left\{  k_{1},\ldots,k_{\ell}\right\}  }\exp\left(  -v_{k}%
+v_{j}\right)  \right]  \exp\left[  \left(  -z-\exp\left(  -z\right)  \right)
\right]  dz\\
&  =\int_{-\infty}^{\infty}z\exp\left[  -\exp\left(  -z\right)  \sum
_{k\in\left\{  k_{1},\ldots,k_{\ell}\right\}  }\exp\left(  -v_{k}%
+v_{j}\right)  \right]  \exp\left[  \left(  -\exp\left(  -z\right)  \right)
\right]  d\exp\left(  -z\right)  .
\end{align*}
By employing a change of variable\ $(s=\exp(-z))$ and letting
\[
b\left(  k_{1},k_{2},\ldots,k_{\ell},j\right)  =1+\sum_{k\in\mathcal{S}_{\ell
}}\exp\left(  -v_{k}+v_{j}\right)
\]
for $\mathcal{S}_{\ell}:=\left\{  k_{1},k_{2},\ldots,k_{\ell}\right\}  ,$ we
then obtain
\begin{align*}
&  \int_{-\infty}^{\infty}zG\left(  v_{k_{1}}-v_{j}+z\right)  G\left(
v_{k_{2}}-v_{j}+z\right)  \ldots G\left(  v_{k_{\ell}}-v_{j}+z\right)
g\left(  z\right)  dz\\
&  =\int_{0}^{\infty}\ln\left(  \frac{1}{s}\right)  \exp\left[  -s\sum
_{k\in\mathcal{S}_{\ell}}\exp\left(  -v_{k}+v_{j}\right)  \right]  \exp\left(
-s\right)  ds\\
&  =\int_{0}^{\infty}\ln\left(  \frac{1}{s}\right)  \exp\left[  -s\left(
\sum_{k\in\mathcal{S}_{\ell}}\exp\left(  -v_{k}+v_{j}\right)  +1\right)
\right]  ds\\
&  =\int_{0}^{\infty}\ln\left(  \frac{1}{s}\right)  \exp\left[  -sb\left(
k_{1},k_{2},\ldots,k_{\ell},j\right)  \right]  ds\\
&  =\frac{\gamma+\ln\left(  b\left(  k_{1},k_{2},\ldots,k_{\ell},j\right)
\right)  }{b\left(  k_{1},k_{2},\ldots,k_{\ell},j\right)  }\\
&  =\left[  \gamma+v_{j}+\ln\left(  \sum_{k\in\mathcal{S}_{\ell}\cup\left\{
j\right\}  }\exp\left(  -v_{k})\right)  \right)  \right]  \frac{\exp\left(
-v_{j}\right)  }{\sum_{k\in\mathcal{S}_{\ell}\cup\left\{  j\right\}  }%
\exp\left(  -v_{k})\right)  }.
\end{align*}
where the second-to-last line follows because for any $\xi>0,$%
\begin{align*}
\int_{0}^{\infty}\ln(\frac{1}{s})\exp\left(  -s\xi\right)  ds  &  =\frac
{1}{\xi}\int_{0}^{\infty}\left[  \ln(\frac{1}{s\xi})+\ln(\xi)\right]
\exp\left(  -s\xi\right)  d\left(  s\xi\right) \\
&  =\frac{\int_{0}^{\infty}\ln(\frac{1}{s})\exp\left(  -s\right)
ds+\ln\left(  \xi\right)  }{\xi}=\frac{\gamma+\ln\left(  \xi\right)  }{\xi}.
\end{align*}

For notational simplicity, let $\mathcal{S}_{\ell,j}:=\mathcal{S}_{\ell}%
\cup\left\{  j\right\}  .$ Then
\begin{align*}
&  E\left[  \max_{j=1,\ldots,J}\left\{  v_{j}+\varepsilon_{j}\right\}  \right]
\\
&  =\sum_{j=1}^{J}v_{j}P^{\mathrm{SEVI}}\left(  Y=j|v\right) \\
&  -\sum_{j=1}^{J}\left(  \gamma+\sum_{\ell=1}^{J-1}\left(  -1\right)  ^{\ell
}\sum_{k_{1}<\ldots<k_{\ell}:\mathcal{S}_{\ell}=\left\{  k_{1},\ldots,k_{\ell
}\right\}  \subseteq\mathcal{J}_{-j}}\left[  \gamma+v_{j}+\ln\left(
\sum_{k\in\mathcal{S}_{\ell,j}}\exp\left(  -v_{k})\right)  \right)  \right]
\frac{\exp\left(  -v_{j}\right)  }{\sum_{k\in\mathcal{S}_{\ell,j}}\exp\left(
-v_{k})\right)  }\right) \\
&  =\sum_{j=1}^{J}v_{j}P^{\mathrm{SEVI}}\left(  Y=j|v\right)  -\gamma
\sum_{j=1}^{J}\left[  1+\sum_{\ell=1}^{J-1}\left(  -1\right)  ^{\ell}%
\sum_{k_{1}<\ldots<k_{\ell}:\mathcal{S}_{\ell}=\left\{  k_{1},\ldots,k_{\ell
}\right\}  \subseteq\mathcal{J}_{-j}}\frac{\exp\left(  -v_{j}\right)  }%
{\sum_{k\in\mathcal{S}_{\ell,j}}\exp\left(  -v_{k})\right)  }\right] \\
&  -\sum_{j=1}^{J}\sum_{\ell=1}^{J-1}\left(  -1\right)  ^{\ell}\sum
_{k_{1}<\ldots<k_{\ell}:\mathcal{S}_{\ell}=\left\{  k_{1},\ldots,k_{\ell
}\right\}  \subseteq\mathcal{J}_{-j}}\left[  v_{j}+\ln\left(  \sum
_{k\in\mathcal{S}_{\ell,j}}\exp\left(  -v_{k})\right)  \right)  \right]
\frac{\exp\left(  -v_{j}\right)  }{\sum_{k\in\mathcal{S}_{\ell,j}}\exp\left(
-v_{k})\right)  }\\
&  =\sum_{j=1}^{J}v_{j}\left[  P^{\mathrm{SEVI}}\left(  Y=j|v\right)
-\sum_{\ell=1}^{J-1}\left(  -1\right)  ^{\ell}\sum_{k_{1}<\ldots<k_{\ell
}:\mathcal{S}_{\ell}=\left\{  k_{1},\ldots,k_{\ell}\right\}  \subseteq
\mathcal{J}_{-j}}\frac{\exp\left(  -v_{j}\right)  }{\sum_{k\in\mathcal{S}%
_{\ell,j}}\exp\left(  -v_{k})\right)  }\right]  -\gamma\\
&  -\sum_{j=1}^{J}\sum_{\ell=1}^{J-1}\left(  -1\right)  ^{\ell}\sum
_{k_{1}<\ldots<k_{\ell}:\mathcal{S}_{\ell}=\left\{  k_{1},\ldots,k_{\ell
}\right\}  \subseteq\mathcal{J}_{-j}}\frac{\exp\left(  -v_{j}\right)  }%
{\sum_{k\in\mathcal{S}_{\ell,j}}\exp\left(  -v_{k})\right)  }\ln\left(
\sum_{k\in\mathcal{S}_{\ell,j}}\exp\left(  -v_{k})\right)  \right) \\
&  =\left(  \sum_{j=1}^{J}v_{j}-\gamma\right)  -\sum_{j=1}^{J}\sum_{\ell
=1}^{J-1}\left(  -1\right)  ^{\ell}\sum_{k_{1}<\ldots<k_{\ell}:\mathcal{S}%
_{\ell}=\left\{  k_{1},\ldots,k_{\ell}\right\}  \subseteq\mathcal{J}_{-j}%
}\frac{\exp\left(  -v_{j}\right)  }{\sum_{k\in\mathcal{S}_{\ell,j}}\exp\left(
-v_{k})\right)  }\ln\left(  \sum_{k\in\mathcal{S}_{\ell,j}}\exp\left(
-v_{k})\right)  \right)  ,
\end{align*}
where the third and last equalities use the result:
\[
1+\sum_{\ell=1}^{J-1}\left(  -1\right)  ^{\ell}\sum_{k_{1}<\ldots<k_{\ell
}:\mathcal{S}_{\ell}=\left\{  k_{1},\ldots,k_{\ell}\right\}  \subseteq
\mathcal{J}_{-j}}\frac{\exp\left(  -v_{j}\right)  }{\sum_{k\in\mathcal{S}%
_{\ell,j}}\exp\left(  -v_{k})\right)  }=P^{\mathrm{SEVI}}\left(  Y=j|v\right)
.
\]
Hence,
\begin{align*}
&  E\left[  \max_{j=1,\ldots,J}\left\{  v_{j}+\varepsilon_{j}\right\}  \right]
\\
&  =-\sum_{j=1}^{J}\sum_{\ell=0}^{J-1}\left(  -1\right)  ^{\ell}\sum
_{k_{1}<\ldots<k_{\ell}:\mathcal{S}_{\ell}=\left\{  k_{1},\ldots,k_{\ell
}\right\}  \subseteq\mathcal{J}_{-j}}\frac{\exp\left(  -v_{j}\right)  }%
{\sum_{k\in\mathcal{S}_{\ell,j}}\exp\left(  -v_{k})\right)  }\ln\left(
\sum_{k\in\mathcal{S}_{\ell,j}}\exp\left(  -v_{k})\right)  \right)  -\gamma\\
&  =\sum_{\ell=1}^{J}\left(  -1\right)  ^{\ell}\sum_{\mathcal{S}_{\ell
}\subseteq\mathcal{J}}\left[  \sum_{j\in\mathcal{S}_{\ell}}\frac{\exp\left(
-v_{j}\right)  }{\sum_{k\in\mathcal{S}_{\ell}}\exp\left(  -v_{k}\right)
}\right]  \ln\left(  \sum_{k\in\mathcal{S}_{\ell}}\exp\left(  -v_{k}\right)
\right)  -\gamma\\
&  =\sum_{\ell=1}^{J}\left(  -1\right)  ^{\ell}\sum_{\mathcal{S}_{\ell
}\subseteq\mathcal{J}}\ln\left(  \sum_{k\in\mathcal{S}_{\ell}}\exp\left(
-v_{k}\right)  \right)  -\gamma\\
&  =\sum_{\ell=1}^{J}\left(  -1\right)  ^{\ell}\sum_{k_{1}<k_{2}%
<\ldots<k_{\ell}:\left\{  k_{1},\ldots,k_{\ell}\right\}  \subseteq\mathcal{J}%
}\ln\left[  \sum_{k\in\left\{  k_{1},\ldots,k_{\ell}\right\}  }\exp\left(
-v_{k}\right)  \right]  -\gamma,
\end{align*}
where $\mathcal{S}_{0}=\varnothing$, the empty set, and the third line follows
from exchanging the orders of summation and organizing the outer sum according
to the value of $\ln\left(  \sum_{k\in\mathcal{S}_{\ell,j}}\exp\left(
-v_{k}\right)  \right)  $ for each $\mathcal{S}_{\ell}\subseteq\mathcal{J}%
_{-j}.$

Now%
\begin{align*}
W^{\mathrm{LEVI}}(v)  &  =\sum_{\ell=1}^{J}\left(  -1\right)  ^{\ell}%
\sum_{k_{1}<k_{2}<\ldots<k_{\ell}:\left\{  k_{1},\ldots,k_{\ell}\right\}
\subseteq\mathcal{J}}\ln\left(  \sum_{k\in\left\{  k_{1},\ldots,k_{\ell
}\right\}  }\exp\left(  -v_{k}\right)  \right)  -\gamma\\
&  -\sum_{\ell=1}^{J}\left(  -1\right)  ^{\ell}\sum_{k_{1}<k_{2}%
<\ldots<k_{\ell}:\left\{  k_{1},\ldots,k_{\ell}\right\}  \subseteq\mathcal{J}%
}\ln\left(  \sum_{k\in\left\{  k_{1},\ldots,k_{\ell}\right\}  }\exp\left(
-0\right)  \right)  +\gamma\\
&  =\sum_{\ell=1}^{J}\left(  -1\right)  ^{\ell}\sum_{k_{1}<k_{2}%
<\ldots<k_{\ell}:\left\{  k_{1},\ldots,k_{\ell}\right\}  \subseteq\mathcal{J}%
}\ln\left[  \frac{1}{\ell}\sum_{k\in\left\{  k_{1},\ldots,k_{\ell}\right\}
}\exp\left(  -v_{k}\right)  \right]  ,
\end{align*}
as desired.

The strict convexity of $W^{\mathrm{SEVI}}(v)$ follows from Theorem 6 of
%TCIMACRO{\TeXButton{\citet{SORENSEN2022}}{\citet{SORENSEN2022}}}%
%BeginExpansion
\citet{SORENSEN2022}%
%EndExpansion
, as it is easy to check that Conditions (C) and (S) in that paper hold.
\end{proof}

\bigskip

\begin{proof}
[Proof of Proposition \ref{Prop: proportional}]Let $\mathbf{1}_{J}$ be the
$J\times1$ vector of ones. Since the choice probabilities do not change when a
constant $\mu\in\mathbb{R}$ is added to each $V_{ij}\left(  \beta\right)  $,
$E\sum_{k=1}^{J}Y_{ik}\log P^{\mathrm{Q}}(Y_{i}=k|\mathcal{J},V_{i}\left(
\beta\right)  +\mu\mathbf{1}_{J})$ is a constant function of $\mu.$ This
implies that, as a function of $\mu,$ the derivative of $E\sum_{k=1}^{J}%
Y_{ik}\log P^{\mathrm{Q}}(Y_{i}=k|\mathcal{J},V_{i}\left(  \beta\right)
+\mu\mathbf{1}_{J})$ with respect to $\mu$ is a zero function for any
$V_{i}\left(  \beta\right)  .$ But this derivative at $\mu=0$ is
\begin{align*}
&  E\sum_{k=1}^{J}Y_{ik}\frac{\partial\log P^{\mathrm{Q}}(Y_{i}=k|V_{i}\left(
\beta\right)  +\mu\mathbf{1}_{J})}{\partial\mu}|_{\mu=0}\\
&  =E\sum_{k=1}^{J}Y_{ik}\frac{\partial\log P^{\mathrm{Q}}(Y_{i}%
=k|V_{i}\left(  \beta\right)  )}{\partial V_{i}\left(  \beta\right)  ^{\prime
}}\mathbf{1}_{J}\\
&  =E\left[  \mathbf{1}_{J}^{\prime}\sum_{k=1}^{J}Y_{ik}\frac{\partial\log
P^{\mathrm{Q}}(Y_{i}=k|V_{i}\left(  \beta\right)  )}{\partial V_{i}\left(
\beta\right)  }\right]  =E\left[  \mathbf{1}_{J}^{\prime}\Lambda
_{i}^{\mathrm{Q}}\left(  X_{i}\beta\right)  \overrightarrow{Y}_{i}\right]  .
\end{align*}
Hence, $E\left[  \mathbf{1}_{J}^{\prime}\Lambda_{i}^{\mathrm{Q}}\left(
X_{i}\beta\right)  \overrightarrow{Y}_{i}\right]  =0$ for any $\beta\in
int(\mathcal{B)}$. In particular, $E\left[  \mathbf{1}_{J}^{\prime}\Lambda
_{i}^{\mathrm{Q}}\left(  X_{i}\beta_{\mathrm{Q}}^{\ast}\right)
\overrightarrow{Y}_{i}\right]  =0.$ Combining this with Assumption (i) of the
Proposition, we conclude that $\beta=\beta_{\mathrm{Q}}^{\ast}$ is the unique
solution to
\begin{align*}
E\left[  \mathbf{1}_{J}^{\prime}\Lambda_{i}^{\mathrm{Q}}\left(  X_{i}%
\beta\right)  \overrightarrow{Y}_{i}\right]   &  =0,\\
E\left[  X_{i}^{\prime}\Lambda_{i}^{\mathrm{Q}}\left(  X_{i}\beta\right)
\overrightarrow{Y}_{i}\right]   &  =0.
\end{align*}

Since $\beta_{0}\neq0,$ we know that $\beta_{0,\ell}\neq0$ for some
$\ell=1,\ldots,$ or $L.$ Without loss of generality, we assume that
$\beta_{0,1}\neq0.$ We partition $X_{i}$ into $X_{i}=(X_{i}^{(1)},X_{i}%
^{(2)})\in\mathbb{R}^{J\times L}$ with $X_{i}^{(1)}\in\mathbb{R}^{J\times1}$
and $X_{i}^{(2)}\in\mathbb{R}^{J\times\left(  L-1\right)  }.$ For any
$\beta\in\mathbb{R}^{L},$ partition it conformably as $\beta=\left(  \beta
_{1},\beta_{2}^{\prime}\right)  ^{\prime}$ where $\beta_{1}\in\mathbb{R}$ and
$\beta_{2}\in\mathbb{R}^{L-1}.$ In particular, we write $\beta_{0}=\left(
\beta_{0,1}^{\prime},\beta_{0,2}^{\prime}\right)  ^{\prime}$ and
$\beta_{\mathrm{Q}}^{\ast}=\left(  \beta_{\mathrm{Q},1}^{\ast\prime}%
,\beta_{\mathrm{Q},2}^{\ast\prime}\right)  ^{\prime},$ where $\beta_{0,1}$ and
$\beta_{\mathrm{Q},1}^{\ast}$ are the first elements of $\beta_{0}$ and
$\beta_{\mathrm{Q}}^{\ast},$ respectively. Define $\delta_{1}=\beta_{1}%
/\beta_{0,1}$ and $\delta_{2}=\beta_{2}-\delta_{1}\beta_{0,2}.$ Then, we have
the following reparametrization:
\begin{align*}
X_{i}\beta &  =X_{i}\beta_{0}+X_{i}\left(  \beta-\beta_{0}\right) \\
&  =X_{i}\beta_{0}+X_{i}^{\left(  1\right)  }\left(  \beta_{1}-\beta
_{0,1}\right)  +X_{i}^{\left(  2\right)  }\left(  \beta_{2}-\beta_{0,2}\right)
\\
&  =X_{i}\beta_{0}+X_{i}^{\left(  1\right)  }\beta_{0,1}\left(  \delta
_{1}-1\right)  +X_{i}^{\left(  2\right)  }\left(  \delta_{2}+\delta_{1}%
\beta_{0,2}-\beta_{0,2}\right) \\
&  =\left[  X_{i}\beta_{0}-X_{i}^{\left(  1\right)  }\beta_{0,1}%
-X_{i}^{\left(  2\right)  }\beta_{0,2}\right]  +\left[  X_{i}^{\left(
1\right)  }\beta_{0,1}+X_{i}^{\left(  2\right)  }\beta_{0,2}\right]
\delta_{1}+X_{i}^{\left(  2\right)  }\delta_{2}\\
&  =\left(  X_{i}\beta_{0}\right)  \delta_{1}+X_{i}^{\left(  2\right)  }%
\delta_{2}:=\mathcal{X}_{i}\delta,
\end{align*}
where $\mathcal{X}_{i}=(\mathcal{X}_{i}^{\left(  1\right)  },\mathcal{X}%
_{i}^{\left(  2\right)  })$, $\mathcal{X}_{i}^{(1)}=X_{i}\beta_{0}%
\in\mathbb{R}^{J\times1},$ and $\mathcal{X}_{i}^{(2)}=X_{i}^{\left(  2\right)
}\in\mathbb{R}^{J\times\left(  L-1\right)  }.$ Note that%
\[
\mathcal{X}_{i}=X_{i}\left(
\begin{array}
[c]{cc}%
\delta_{1}\beta_{0,1}, & \mathbf{0}_{1\times\left(  L-1\right)  }\\
\delta_{1}\beta_{0,2}, & I_{\left(  L-1\right)  \times\left(  L-1\right)  }%
\end{array}
\right)  _{L\times L}.
\]
So, when $\delta_{1}\neq0,$ there is a one-to-one invertible map between
$\mathcal{X}_{i}$ and $X_{i}.$ As a result,\ when $\delta_{1}\neq0,$ the
system of equations below
\begin{equation}
\left(
\begin{array}
[c]{c}%
E\left[  X_{i}^{(1)}\right]  ^{\prime}\Lambda_{i}^{\mathrm{Q}}\left(
X_{i}\beta\right)  \overrightarrow{Y}_{i}\\
E\left[  X_{i}^{(2)}\right]  ^{\prime}\Lambda_{i}^{\mathrm{Q}}\left(
X_{i}\beta\right)  \overrightarrow{Y}_{i}%
\end{array}
\right)  =0 \label{X system equations}%
\end{equation}
is equivalent to
\begin{equation}
\left(
\begin{array}
[c]{c}%
E\left[  \mathcal{X}_{i}^{(1)}\right]  ^{\prime}\Lambda_{i}^{\mathrm{Q}%
}\left(  \mathcal{X}_{i}\delta\right)  \overrightarrow{Y}_{i}\\
E\left[  \mathcal{X}_{i}^{(2)}\right]  ^{\prime}\Lambda_{i}^{\mathrm{Q}%
}\left(  \mathcal{X}_{i}\delta\right)  \overrightarrow{Y}_{i}%
\end{array}
\right)  =0. \label{Z system equations}%
\end{equation}
It then follows that if $\beta_{\mathrm{Q}}^{\ast}$ is the unique solution to
(\ref{X system equations}), then $\delta^{\ast}=\left(  \delta_{1}^{\ast
},\delta_{2}^{\ast\prime}\right)  ^{\prime}$ for
\[
\delta_{1}^{\ast}=\frac{\beta_{\mathrm{Q},1}^{\ast}}{\beta_{0,1}},\delta
_{2}^{\ast}=\beta_{\mathrm{Q},2}^{\ast}-\frac{\beta_{\mathrm{Q},1}^{\ast}%
}{\beta_{0,1}}\beta_{0,2}%
\]
is the unique solution to (\ref{Z system equations}).

We now show that (\ref{Z system equations}) has a solution of the form
$\delta_{1}=\delta_{1}^{\circ}$ and $\delta_{2}=0$ for some scaler $\delta
_{1}^{\circ}.$ We choose $\delta_{1}=\delta_{1}^{\circ}\neq0$ such that the
first equation in (\ref{Z system equations}) holds when $\delta_{2}$ is set
equal to zero. That is, a nonzero $\delta_{1}^{\circ}$ is chosen to satisfy
\[
E\left[  \mathcal{X}_{i}^{(1)}\right]  ^{\prime}\Lambda_{i}^{\mathrm{Q}%
}\left(  \mathcal{X}_{i}^{\left(  1\right)  }\delta_{1}^{\circ}\right)
\overrightarrow{Y}_{i}=0.
\]
This is possible under Assumption (ii) of the Proposition. Note that
\[
E\left[  \overrightarrow{Y}_{i}|\mathcal{X}_{i}^{(1)},\mathcal{X}_{i}^{\left(
2\right)  }\right]  =E\left[  \overrightarrow{Y}_{i}|X_{i}\beta_{0}%
,\mathcal{X}_{i}^{\left(  2\right)  }\right]  =E\left[  \overrightarrow{Y}%
_{i}|X_{i}\beta_{0}\right]  =E\left[  \overrightarrow{Y}_{i}|\mathcal{X}%
_{i}^{(1)}\right]  ,
\]
because $\overrightarrow{Y}_{i}$ is a function of $X_{i}\beta_{0}$ and
$\varepsilon_{i}$ only, and $\varepsilon_{i}$ is independent of $X_{i}.$ Now,
by the law of iterated expectations, we have
\begin{align*}
&  E\left\{  \left[  \mathcal{X}_{i}^{(2)}\right]  ^{\prime}\Lambda
_{i}^{\mathrm{Q}}\left(  \mathcal{X}_{i}^{\left(  1\right)  }\delta
_{1}+\mathcal{X}_{i}^{\left(  2\right)  }\delta_{2}\right)  \overrightarrow{Y}%
_{i}\right\}  _{\delta_{1}=\delta_{1}^{\circ},\text{ }\delta_{2}=\mathbf{0}}\\
&  =E\left\{  \left[  \mathcal{X}_{i}^{(2)}\right]  ^{\prime}\Lambda
_{i}^{\mathrm{Q}}\left(  \mathcal{X}_{i}^{(1)}\delta_{1}^{\circ}\right)
\overrightarrow{Y}_{i}\right\}  =E\left\{  \left[  \mathcal{X}_{i}%
^{(2)}\right]  ^{\prime}\Lambda_{i}^{\mathrm{Q}}\left(  \mathcal{X}_{i}%
^{(1)}\delta_{1}^{\circ}\right)  E\left[  \overrightarrow{Y}_{i}%
|\mathcal{X}_{i}^{(1)},\mathcal{X}_{i}^{\left(  2\right)  }\right]  \right\}
\\
&  =E\left\{  \left[  \mathcal{X}_{i}^{(2)}\right]  ^{\prime}\Lambda
_{i}^{\mathrm{Q}}\left(  \mathcal{X}_{i}^{(1)}\delta_{1}^{\circ}\right)
E\left[  \overrightarrow{Y}_{i}|\mathcal{X}_{i}^{(1)}\right]  \right\}
=E\left\{  E\left[  \mathcal{X}_{i}^{(2)}|\mathcal{X}_{i}^{\left(  1\right)
}\right]  ^{\prime}\Lambda_{i}^{\mathrm{Q}}\left(  \mathcal{X}_{i}^{(1)}%
\delta_{1}^{\circ}\right)  E\left[  \overrightarrow{Y}_{i}|\mathcal{X}%
_{i}^{(1)}\right]  \right\} \\
&  =E\left\{  E\left[  \mathcal{X}_{i}^{(2)}|\mathcal{X}_{i}^{\left(
1\right)  }\right]  ^{\prime}\Lambda_{i}^{\mathrm{Q}}\left(  \mathcal{X}%
_{i}^{(1)}\delta_{1}^{\circ}\right)  \overrightarrow{Y}_{i}\right\}  .
\end{align*}
Under Assumption (iii) of the Proposition, we have
\[
\underset{J\times\left(  L-1\right)  }{\underbrace{E\left[  \mathcal{X}%
_{i}^{(2)}|\mathcal{X}_{i}^{\left(  1\right)  }\right]  }}=\underset{J\times
1}{\underbrace{\mathcal{X}_{i}^{(1)}}}\underset{1\times\left(  L-1\right)
}{\underbrace{\Theta_{1}^{\prime}}}+\underset{J\times
1}{\underbrace{\boldsymbol{1}_{J}}}\underset{1\times\left(  L-1\right)
}{\underbrace{\Theta_{0}^{\prime}}}%
\]
where $\Theta_{0}=\left(  \Theta_{0,2},\ldots,\Theta_{0,L}\right)  ^{\prime}$
and $\Theta_{1}=\left(  \Theta_{1,2},\ldots,\Theta_{1,L}\right)  ^{\prime}.$
Using the above, $E\left\{  \boldsymbol{1}_{J}^{\prime}\Lambda_{i}%
^{\mathrm{Q}}\left(  \mathcal{X}_{i}\delta\right)  \overrightarrow{Y}%
_{i}\right\}  =0$ for any $\delta,$ and $E\left\{  \left[  \mathcal{X}%
_{i}^{(1)}\right]  ^{\prime}\Lambda_{i}^{\mathrm{Q}}\left(  \mathcal{X}%
_{i}^{(1)}\delta_{1}^{\circ}\right)  \overrightarrow{Y}_{i}\right\}  =0$, we
obtain%
\begin{align*}
&  E\left\{  \left[  \mathcal{X}_{i}^{(2)}\right]  ^{\prime}\Lambda
_{i}^{\mathrm{Q}}\left(  \mathcal{X}_{i}^{\left(  1\right)  }\delta
_{1}+\mathcal{X}_{i}^{\left(  2\right)  }\delta_{2}\right)  \overrightarrow{Y}%
_{i}\right\}  _{\delta_{1}=\delta_{1}^{\circ},\text{ }\delta_{2}=0}\\
&  =\Theta_{0}E\left\{  \boldsymbol{1}_{J}^{\prime}\Lambda_{i}^{\mathrm{Q}%
}\left(  \mathcal{X}_{i}^{(1)}\delta_{1}^{\circ}\right)  \overrightarrow{Y}%
_{i}\right\}  +\Theta_{1}E\left\{  \left[  \mathcal{X}_{i}^{(1)}\right]
^{\prime}\Lambda_{i}^{\mathrm{Q}}\left(  \mathcal{X}_{i}^{(1)}\delta
_{1}^{\circ}\right)  \overrightarrow{Y}_{i}\right\}  =0.
\end{align*}
We have therefore shown that $\delta=\left(  \delta_{1}^{\circ},0_{1\times
\left(  L-1\right)  }\right)  ^{\prime}$ is a solution to
(\ref{Z system equations}).

Since (\ref{Z system equations}) has a unique solution, we must have
\[
\delta_{1}^{\ast}=\delta_{1}^{\circ}\text{ and }\delta_{2}^{\ast}%
=\beta_{\mathrm{Q},2}^{\ast}-\frac{\beta_{\mathrm{Q},1}^{\ast}}{\beta_{0,1}%
}\beta_{0,2}=0.
\]
That is, $\beta_{\mathrm{Q},2}^{\ast}=\delta_{1}^{\circ}\beta_{0,2}$. Hence,
$\beta_{\mathrm{Q}}^{\ast}=\delta_{1}^{\circ}\beta_{0}.$
\end{proof}

\bigskip

\bibliographystyle{apalike}
\bibliography{sevi_ref.bib}

\begin{thebibliography}{}

\bibitem[Barberá and Pattanaik, 1986]{BarbaraPattanaik1986}
Barberá, S. and Pattanaik, P.~K. (1986).
\newblock Falmagne and the rationalizability of stochastic choices in terms of
  random orderings.
\newblock {\em Econometrica}, 54(3):707--715.

\bibitem[Block and Marschak, 1960]{BlockMarschak1959}
Block, H. and Marschak, J. (1960).
\newblock Random orderings and stochastic theories of responses.
\newblock In et~al., I.~O., editor, {\em Contributions To Probability And
  Statistics}. Stanford: Stanford University Press.

\bibitem[Brownstone et~al., 1996]{BROWNSTONE1996}
Brownstone, D., Bunch, D.~S., Golob, T.~F., and Ren, W. (1996).
\newblock A transactions choice model for forecasting demand for
  alternative-fuel vehicles.
\newblock {\em Research in Transportation Economics}, 4:87--129.

\bibitem[Brownstone and Train, 1998]{BROWNSTONE1998}
Brownstone, D. and Train, K. (1998).
\newblock Forecasting new product penetration with flexible substitution
  patterns.
\newblock {\em Journal of Econometrics}, 89(1):109--129.

\bibitem[Cameron and Trivedi, 2005]{cameron_trivedi_2005}
Cameron, A. and Trivedi, P. (2005).
\newblock {\em Microeconometrics: Methods and Applications}.
\newblock Cambridge University Press.

\bibitem[Cameron and Trivedi, 2022a]{cameron_trivedi_2022a}
Cameron, A. and Trivedi, P. (2022a).
\newblock {\em Analysis of Economics Data}.
\newblock Amazon.com.

\bibitem[Cameron and Trivedi, 2022b]{cameron_trivedi_2022b}
Cameron, A. and Trivedi, P. (2022b).
\newblock {\em Microeconometrics using Stata}.
\newblock Stata Press, College Station, Texas.

\bibitem[Chambers and Cox, 1967]{Chambers_Cox1967}
Chambers, E.~A. and Cox, D.~R. (1967).
\newblock Discrimination between alternative binary response models.
\newblock {\em Biometrika}, 54(3):573--578.

\bibitem[Cox, 1969]{Cox1969}
Cox, D.~R. (1969).
\newblock {\em The Analysis of Binary Data}.
\newblock Chapman and Hall, London.

\bibitem[Cramer, 2003]{Cramer2003}
Cramer, J.~S. (2003).
\newblock {\em Logit Models from Economics and Other Fields}.
\newblock Cambridge University Press, New York.

\bibitem[Croissant, 2020]{Croissant2020}
Croissant, Y. (2020).
\newblock Estimation of random utility models in {R}: the {mlogit} package.
\newblock {\em Journal of Statistical Software}, 95(11):1--41.

\bibitem[Dagsvik, 2016]{Dagsvik2016}
Dagsvik, J.~K. (2016).
\newblock What independent random utility representations are equivalent to the
  {IIA} assumption?
\newblock {\em Theory and Decision}, 80(3):495--499.

\bibitem[Daly and Zachery, 1978]{Daly1978}
Daly, A. and Zachery, S. (1978).
\newblock Improved multiple choice models.
\newblock In Henscher, D. and Dalvi, Q., editors, {\em Identifying and
  Measuring the Determinants of Mode Choice}. Teakfields, London.

\bibitem[Fiebig et~al., 2010]{Fiebig2010}
Fiebig, D.~G., Keane, M.~P., Louviere, J., and Wasi, N. (2010).
\newblock The generalized multinomial logit model: Accounting for scale and
  coefficient heterogeneity.
\newblock {\em Marketing Science}, 29(3):393--421.

\bibitem[Fowlie, 2010]{Fowlie2010}
Fowlie, M. (2010).
\newblock Emissions trading, electricity restructuring, and investment in
  pollution abatement.
\newblock {\em American Economic Review}, 100(3):837--69.

\bibitem[Geweke and Keane, 2001]{Geweke2001}
Geweke, J. and Keane, M.~P. (2001).
\newblock Computationally intensive methods for integration in econometrics.
\newblock In Heckman, J.~J. and Leamer, E.~E., editors, {\em Handbook of
  Econometrics}, volume~5, pages 3463--3568. Elsevier, Amsterdam.

\bibitem[Greene, 2018]{Greene2018}
Greene, W.~H. (2018).
\newblock {\em Econometric Analysis}.
\newblock Pearson, New York, 8th edition.

\bibitem[Gumbel, 1958]{Gumbel1958}
Gumbel, E.~J. (1958).
\newblock {\em Statistics of Extremes}.
\newblock Columbia University Press, New York.

\bibitem[Hahn et~al., 2020]{Hahn2020}
Hahn, J., Hausman, J., and Lustig, J. (2020).
\newblock Specification test on mixed logit models.
\newblock {\em Journal of Econometrics}, 219(1):19--37.

\bibitem[Hajivassiliou and McFadden, 1998]{Hajivassiliou1998}
Hajivassiliou, V.~A. and McFadden, D. (1998).
\newblock The method of simulated scores for the estimation of {LDV} models.
\newblock {\em Econometrica}, 66(4):863--896.

\bibitem[Hausman and McFadden, 1984]{HausmanMcFadden1984}
Hausman, J. and McFadden, D. (1984).
\newblock Specification tests for the multinomial logit model.
\newblock {\em Econometrica}, 52(5):1219--1240.

\bibitem[Hausman and Wise, 1977]{Hausman_Wise1977}
Hausman, J. and Wise, D. (1977).
\newblock Social experimentation, truncated distributions, and efficient
  estimation.
\newblock {\em Econometrica}, 45(4):919--938.

\bibitem[Hensher et~al., 2015]{Hensher_Rose_Greene_2015}
Hensher, D.~A., Rose, J.~M., and Greene, W.~H. (2015).
\newblock {\em Applied Choice Analysis: A Primer}.
\newblock Cambridge University Press, 2nd edition.

\bibitem[Herriges and Kling, 1999]{Herriges_Kling1999}
Herriges, J.~A. and Kling, C.~L. (1999).
\newblock Nonlinear income effects in random utility models.
\newblock {\em Review of Economics and Statistics}, 81(1):62--72.

\bibitem[Horowitz, 1982]{Horowitz1982}
Horowitz, J.~L. (1982).
\newblock Specification tests for probabilistic choice models.
\newblock {\em Transportation Research Part A: General}, 16(5-6):383--394.

\bibitem[Hu, 2005]{Hu2005}
Hu, W. (2005).
\newblock Logit models: smallest versus largest extreme value error
  distributions.
\newblock {\em Applied Economics Letters}, 12(12):741--744.

\bibitem[Jagelka, 2024]{Jagelka2024}
Jagelka, T. (2024).
\newblock Are economists’ preferences psychologists’ personality traits? a
  structural approach.
\newblock {\em Journal of Political Economy}, 132(3):910--970.

\bibitem[Jain et~al., 1994]{Jain_etc1994}
Jain, D.~C., Vilcassim, N.~J., and Chintagunta, P.~K. (1994).
\newblock A random-coefficients logit brand-choice model applied to panel data.
\newblock {\em Journal of Business \& Economic Statistics}, 12(3):317--328.

\bibitem[Jiang et~al., 2024]{Jiang2024}
Jiang, Z., Peng, C., and Yan, H. (2024).
\newblock Personality differences and investment decision-making.
\newblock {\em Journal of Financial Economics}, 153:103776.

\bibitem[Keane and Wolpin, 1994]{Keane1994}
Keane, M.~P. and Wolpin, K.~I. (1994).
\newblock The solution and estimation of discrete choice dynamic programming
  models by simulation and interpolation: {Monte Carlo} evidence.
\newblock {\em Review of Economics and Statistics}, 76(5):648--672.

\bibitem[Knuth, 2011]{KnuthDonald2011TAoC}
Knuth, D. (2011).
\newblock {\em The Art of Computer Programming, Volume 4A: Combinatorial
  Algorithms, Part 1}.
\newblock Pearson Education, Hoboken, NJ.

\bibitem[Lindberg and Smith, 2017]{Lindberg}
Lindberg, P.~O. and Smith, T. (2017).
\newblock A note on a recent paper by {Dagsvik} on {IIA} and random utilities.
\newblock {\em Theory and Decision}, 82(2):305--307.

\bibitem[Luce, 1959]{luce1959individual}
Luce, R.~D. (1959).
\newblock {\em Individual Choice Behavior: A Theoretical Analysis}.
\newblock Wiley.

\bibitem[Mai et~al., 2017]{MAI2017}
Mai, T., Bastin, F., and Frejinger, E. (2017).
\newblock On the similarities between random regret minimization and mother
  logit: The case of recursive route choice models.
\newblock {\em Journal of Choice Modelling}, 23(C):21--33.

\bibitem[Manski, 1977]{Manski1977}
Manski, C.~F. (1977).
\newblock The structure of random utility models.
\newblock {\em Theory and Decision}, 8(3):229--254.

\bibitem[McFadden, 1973]{McFadden1973}
McFadden, D. (1973).
\newblock Conditional logit analysis of qualitative choice behaviour.
\newblock In Zarembka, P., editor, {\em Frontiers in Econometrics}, pages
  105--142. Academic Press, New York, NY, USA.

\bibitem[McFadden, 1981]{McFadden1981}
McFadden, D. (1981).
\newblock Econometric models of probabilistic choice.
\newblock In Manski, C.~F. and McFadden, D., editors, {\em Structural Analysis
  of Discrete Data with Econometric Applications}, pages 198--272. MIT Press.

\bibitem[McFadden, 2001]{McFadden2001}
McFadden, D. (2001).
\newblock Economic choices.
\newblock {\em American Economic Review}, 91(3):351--378.

\bibitem[McFadden and Train, 2000]{McFadden_Train2000}
McFadden, D. and Train, K. (2000).
\newblock Mixed {MNL} models for discrete response.
\newblock {\em Journal of Applied Econometrics}, 15(5):447--470.

\bibitem[McFadden et~al., 1978]{McFaddenTrainTye1978}
McFadden, D., Train, K., and Tye, W. (1978).
\newblock An application of diagnostic tests for the irrelevant alternatives
  properties of the multinomial logit model.
\newblock {\em Transportation Research Record}, 637:39--46.

\bibitem[Paap and Franses, 2000]{Paap_Franses2000}
Paap, R. and Franses, P.~H. (2000).
\newblock A dynamic multinomial probit model for brand choice with different
  long-run and short-run effects of marketing-mix variables.
\newblock {\em Journal of Applied Econometrics}, 15(6):717--744.

\bibitem[Ross, 2013]{ross2013first}
Ross, S. (2013).
\newblock {\em A First Course in Probability}.
\newblock Pearson.

\bibitem[Ruud, 1986]{RUUD1986}
Ruud, P.~A. (1986).
\newblock Consistent estimation of limited dependent variable models despite
  misspecification of distribution.
\newblock {\em Journal of Econometrics}, 32(1):157--187.

\bibitem[Strzalecki, 2023]{Strzalecki2023}
Strzalecki, T. (2023).
\newblock {\em Stochastic Choice Theory}.
\newblock preprint, Department of Economics, Harvard University.
\newblock
  https://scholar.harvard.edu/sites/scholar.harvard.edu/files/tomasz/files/scnotes94k.pdf.

\bibitem[Sørensen and Fosgerau, 2022]{SORENSEN2022}
Sørensen, J. R.-V. and Fosgerau, M. (2022).
\newblock How {McFadden} met {Rockafellar} and learned to do more with less.
\newblock {\em Journal of Mathematical Economics}, 100:102629.

\bibitem[Thurstone, 1927]{Thurstone1927}
Thurstone, L.~L. (1927).
\newblock A law of comparative judgment.
\newblock {\em Psychological Review}, 34(4):273--286.

\bibitem[Train, 2009]{book_train_2009}
Train, K. (2009).
\newblock {\em Discrete Choice Methods with Simulation}.
\newblock Cambridge University Press, 2nd edition.

\bibitem[Vuong, 1989]{Vuong1989}
Vuong, Q.~H. (1989).
\newblock Likelihood ratio tests for model selection and non-nested hypotheses.
\newblock {\em Econometrica}, 57(2):307--333.

\bibitem[Williams, 1977]{Williams1977}
Williams, H. (1977).
\newblock On the formulation of travel demand models and economic measures of
  user benefit.
\newblock {\em Environment \& Planning A: Economy and Space}, 9(3):285--344.

\bibitem[Wooldridge, 2002]{Wooldridge2002}
Wooldridge, J.~M. (2002).
\newblock {\em Econometric Analysis of Cross Section and Panel Data}.
\newblock MIT Press, Cambridge and London.

\bibitem[Yellott, 1977]{YELLOTT1977}
Yellott, J.~I. (1977).
\newblock The relationship between {Luce's} choice axiom, {Thurstone's} theory
  of comparative judgment, and the double exponential distribution.
\newblock {\em Journal of Mathematical Psychology}, 15(2):109--144.

\end{thebibliography}

\clearpage\pagebreak

\section*{Online Supplementary Appendix}

\renewcommand{\thesubsection}{S.\Alph{subsection}}
\setcounter{equation}{0}\setcounter{figure}{0}\setcounter{subsection}{0} \renewcommand{\theequation}{\textcolor{red}{S.\arabic{equation}}}

\pagenumbering{arabic} \renewcommand{\thepage}{S.\arabic{page}}

\renewcommand{\thefigure}{\textcolor{red}{S.\arabic{figure}}} \renewcommand{\theHfigure}{S.\arabic{figure}}

\pagenumbering{arabic} \renewcommand{\thepage}{S.\arabic{page}}

\subsection{An alternative proof of Theorem
\ref{Theorem: choice_prob_under_SEV}\label{alternative_proof}}

For $\varepsilon_{ij}\thicksim iid$ $\mathrm{SEVI}$ over $j\in\mathcal{J}$, we
have
\[
P^{\mathrm{SEVI}}\left(  Y_{i}=j|v_{i}\right)  =\Pr(v_{ij}+\varepsilon
_{ij}\geq v_{ik}+\varepsilon_{ik}\text{ for all }k\in\mathcal{J}_{-j}|v_{i}).
\]
Using the distributional equivalence $\varepsilon_{ij}\overset{d}{=}%
-\tilde{\varepsilon}_{ij}$ for $\tilde{\varepsilon}_{ij}\thicksim
iid\mathrm{LEVI}$ over $j\in\mathcal{J}$, we obtain%
\begin{align}
P^{\mathrm{SEVI}}\left(  Y_{i}=j|v_{i}\right)   &  =\Pr(v_{ij}-\tilde
{\varepsilon}_{ij}\geq v_{ik}-\tilde{\varepsilon}_{ik}\text{ for all }%
k\in\mathcal{J}_{-j}|v_{i})\nonumber\\
&  =\Pr(-v_{ij}+\tilde{\varepsilon}_{ij}\leq-v_{ik}+\tilde{\varepsilon}%
_{ik}\text{ for all }k\in\mathcal{J}_{-j}|v_{i})\nonumber\\
&  =\Pr(v_{ik}-v_{ij}+\tilde{\varepsilon}_{ij}\leq\tilde{\varepsilon}%
_{ik}\text{ for all }k\in\mathcal{J}_{-j}|v_{i}).\nonumber
\end{align}
Hence,
\begin{align}
&  P^{\mathrm{SEVI}}\left(  Y_{i}=j|v_{i}\right) \nonumber\\
&  =\int_{-\infty}^{\infty}\left\{  \prod_{k\in\mathcal{J}_{-j}}\left[
1-G\left(  v_{ik}-v_{ij}+z\right)  \right]  \right\}  g\left(  z\right)
dz\nonumber\\
&  =1+\sum_{\ell=1}^{J-1}\left(  -1\right)  ^{\ell}\sum_{k_{1}<\ldots<k_{\ell
}:\left\{  k_{1},\ldots,k_{\ell}\right\}  \subseteq\mathcal{J}_{-j}%
}\nonumber\\
&  \int_{-\infty}^{\infty}G\left(  v_{ik_{1}}-v_{ij}+z\right)  G\left(
v_{ik_{2}}-v_{ij}+z\right)  \ldots G\left(  v_{ik_{\ell}}-v_{ij}+z\right)
g\left(  z\right)  dz.
\end{align}
But, the integral in the preceding expression is%
\begin{align*}
&  \int_{-\infty}^{\infty}G\left(  v_{ik_{1}}-v_{ij}+z\right)  G\left(
v_{ik_{2}}-v_{ij}+z\right)  \ldots G\left(  v_{ik_{\ell}}-v_{ij}+z\right)
g\left(  z\right)  dz\\
&  =\int_{-\infty}^{\infty}\exp\left[  -\sum_{k\in\left\{  k_{1}%
,\ldots,k_{\ell}\right\}  }\exp\left(  -v_{ik}+v_{ij}-z\right)  \right]
\exp\left(  \left[  -z-\exp\left(  -z\right)  \right]  \right)  dz\\
&  =\int_{-\infty}^{\infty}\exp\left[  -\exp\left(  -z\right)  \sum
_{k\in\left\{  k_{1},\ldots,k_{\ell}\right\}  }\exp\left(  -v_{ik}%
+v_{ij}\right)  \right]  \exp\left(  \left[  -\exp\left(  -z\right)  \right]
\right)  d\exp\left(  -z\right) \\
&  =\int_{0}^{\infty}\exp\left[  -\lambda\sum_{k\in\left\{  k_{1}%
,\ldots,k_{\ell}\right\}  }\exp\left(  -v_{ik}+v_{ij}\right)  \right]
\exp\left(  -\lambda\right)  d\lambda\\
&  =\int_{0}^{\infty}\exp\left\{  -\lambda\left[  1+\sum_{k\in\left\{
k_{1},\ldots,k_{\ell}\right\}  }\exp\left(  -v_{ik}+v_{ij}\right)  \right]
\right\}  d\lambda\\
&  =\frac{1}{1+\sum_{k\in\left\{  k_{1},\ldots,k_{\ell}\right\}  }\exp\left(
-v_{ik}+v_{ij}\right)  }.
\end{align*}
So,
\begin{align}
P^{\mathrm{SEVI}}\left(  Y_{i}=j|v_{i}\right)   &  =1+\sum_{\ell=1}%
^{J-1}\left(  -1\right)  ^{\ell}\sum_{k_{1}<\ldots<k_{\ell}:\left\{
k_{1},\ldots,k_{\ell}\right\}  \subseteq\mathcal{J}_{-j}}\frac{1}{1+\sum
_{k\in\left\{  k_{1},\ldots,k_{\ell}\right\}  }\exp\left(  -v_{ik}%
+v_{ij}\right)  }\nonumber\\
&  =1+\sum_{\ell=1}^{J-1}\left(  -1\right)  ^{\ell}\sum_{k_{1}<\ldots<k_{\ell
}:\left\{  k_{1},\ldots,k_{\ell}\right\}  \subseteq\mathcal{J}_{-j}}\frac
{\exp\left(  -v_{ij}\right)  }{\exp\left(  -v_{ij}\right)  +\sum_{k\in\left\{
k_{1},\ldots,k_{\ell}\right\}  }\exp\left(  -v_{ik}\right)  }.
\end{align}

\subsection{Comparing SEVI and LEVI Draws}

Let
\[
F^{\mathrm{LEVI}}\left(  x\right)  :=\Pr(\left\vert \varepsilon_{1}%
-\varepsilon_{2}\right\vert <x|\varepsilon_{1},\varepsilon_{2}\thicksim
^{iid}\mathrm{LEVI},\varepsilon_{1},\varepsilon_{2}>-\ln(\ln(2))
\]
be the conditional CDF of $\left\vert \varepsilon_{1}-\varepsilon
_{2}\right\vert $ conditioning on $\varepsilon_{1}>-\ln(\ln(2))$ and
$\varepsilon_{2}>-\ln(\ln(2)),$ when $\varepsilon_{1},\varepsilon_{2}%
\thicksim^{iid}\mathrm{LEVI}$. Similarly, we define%
\begin{align*}
F^{\mathrm{NORM}}\left(  x\right)   &  :=\Pr(\left\vert \varepsilon
_{1}-\varepsilon_{2}\right\vert <x|\varepsilon_{1},\varepsilon_{2}%
\thicksim^{iid}\mathrm{N}(0,\pi^{2}/6),\text{ }\varepsilon_{1},\varepsilon
_{2}>0),\\
F^{\mathrm{SEVI}}\left(  x\right)   &  :=\Pr(\left\vert \varepsilon
_{1}-\varepsilon_{2}\right\vert <x|\varepsilon_{1},\varepsilon_{2}%
\thicksim^{iid}\mathrm{SEVI},\text{ }\varepsilon_{1},\varepsilon_{2}>\ln
(\ln(2))).
\end{align*}
Note that the medians of the $\mathrm{LEVI}$, $\mathrm{N}(0,\pi^{2}/6),$ and
$\mathrm{SEVI}$ distributions are $-\ln(\ln(2))=0.37,0,$ and $\ln
(\ln(2)=-0.37$, respectively. The above functions are the CDFs of the gap of
two above-median draws from the three distributions with identical variance.
Figure \ref{Figure: cdf_diff_two_above_median_draws} plots these three CDF's,
showing that the gap under the LEVI first-order stochastically dominates the
gap under the normal distribution, which in turn first-order stochastically
dominates the gap under the SEVI.%

%TCIMACRO{\FRAME{fhFU}{4.0326in}{2.8556in}{0pt}{\Qcb{CDF of the gap (absolute
%difference) between two above-median draws from LEVI, N$(0,\pi^{2}/6),$ and
%SEVI. }}{\Qlb{Figure: cdf_diff_two_above_median_draws}}%
%{cdf_diff_two_above_median_draws.eps}{\special{ language "Scientific Word";
%type "GRAPHIC";  display "USEDEF";  valid_file "F";  width 4.0326in;
%height 2.8556in;  depth 0pt;  original-width 4.3613in;
%original-height 2.9101in;  cropleft "0";  croptop "1";  cropright "1";
%cropbottom "0";
%filename '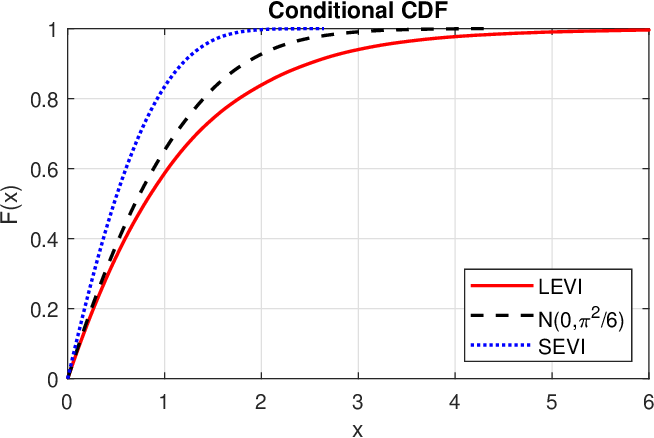';file-properties "XNPEU";}}}%
%BeginExpansion
\begin{figure}[h]%
\centering
\includegraphics[
height=2.8556in,
width=4.0326in
]%
{}%
\caption{CDF of the gap (absolute difference) between two above-median draws
from LEVI, N$(0,\pi^{2}/6),$ and SEVI. }%
\label{Figure: cdf_diff_two_above_median_draws}%
\end{figure}
%EndExpansion

As a further illustration of the differences among the three distributions,
Figure \ref{figure; mean_differences_levi_norm_sev} plots the means of
$\left\{  \varepsilon_{\lbrack J]}-\varepsilon_{\lbrack J-1]}\right\}  $ and
$\max\left\{  \varepsilon_{2}-\varepsilon_{1},\ldots,\varepsilon
_{J}-\varepsilon_{1}\right\}  $ against $J$ when $\left\{  \varepsilon
_{j}\right\}  _{j=1}^{J}$ is iid LEVI, $\mathrm{N}(0,\pi^{2}/6),$\ or SEVI.
Here, $\varepsilon_{\lbrack J]}$ and $\varepsilon_{\lbrack J-1]}$ represent
the largest and second-largest values of $\left\{  \varepsilon_{j}\right\}
_{j=1}^{J}$, and so $\varepsilon_{\lbrack J]}-\varepsilon_{\lbrack J-1]}$ is
the gap between the two largest random components. Figure
\ref{figure; mean_differences_levi_norm_sev}(a) shows that, for a given total
number of alternatives, the gap is smallest under the SEVI, largest under the
LEVI, and takes an intermediate value under the normal distribution. Figure
\ref{figure; mean_differences_levi_norm_sev}(b) shows that the expected
maximum difference of the random components has the same ranking across the
three distributions: it is the largest under the LEVI and the smallest under
the SEVI. The figure indicates that, when the systematic utilities are the
same across alternatives, the random components under the LEVI specification
play a larger role in determining the winning choice than under the SEVI specification.%

%TCIMACRO{\FRAME{ftbpFU}{5.6377in}{2.4656in}{0pt}{\Qcb{The mean of the gap
%between the largest two random components and the mean of the maximum
%difference between two random components. }}%
%{\Qlb{figure; mean_differences_levi_norm_sev}}%
%{mean_differences_levi_norm_sevi.eps}{\special{ language "Scientific Word";
%type "GRAPHIC";  maintain-aspect-ratio TRUE;  display "USEDEF";
%valid_file "F";  width 5.6377in;  height 2.4656in;  depth 0pt;
%original-width 6.7767in;  original-height 2.9473in;  cropleft "0";
%croptop "1";  cropright "1";  cropbottom "0";
%filename '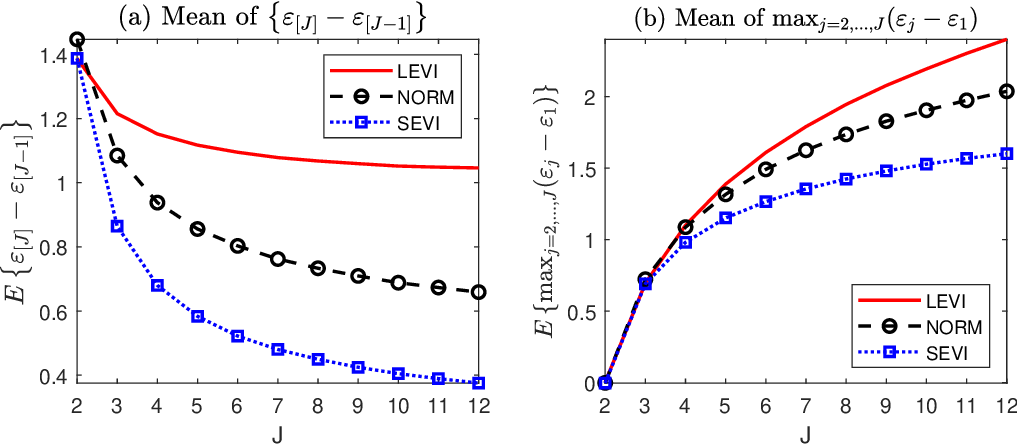';file-properties "XNPEU";}}}%
%BeginExpansion
\begin{figure}[ptb]%
\centering
\includegraphics[
height=2.4656in,
width=5.6377in
]%
{}%
\caption{The mean of the gap between the largest two random components and the
mean of the maximum difference between two random components. }%
\label{figure; mean_differences_levi_norm_sev}%
\end{figure}
%EndExpansion

\subsection{Comparing SEVI and LEVI choice probabilities: further numerical
illustrations\label{Subsec: SEVI_LEVI_diff_supp}}

To further highlight the distinctions between a SEVI model and a LEVI model,
we consider the case with $J=3$ so that $v=\left(  v_{1},v_{2},v_{3}\right)
^{\prime}.$ Define $s=\left(  s_{1},s_{2},s_{3}\right)  ^{\prime}=\left(
\exp\left(  v_{1}\right)  ,\exp\left(  v_{2}\right)  ,\exp\left(
v_{3}\right)  \right)  ^{\prime}.$ Drawing from the terminology of
mathematical psychology (e.g.,
%TCIMACRO{\TeXButton{\citet{YELLOTT1977}}{\citet{YELLOTT1977}}}%
%BeginExpansion
\citet{YELLOTT1977}%
%EndExpansion
), we refer to $s_{j}$ as the \emph{scale value} of alternative $j.$ We create
a contour plot illustrating the difference in the probabilities of choosing
the first alternative (i.e., $P^{\mathrm{SEVI}}(Y=1|v)-P^{\mathrm{LEVI}%
}(Y=1|v)$) against $\left[  s_{2},s_{3}\right]  ,$ the scale values of the
second and third alternatives for a fixed $s_{1}.$ For this fixed $s_{1},$ we
also make a plot of the probability difference against $s_{2}$ while holding
$s_{2}+s_{3}$ constant at a level $C,$ that is, setting $s_{3}=C-s_{2}.$
Figures \ref{Figure: contourDIFF} and \ref{figure: prob1_diff_against_expv2}
present these two plots when $v_{1}=1.5$ and so $s_{1}=\exp\left(
v_{1}\right)  \approx4.5.$ Both figures demonstrate that the difference is a
highly nonlinear function of $s_{2}$ and $s_{3}.$ For a given value of
$s_{2}+s_{3},$ the probability difference $P^{\mathrm{SEVI}}%
(Y=1|v)-P^{\mathrm{LEVI}}(Y=1|v)$ depends on the relative sizes of $s_{2}$ and
$s_{3}.$ Figure \ref{figure: prob1_diff_against_expv2} shows that when
$s_{2}+s_{3}<\exp\left(  2.193\right)  \approx$\thinspace$9.0,$ so that
alternative 1 has a relatively high scale value and $P^{\mathrm{LEVI}%
}(Y=1|v)>1/3,$ the SEVI probability is greater than the LEVI probability.
Otherwise, if $s_{2}+s\geq\exp\left(  2.193\right)  $, alternative 1 may have
a relatively low scale value, and the SEVI probability of choosing alternative
1 can be less than the LEVI counterpart. When $s_{2}+s_{3}<\exp\left(
1.70\right)  \approx5.47,$ the closer $s_{2}$ and $s_{3}$ are, the stronger
alternative 1 is perceived as a market leader, and the larger the difference
$P^{\mathrm{SEVI}}(Y=1|v)-P^{\mathrm{LEVI}}(Y=1|v)$ becomes.\footnote{When
$s_{1}=4.5$ and $s_{2}=s_{3}=0.5\exp(1.70),$ we have $P^{\mathrm{LEVI}%
}(Y=1|v)\approx0.45.$} In summary, in line with Figures
\ref{Figure: Simu_theoretical_pro}--\ref{Figure: Simu_theoretical_pro_DIFF2},
Figures \ref{Figure: contourDIFF} and \ref{figure: prob1_diff_against_expv2}
show that the SEVI model produces a higher choice probability for the most
favored alternative but a lower choice probability for the least favored
alternative than the LEVI model does.

\begin{figure}[ptb]
\centering
\includegraphics[
height=2.739in,
width=3.694in
]{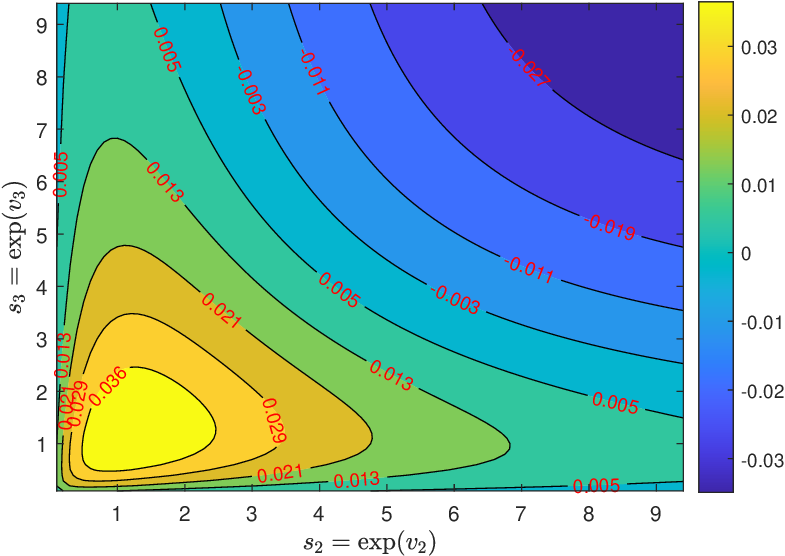}\caption{Contour plot of $P^{\mathrm{SEVI}}%
(Y=1)-P^{\mathrm{LEVI}}(Y=1)$ against $\left[  s_{2},s_{3}\right]
=[\exp(v_{2}),\exp(v_{3})],$ the scale values of alternatives 2 and 3, for a
fixed $s_{1}=\exp\left(  1.5\right)  \approx4.5$}%
\label{Figure: contourDIFF}%
\end{figure}

\begin{figure}[ptb]
\centering
\includegraphics[
height=2.739in,
width=3.694in
]{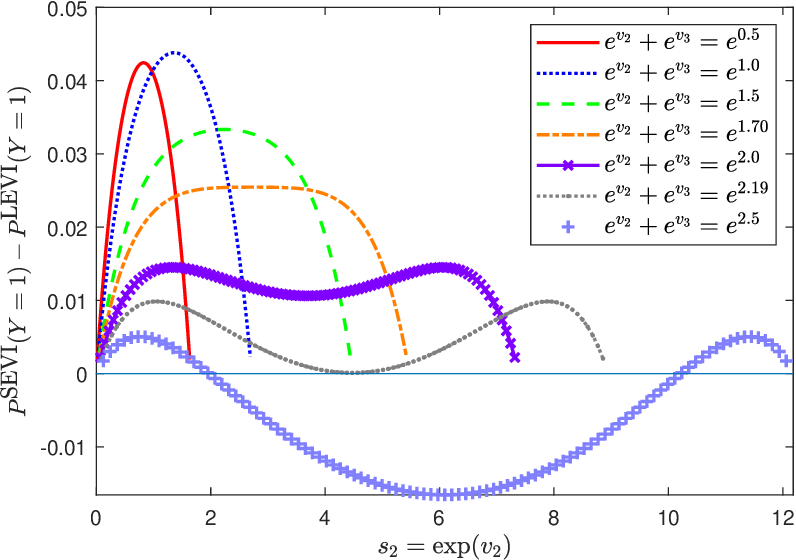}\caption{Plot of $P^{\mathrm{SEVI}%
}(Y=1)-P^{\mathrm{LEVI}}(Y=1)$ against $s_{2}:=\exp(v_{2})$ while holding
$s_{2}+s_{3}:=\exp(v_{2})+\exp(v_{3})$ constant, for a fixed $s_{1}%
=\exp\left(  1.5\right)  \approx4.5$}%
\label{figure: prob1_diff_against_expv2}%
\end{figure}

Figures \ref{Figure: contourDIFF} and \ref{figure: prob1_diff_against_expv2}
are motivated by the following empirical inquiry: Imagine a market with three
firms, where firm 1 is the market leader with a scale value $s_{1}=4.5$ and
firms 2 and 3 have a combined scale value $s_{2}+s_{3}$ of 2.25. Under the
LEVI assumption, firm 1 dominates the market with a market share of 2/3, while
firms 2 and 3 together hold only 1/3. The question is, are consumers more
inclined to purchase from the market leader when the other two firms have
identical market power (i.e., $s_{2}=s_{3}=1.125)$, or when one of them
dominates the other (e.g., $s_{2}=2$ and $s_{3}=0.25)$? The answer depends on
the model used. The standard LEVI logit provides the same prediction for the
dominant firm in both scenarios of competitor configurations, as in the logit
case, only the sum of the competitors' scale values matters. However, the SEVI
model suggests that consumers are 4.4 percentage points more likely to choose
the market leader when the remaining two firms have equal market powers; see
Figure \ref{figure: prob1_diff_against_expv2}. This gap between the LEVI and
SEVI choice probabilities for the dominant firm narrows as the scale values of
the other two firms diverge.

\subsection{Violation of IIA in a standard MNP\label{Subsec: MNP_IIA}}

We use the same setup as discussed at the end of Section \ref{Subsec: IIA} to
illustrate the violation of IIA in a standard MNP, i.e., an MNP with iid
normal random utilities.

Figure\ \ref{Figure: norm_contour_plot_IIA} is similar to Figure
\ref{Figure: contour_plot_IIA}, but we plot the contour of the function
below:
\[
D_{\mathrm{PR}}^{\mathrm{NORM}}(s_{2},s_{3})=\frac{P^{\mathrm{NORM}}%
(Y=2|v_{1},v_{2},v_{3})}{P^{\mathrm{NORM}}(Y=1|v_{1},v_{2},v_{3})}%
-\frac{P^{\mathrm{NORM}}(Y=2|v_{1},v_{2})}{P^{\mathrm{NORM}}(Y=1|v_{1},v_{2}%
)},
\]
where $P^{\mathrm{NORM}}\left(  \cdot|\cdot\right)  $ denotes the choice
probability under the multinomial probit with iid $N(0,\pi^{2}/6)$ errors. The
figure clearly shows that $D_{\mathrm{PR}}^{\mathrm{NORM}}(s_{2},s_{3})$ is
not identically zero, indicating a violation of the IIA assumption.
Additionally, the figure highlights the difference between the multinomial
probit model with iid $N(0,\pi^{2}/6)$ errors and the LEVI model when there
are two and three alternatives. In the LEVI case, the difference in
probability ratios is identically zero, providing a visual contrast with the
non-zero values observed in the multinomial probit case. Therefore, caution is
advised when asserting that the LEVI and MNP models yield comparable results.

\begin{figure}[h]
\centering
\includegraphics[
height=2.739in,
width=3.694in
]{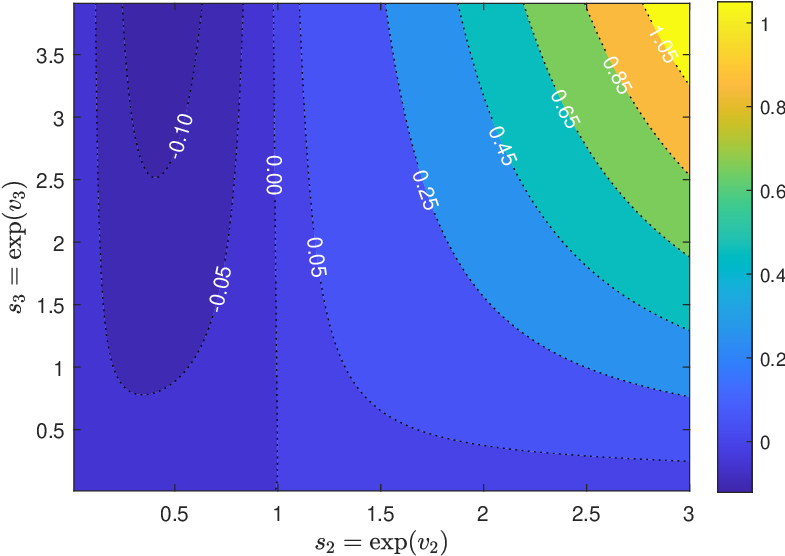}\caption{Contour plot of the difference of the
probability ratios $D_{\mathrm{PR}}^{\mathrm{NORM}}(s_{2},s_{3})$ against
$\left[  s_{2},s_{3}\right]  =\left[  \exp\left(  v_{2}\right)  ,\exp
(v_{3})\right]  $ when $s_{1}=\exp(v_{1})=1$}%
\label{Figure: norm_contour_plot_IIA}%
\end{figure}

\subsection{Supplementary Figures}

\subsubsection{The difference between SEVI and LEVI: compensating variation
and prediction probabilities}

This subsection contains two figures. Figure
\ref{Figure: cv_notequalpro_norm_ratio_vs_ki} plots the ratio of the expected
compensating variation under the SEVI to that under LEVI against the
alternative to be eliminated. We consider eliminating one and only one
alternative at a time, and the label on the horizontal axis indicates the
alternative to be removed. Figure \ref{Figure: difference in choice prob}
plots the differences in the out-of-sample choice probabilities based on
fitting SEVI and LEVI models when $J=4$ against the true SEVI\ choice
probability.
%TCIMACRO{\FRAME{fhFU}{4.8871in}{3.966in}{0pt}{\Qcb{Plot of the ratio of the
%expected compensating variations across the SEVI and LEVI models against the
%label of the excluded alternative with the same DGPs as those in Figure
%\ref{Figure: Simu_theoretical_pro} ($X_{ij,\ell}$ is iid $N(0,\pi^{2}%
%\omega_{j}^{2}/36))$}}{\Qlb{Figure: cv_notequalpro_norm_ratio_vs_ki}%
%}{cv_notequalpro_norm_ratio_vs_k_n10000.eps}%
%{\special{ language "Scientific Word";  type "GRAPHIC";
%maintain-aspect-ratio TRUE;  display "USEDEF";  valid_file "F";
%width 4.8871in;  height 3.966in;  depth 0pt;  original-width 6.749in;
%original-height 5.4665in;  cropleft "0";  croptop "1";  cropright "1";
%cropbottom "0";
%filename '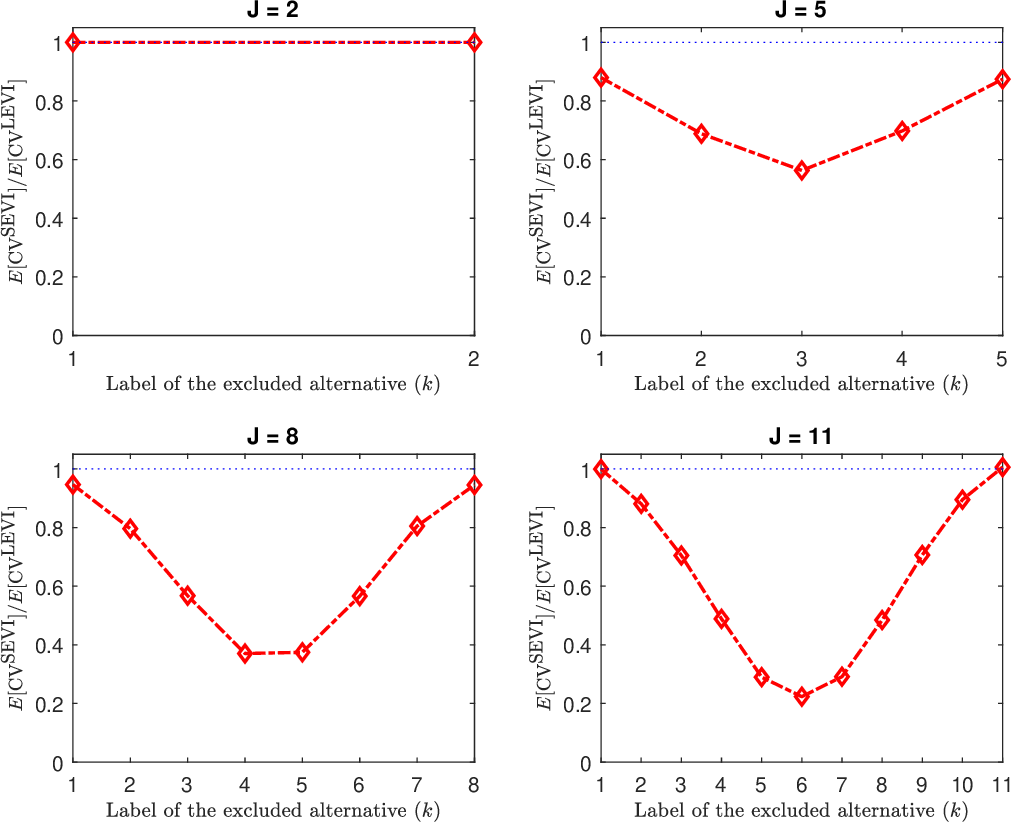';file-properties "XNPEU";}%
%}}%
%BeginExpansion
\begin{figure}[h]%
\centering
\includegraphics[
height=3.966in,
width=4.8871in
]%
{}%
\caption{Plot of the ratio of the expected compensating variations across the
SEVI and LEVI models against the label of the excluded alternative with the
same DGPs as those in Figure \ref{Figure: Simu_theoretical_pro} ($X_{ij,\ell}$
is iid $N(0,\pi^{2}\omega_{j}^{2}/36))$}%
\label{Figure: cv_notequalpro_norm_ratio_vs_ki}%
\end{figure}
%EndExpansion
%

%TCIMACRO{\FRAME{fhFU}{4.7072in}{3.8086in}{0pt}{\Qcb{Scatter plot of $\hat
%{P}_{i_{o},\QTR{rm}{MLE}}^{\QTR{rm}{SEVI}}(j)-\hat{P}_{i_{o}%
%,\QTR{rm}{QMLE}}^{\QTR{rm}{LEVI}}(j)$ against $P_{i_{o},0}%
%^{\QTR{rm}{SEVI}}\left(  j\right)  $ when $J=4$ and $X_{ij,\ell}\thicksim
%iidN(0,\pi^{2}\omega_{j}^{2}/36)$}}{\Qlb{Figure: difference in choice prob}%
%}{oos_notequal_prob_normal_j4_jout4diff.eps}%
%{\special{ language "Scientific Word";  type "GRAPHIC";
%maintain-aspect-ratio TRUE;  display "USEDEF";  valid_file "F";
%width 4.7072in;  height 3.8086in;  depth 0pt;  original-width 6.819in;
%original-height 5.5089in;  cropleft "0";  croptop "1";  cropright "1";
%cropbottom "0";
%filename '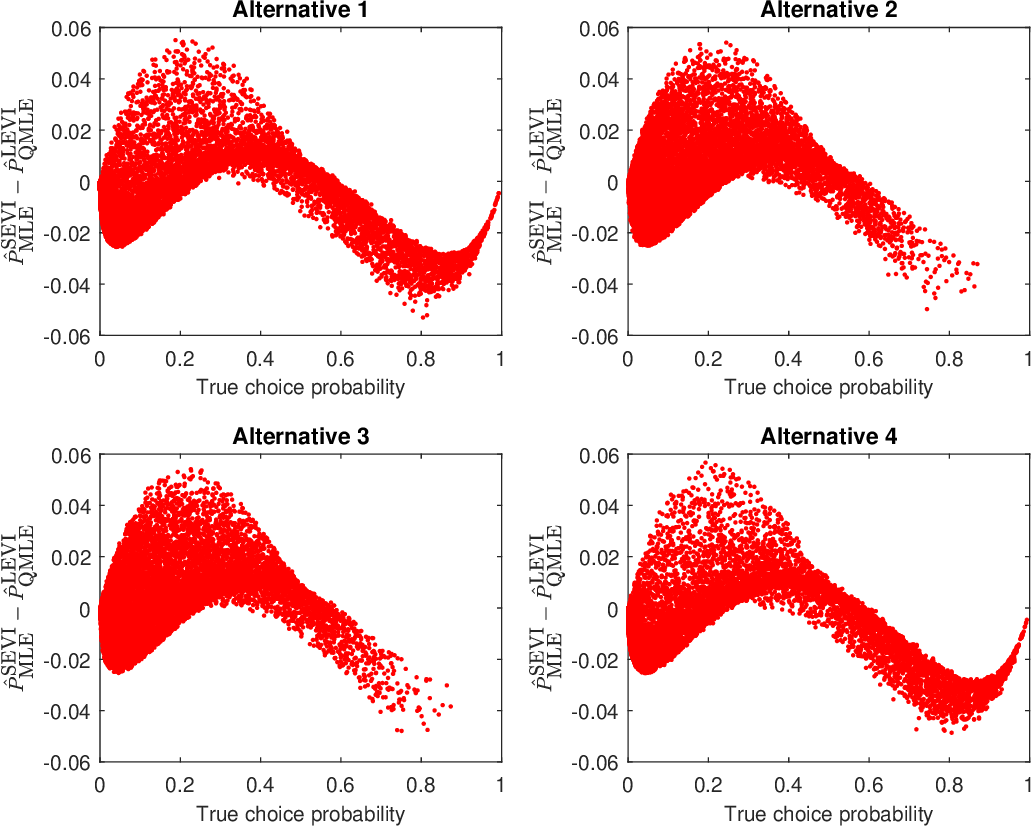';file-properties "XNPEU";}%
%}}%
%BeginExpansion
\begin{figure}[h]%
\centering
\includegraphics[
height=3.8086in,
width=4.7072in
]%
{OOS_notequal_prob_normal_J4_Jout4DIFF.eps}%
\caption{Scatter plot of $\hat{P}_{i_{o},\mathrm{MLE}}^{\mathrm{SEVI}}%
(j)-\hat{P}_{i_{o},\mathrm{QMLE}}^{\mathrm{LEVI}}(j)$ against $P_{i_{o}%
,0}^{\mathrm{SEVI}}\left(  j\right)  $ when $J=4$ and $X_{ij,\ell}\thicksim
iidN(0,\pi^{2}\omega_{j}^{2}/36)$}%
\label{Figure: difference in choice prob}%
\end{figure}
%EndExpansion

\subsubsection{Boxplot of the MLEs in a mixed LEVI-SEVI Model}

\begin{figure}[h]
\centering
\includegraphics[
trim=0.461773in 0.286556in 0.697062in 0.383074in,
height=3.7051in,
width=4.701in
]{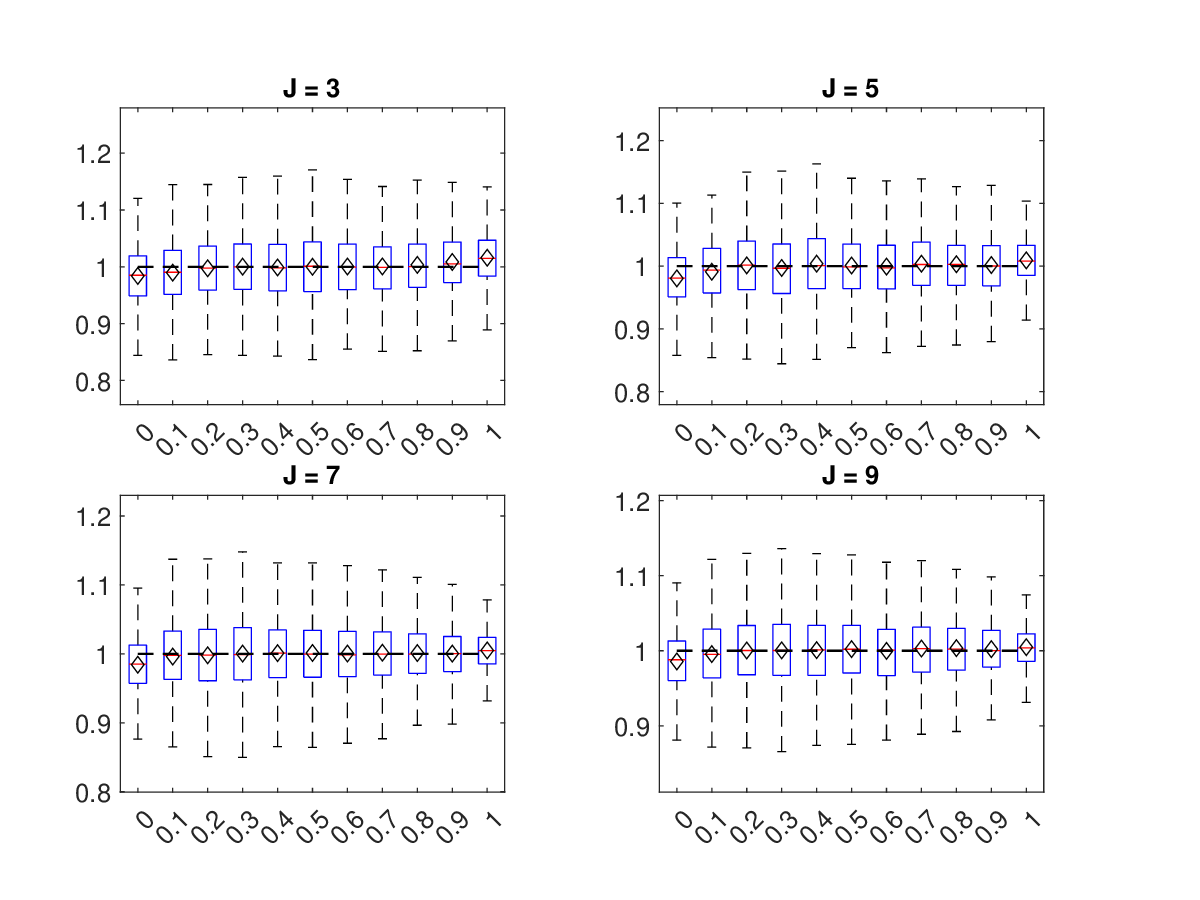}\caption{Boxplot of $\hat{\beta
}_{1,0}$ when the true value $\beta_{1,0}$ is $1$ against $\rho_{0}%
=0,0.1,...,1$, with sample size $5000$ and $J=3,5,7,9$}%
\label{Fig: boxplot_bhat1_vs_rho0_no_outlier_n5000}%
\end{figure}
%EndExpansion

\subsubsection{Supplementary figures: empirical applications}

\begin{figure}[h]
\centering
\includegraphics[
height=3.7026in,
width=4.5172in
]{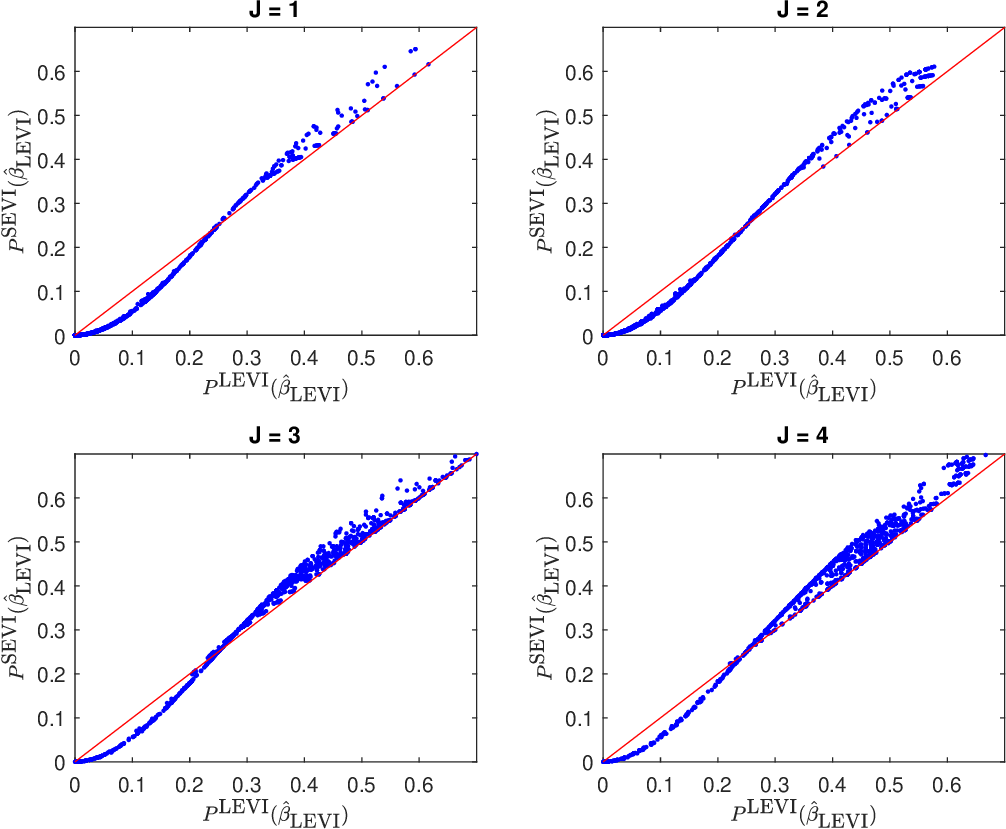}\caption{Scatter plot of $P^{\mathrm{SEVI}%
}(\hat{\beta}_{\mathrm{LEVI}})$ against $P^{\mathrm{LEVI}}(\hat{\beta
}_{\mathrm{LEVI}})$ based on the fishing mode choice data}%
\label{Figure: Fishing_levi_sevi_w_levi_para}%
\end{figure}
%EndExpansion

\begin{figure}[h]
\centering
\includegraphics[
height=3.1376in,
width=4.1742in
]{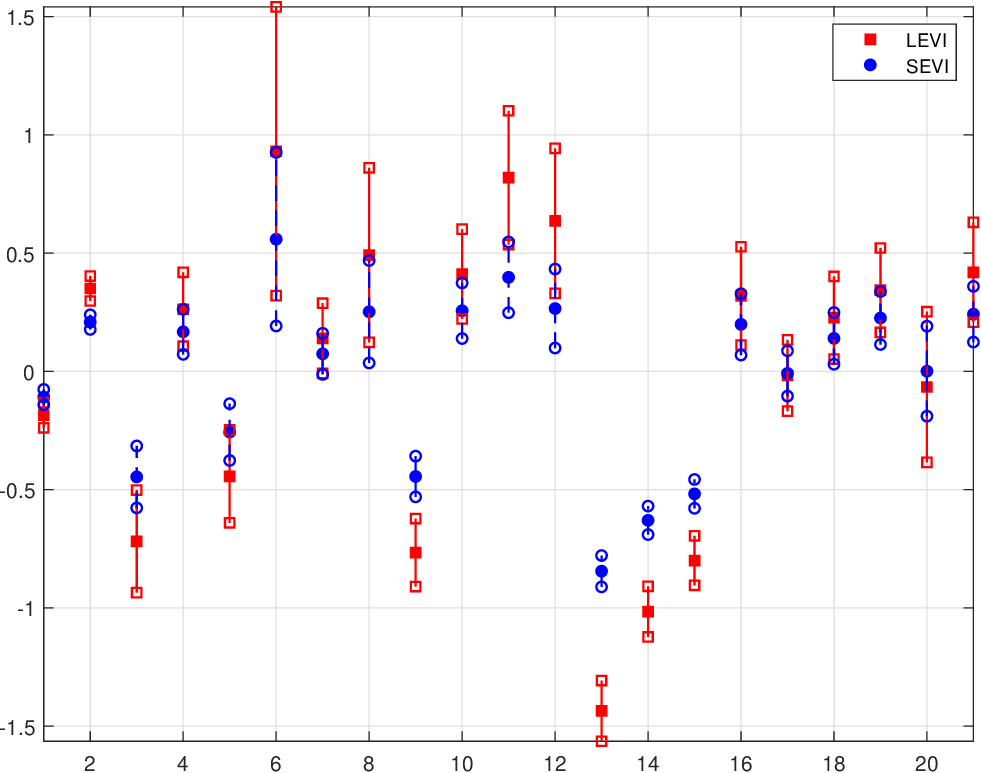}\caption{Point estimates and 95\% confidence
intervals for the 21 parameters in the LEVI and SEVI models for the vehicle
choice application}%
\label{figure: car_b_levi_sevi_w_ci}%
\end{figure}
%EndExpansion

\end{document}